\newtheorem{theorem}{Theorem}
\newtheorem*{theorem*}{Theorem}
\newtheorem{proposition}{Proposition}
\newtheorem{lemma}{Lemma}[section]
\newtheorem{definition}[lemma]{Definition}
\newtheorem{corollary}[lemma]{Corollary}
\newtheorem{notation}[lemma]{Notation}
\declaretheoremstyle[
	headfont		= \fontfamily{ptm}\selectfont\bfseries,			%
	headformat		= \NAME~\NUMBER\NOTE\vspace{-3mm},	%
	headpunct		= \newline\newline,				%
	notefont		= \fontfamily{ptm}\selectfont,						%
	notebraces		= {--- }{},						%
	bodyfont		= \fontfamily{ptm}\selectfont\small,			%
	postheadspace	= 10pt,							%
 preheadspace = 2pt
	]{algorithm}
\theoremstyle{algorithm}
\newtheorem{algorithm}{Algorithm}
\DeclareMathOperator*{\argmax}{arg\,max}
\DeclareMathOperator*{\argsup}{arg\,sup}
\newcommand{\nice}{simple\xspace}
\newcommand{\Nice}{Simple\xspace}
\newcommand{\textib}[1]{\textit{\textbf{#1}}}
\newcommand{\FBV}{\textup{OPT}(\boldsymbol{V})}%
\newcommand{\SBV}{\textup{OPT}_\textup{BIC}(\boldsymbol{V})}%
\newcommand{\OPTniceV}{\textup{OPT}_\textup{\nice}(\boldsymbol{V})}
\newcommand{\OLD}[1]{}
\newcommand{\multiline}[1]{%
  \begin{tabularx}{\dimexpr\linewidth-\ALG@thistlm}[t]{@{}X@{}}
    #1
  \end{tabularx}
}
\begin{document}

\title{Learning to Maximize Gains From Trade in Small Markets\footnote{First version: January 2024. Supported by the European Research Council (ERC) under the European Union’s Horizon 2020 Research and Innovation Programme (grant agreement no. 740282) and by a grant from the Israeli Science Foundation (ISF number 505/23). Moshe Babaioff's research is supported in part by a Golda Meir Fellowship.}}

\author{Moshe Babaioff\footnote{Hebrew University of Jerusalem. E-mail: \texttt{moshe.babaioff@mail.huji.ac.il}} \and Amitai Frey\footnote{Hebrew University of Jerusalem. E-mail: \texttt{amitai.frey@mail.huji.ac.il}}  \and Noam Nisan\footnote{Hebrew University of Jerusalem. E-mail: \texttt{noam@cs.huji.ac.il}} }
\date{}

\maketitle

\thispagestyle{empty}
\begin{abstract}

    We study the problem of designing a two-sided market (double auction) to maximize the gains from trade (social welfare) under the constraints
    of (dominant-strategy) incentive compatibility and budget-balance.  
    Our goal is to do so for an
    unknown distribution from which we are given a polynomial number of samples.  Our first result is a
    general impossibility for the case of correlated distributions of values even between
    just one seller and two buyers, in contrast to the case of one seller and one buyer 
    (bilateral trade) where this is possible.  Our second result is an efficient
    learning algorithm for one seller and two buyers in the case of independent distributions which is based on a novel algorithm for computing 
    optimal mechanisms for finitely supported and explicitly given independent distributions.  Both results rely heavily on characterizations of 
    (dominant-strategy) incentive compatible mechanisms that are strongly budget-balanced.
\end{abstract}

\pagenumbering{arabic} 

\section{Introduction}

In principle, designing a market for identical goods should be an easy task: the market should 
aggregate  the demand from all buyers as well as the supply from all the sellers, 
compute the market
clearing price, and have the sellers whose value lies below the clearing price 
sell to the buyers whose
value lie above the clearing price.  
This maximizes gains from trade 
and achieves optimal social welfare.\footnote{The ``gains from trade'' are defined as the increase in total welfare due to trade.  Optimizing GFT is equivalent to optimizing
social welfare.  Approximating GFT in a multiplicative sense may be harder, but in
this paper we focus on additive approximation for which they are equivalent.} Yet, this solution does not take incentives into account:
there is an incentive for sellers to overbid and for buyers to underbid. While in ``large'' markets 
one may expect these effects to be negligible, ``small'' markets 
must be carefully designed, taking the agents' strategic considerations into account.

The celebrated VCG mechanism can be used to handle the incentives issue by inducing truthful reporting as dominant strategies.
Unfortunately, in trade settings the VCG mechanism is not budget balanced but rather loses money, 
and thus is not feasible without an external source of subsidies. 
As the VCG mechanism is essentially the unique mechanism that maximizes gains from trade in dominant strategies, this implies the general impossibility of maximizing gains from trade in a budget balanced way in dominant strategies.
In fact, \cite{myerson1983efficient} 
show that maximizing gains from trade without subsidies is impossible even in the
much more relaxed sense of Bayesian incentive compatibility.\footnote{And
even if we only require interim
individual rationality and weak budget balance.} Furthermore, 
this holds even for the setting of a bilateral trade in which a single seller has one item to sell to a single buyer.

Given that it is impossible to fully maximize gains from trade in strategic trade settings without subsidies, we aim to computationally design strategic budget-balanced  mechanisms that have the {\em highest possible} gains from trade.  
We focus on the simplest class of mechanisms:

\begin{definition}
   A market mechanism  
   is called {\em \nice} if it is: 
(1) Deterministic 
(2) Dominant Strategy Incentive Compatible (DSIC)
(3) Normalized (i.e. a participant that does not trade neither pays nor gets any money.)
(4) Ex-post Strongly budget balanced (SBB).  I.e., in every instance the payment by a trading buyer is equal to the amount received by the seller
(5) Ex-post individually rational  (i.e., in no instance does any truthful participant get negative utility.) 
\end{definition}

Our main focus is on a setting where we are given samples from an unknown distribution of values and our goal is to design a mechanism that has good performance on
the underlying distribution.  
This setting is naturally viewed as a learning procedure that finds a \nice mechanism with close-to-optimal gains from trade, when the learner only has access to ``historical data'' of samples from the underlying distribution of values.  Our work thus fits into the recent line of research of
``mechanism design from samples'', see, e.g., \cite{GW21, BSV16, CR14, DHP16, MR16, GHZ19}.

The learning problem can be stated as follows. Given a joint distribution $\textbf{V}$ of the values of the sellers and buyers, where our
access to the distribution is by getting random samples from it, we now aim to find a \nice mechanism with the maximum expected gains from
trade over the distribution $\textbf{V}$.  
Specifically, we want to learn such a mechanism that, with high probability,  approximates the optimal gains from trade that is obtainable by a \nice mechanism for the distribution $\textbf{V}$, 
to within an additive $\varepsilon$. 
Can this be done from samples? and if so, how many sample points from $\textbf{V}$ do we need?

\subsection*{Starting point: bilateral trade}
Let us start with the simplest trade setting, that of bilateral trade. 
For that setting the learning problem is well understood. 
In a bilateral trade setting there is a single buyer and single seller with an item to sell, 
and their private values for the item are $v_s$ and $v_b$, respectively. 
These pairs of values $(v_s,v_b)$ are jointly distributed according to a distribution $\textbf{V}$. %
The goal is to find a \nice mechanism with maximum GFT over all \nice mechanisms. 
The starting point for addressing this type of question is a characterization of the family
of possible \nice  mechanisms, and then optimize over that family.  
Luckily, \nice mechanisms for bilateral trade have a very limited form: they are fixed price mechanisms. That is, any  \nice mechanism is defined by a fixed price $p$, and trade occurs, at price $p$, if $v_s < p < v_b$, and no trade happens if $v_s> p$ or $p> v_b$ (in case of ties, trade may or may not happen). 
It follows that the challenge in this case is to learn a price $p$ such that the fixed-price mechanism with 
price $p$ maximizes the gains from trade for $\textbf{V}$ (over all such fixed-price mechanisms).
Since this requires learning just a single parameter, $p$, this is indeed
doable and, as implied by \cite{Cesa-BianchiCCF21}\footnote{This paper actually solves a harder problem, in a regret-minimization
setting, as well as analyzes several related models. See Section \ref{sec:related} for details.}, 
choosing $O(\varepsilon^{-1})$ sample points and 
picking the {\em sampled value}\footnote{It is interesting to note, though, that it does not suffice to choose among some
predetermined $\delta$-grid of values.}
that yields highest GFT {\em on the sample} gives, with high probability, a mechanism that is optimal up to an additive 
$\varepsilon$ on the real, yet unknown,  distribution $\textbf{V}$.
Significantly, this holds for any joint distribution  $\textbf{V}$, %
and does not require the buyer's and seller's values
to be independent.

We thus see that the problem is solved and is relatively simple for bilateral trade settings. Is this also the case for more involved trade settings?

\subsection*{Beyond  bilateral trade: $2$ buyers and $1$ seller}

We saw that for bilateral trade it is possible to learn a good \nice mechanism from samples. Can this be done when we have more than a single buyer and a single seller?
To address this we consider the next most-simple trade setting, that of one seller of a single item, and \emph{two} buyers (rather than just one).
In the 1-seller 2-buyer setting, the values of the seller for the item $v_s$, the values of the two buyers are $v_1$ and $v_2$, respectively, and the  triplet $(v_s,v_1,v_2)$ is sampled from a joint distribution $\textbf{V}$.

Following the blueprint to the %
learning problem taken for bilateral trade, we first aim to understand the space of \nice mechanisms and the trade allocations that they
provide. It turns out that the class of \nice mechanisms for the 1-seller 2-buyer case can be essentially characterized by a pair of single-parameter functions. 

\begin{definition}\label{def:compat}
    A pair $(f_1,f_2)$ of functions $f_1, f_2 : \mathbb{R}_+ \rightarrow \mathbb{R}_+ \cup \{\infty\}$ is called {\em compatible},
    if for no pair of values  $(v_1,v_2)$ it holds that $v_1 > f_1(v_2)$ and $v_2 > f_2(v_1)$.  A mechanism $M$ is 
    \emph{associated with a pair of compatible functions $(f_1, f_2)$},
    if when $v_s < f_1(v_2) < v_1$ it sells
    the item to player 1 at price $f_1(v_2)$; when  $v_s < f_2(v_1) < v_2$ it sells the item to player 2 at price $f_2(v_1)$; and there is no trade  if 
    $v_s > \max\{f_1(v_2), f_2(v_1)\}$, or if $v_1 < f_1(v_2)$ and $v_2 < f_2(v_1)$.
\end{definition} 

We show that every \nice mechanism has a unique   compatible pair of functions that is associated to it, and, conversely, every pair of compatible functions is associated with a \nice mechanism (Proposition \ref{prop:mech-pair}).  A compatible pair of functions does not completely define a mechanism as  it leaves some flexibility about the outcome in case of ties. We identify a specific tie breaking rule that always yields the
highest gains from trade for a given compatible pair of functions (see Definition \ref{def:m}) and denote the
associated mechanism under that tie breaking rule by $M_{f_1,f_2}$.  We illustrate the concept of a pair of compatible functions in Figure \ref{fig:randomexample}. The pair of compatible functions that is associated with the GFT-maximizing \nice mechanism for the uniform distribution over $[0,1]^3$ (denoted by $U[0,1]^3$) is illustrated in Figure \ref{fig:uniform}.

\begin{figure}[htp]
\centering
\begin{minipage}{.45\textwidth}
  \centering
  \includegraphics[width=\linewidth]{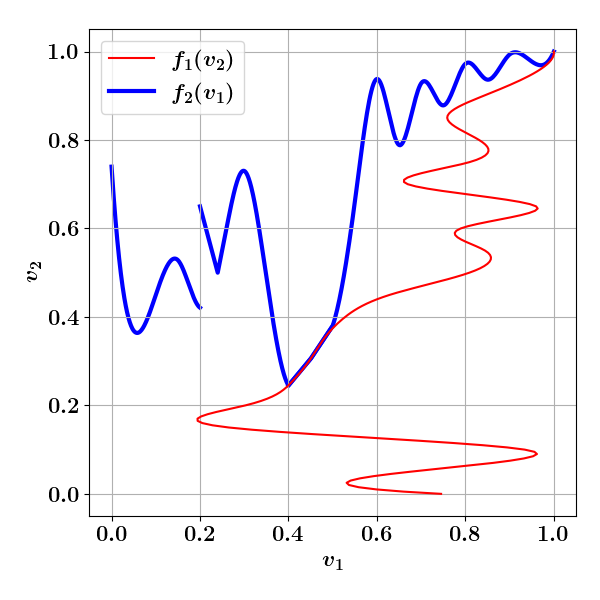} 
  \caption{Example of a pair $(f_1, f_2)$ of compatible functions. Such a pair satisfies that there is no point that is strictly above $f_2(v_1)$ and strictly on the right of $f_1(v_2)$.}
  \label{fig:randomexample}
\end{minipage}%
\hfill
\begin{minipage}{.45\textwidth}
  \centering
  \includegraphics[width=\linewidth]{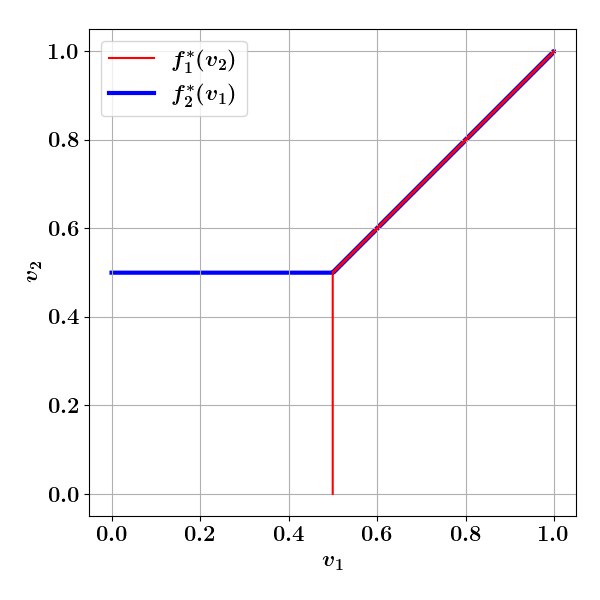} 
  \caption{The pair of compatible functions that is associated with the GFT-maximizing \nice mechanism for the uniform distribution over $[0,1]^3$.}
  \label{fig:uniform}
\end{minipage}
\end{figure}

\subsection*{Arbitrary joint distributions: a negative result}

Given the characterization above, it is natural to
attempt learning a (nearly) optimal \nice mechanism by optimizing over all mechanisms 
$M_{f_1,f_2}$, over all  pairs of compatible functions $(f_1, f_2)$.
Unfortunately, this class turns out to have infinite ``dimension'', and
in fact, we show
that no low-dimension subclass suffices for even approximating
the gains from trade.  The impossibility result is very general and shows that
it is even impossible to learn a ``general'' mechanism in the sense of
\cite{myerson1983efficient} that need only be 
Bayesian incentive compatible,  interim individually rational, 
and weakly budget balanced.

\begin{restatable}{theorem}{nocortri}\label{thm:no-cor-tri}
Consider the $1$-seller $2$-buyer setting. 
   There exists a constant $c>0$ such that for every finite $t$, 
    there does not exist an algorithm that accepts $t$ random samples from any unknown distribution $\textbf{V}$ on $[0,1]^3$ and, 
    with high probability, outputs a \nice mechanism (or even any general mechanism) 
    with gains from trade that is within an additive $c$ from the 
    maximum gains from trade obtainable by a \nice mechanism.

\end{restatable}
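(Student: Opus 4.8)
The plan is to prove the impossibility via a standard "hard family" argument: exhibit an infinite family of distributions $\{\textbf{V}_k\}$ that no finite sample of size $t$ can distinguish with sufficient confidence, yet on which the optimal \nice mechanisms are "far apart" in the sense that no single mechanism (not even a general BIC/IIR/WBB one) can be simultaneously near-optimal for more than one of them. Since the characterization (Proposition~\ref{prop:mech-pair}) tells us \nice mechanisms are parametrized by compatible pairs $(f_1,f_2)$, the natural hard instances are distributions that "probe" the value of $f_1$ (say) at many different arguments $v_2$; the infinite-dimensionality remark in the text is exactly the hint that such a family exists.

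**Concretely**, I would take the seller's value to be deterministically $v_s = 0$ (so budget-balance and the seller drop out, and GFT is just the value of whichever buyer gets the item). Then fix a large integer $N$ and consider distributions supported near a diagonal staircase: for a target "threshold profile" $\theta : \{1,\dots,N\} \to \{ \text{low}, \text{high}\}$, let $\textbf{V}_\theta$ put mass on pairs $(v_1, v_2)$ of the form roughly $(a_i, b_i)$ and $(a_i', b_i)$ for grid points $b_1 < \dots < b_N$, where whether buyer~1 or buyer~2 "should" win at level $b_i$ is encoded by $\theta(i)$, and the two candidate configurations at level $i$ differ in GFT by a constant $\delta$. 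The key structural point — using compatibility — is that a single mechanism's behavior at the different levels is genuinely $N$ independent binary choices' worth of freedom, so getting within $c$ of optimal on $\textbf{V}_\theta$ forces the mechanism to "know" a constant fraction of the bits of $\theta$. I would make the masses at the $N$ levels roughly uniform ($\approx 1/N$ each) and scale the per-level GFT gap so that total regret from getting a constant fraction of levels wrong is $\Omega(1)$, independent of $N$.

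**The information-theoretic core** is then a Fano / two-point style argument: choosing $\theta$ uniformly at random, a sample of size $t$ only "sees" $O(t)$ of the $N$ levels (each sample lands in one level, and levels are nearly equiprobable), so for $N \gg t$ the algorithm learns almost nothing about the bits $\theta(i)$ at the unseen levels; hence with probability bounded away from $0$ it errs on a constant fraction of levels and incurs regret $\ge c$ for an absolute constant $c>0$. Because this lower bound holds against \emph{any} (possibly randomized) algorithm and the benchmark is the best \nice mechanism, and because the hard instances are constructed so that even the larger class of general (BIC, IIR, WBB) mechanisms cannot beat the \nice benchmark on them by more than $o(1)$, the statement follows in the stronger "or even any general mechanism" form. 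To get that last clause I would additionally invoke (or reprove in this restricted $v_s=0$ setting) the relevant characterization showing the \nice optimum is within $o(1)$ of, or equal to, the general BIC/IIR/WBB optimum on these particular staircase distributions — e.g. by noting the instances are essentially deterministic/degenerate enough that Myerson--Satterthwaite-type slack is negligible.

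**The main obstacle** I anticipate is the structural (not the information-theoretic) step: verifying via the compatible-pair characterization that a single mechanism really \emph{cannot} be near-optimal at two levels with conflicting $\theta$-bits. Compatibility is a global constraint on $(f_1,f_2)$, so I must design the staircase so that the "correct" $(f_1,f_2)$ for $\theta$ and for $\theta'$ are incompatible to mix whenever $\theta,\theta'$ differ on a level — i.e., choosing the wrong winner at level $i$ is not just suboptimal there but is not "for free" even after optimally adjusting all other levels. Getting the geometry right so that the per-level losses are genuinely additive (and a constant fraction of $N$ levels wrong yields $\Omega(1)$ total loss, uniformly in $N$) is the delicate part; the sampling/Fano argument on top of it is then routine.
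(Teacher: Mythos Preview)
Your proposal has a fatal gap in the choice $v_s = 0$. With the seller's value fixed at zero, the setting collapses to a pure two-buyer allocation problem where GFT equals the value of whichever buyer receives the item. But then the second-price auction --- take $f_1(v_2) = v_2$ and $f_2(v_1) = v_1$, which is a compatible pair --- is a \nice mechanism that achieves the first-best $\max(v_1,v_2)$ on \emph{every} distribution. Hence for every $\textbf{V}_\theta$ in your family the \nice optimum already equals first-best, and one fixed \nice mechanism is simultaneously optimal for all $\theta$. There is no room for your per-level binary choices to create separation: the ``structural step'' you flag as the main obstacle is not merely delicate but impossible under $v_s=0$. (More generally, any construction in which the seller's value is deterministic will run into trouble, since then the losing buyer's bid can be used freely to set the price without the SBB constraint biting.)

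The paper's argument is quite different and crucially keeps $v_s$ random. It contrasts the uniform distribution on $[0,1]^3$ --- where a Myerson--Satterthwaite-style gap separates first-best from even the general (BIC/IIR/WBB) second-best --- with the uniform distribution on a random \emph{generic} finite set $S \subset [0,1]^3$ of size $T \gg t$, on which (Lemma~\ref{lem:fullgft}) a \nice mechanism can overfit perfectly and extract first-best. These two sources are statistically indistinguishable from $t$ samples (Lemma~\ref{lem:cannot-tell}), yet on one the \nice optimum equals first-best while on the other every mechanism is bounded away from it; a hypothetical learner would therefore separate them (Lemma~\ref{lem:sepnice}), a contradiction. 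The seller's randomness is what makes this work: the three-way constraint (the trading buyer's price may depend neither on her own bid nor on the seller's, only on the non-trading buyer's bid) is exactly what allows perfect overfitting on generic finite supports but forces a gap on the continuous uniform.
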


This is in strong contrast to the positive result for bilateral trade mentioned
above, where $O(1/\varepsilon)$ samples are sufficient for learning an approximately-optimal \nice mechanism. This negative result implies that a slightly more general setting (with two buyers rather than one) is significantly more challenging for learning, and the results for this setting are fundamentally different than the results for bilateral trade.   

This impossibility result is very general.  First, it
immediately extends to %
any double-auctions setting beyond bilateral trade: In any setting with more than two buyers the result follows from making all but one seller and two buyers irrelevant for trade\footnote{That is, buyers have the minimal value of $0$, and sellers have the maximal value of $1$.}. If there is only one buyer but multiple sellers, a symmetric argument proves a similar result.  We also note that since the impossibility is shown for an additive error 
when values are bounded by $1$, it also holds for multiplicative approximation.

\OLD{\mbc{OLD:} We are able to prove the impossibility result not only for  1-seller 2-buyer markets, but for every market beyond bilateral trade. %

\mbc{a new version of the negative result for \nice mechanisms:}

}

Our proof shows that it is hard to distinguish between the following two 
extreme cases: the first where a \nice mechanism obtains the first-best (optimal)
gains from trade\footnote{The first-best GFT is the expected optimum GFT (expected realized GFT under the given distribution).}, and the second where even a general mechanism does not approximate the first best.  
Let us briefly outline the core observation behind  this proof. 
Let  $S= (s_1, s_2, \ldots, s_k)= \{ (v_s^j,v_1^j,v_2^j)\}_{j=1}^k$ be a set of triplets of values of the seller and the two buyers.   We say that $S$ is \emph{generic} 
if no value appears more than once in it, and let 
$U(S)$ be the uniform distribution over $S$.

\begin{lemma}\label{lem:fb-uniform-sample}
Let $S= (s_1, s_2, \ldots, s_k)= \{ (v_s^j,v_1^j,v_2^j)\}_{j=1}^k$ be a 
generic (multi-)set of size $k$ over $[0,1]^3$.
Consider a 1-seller 2-buyer setting with agents' values  samples from $U(S)$, the uniform distribution over $S$. 
Then there exists a \nice mechanism with GFT that equals the first-best GFT on $U(S)$. 
\end{lemma}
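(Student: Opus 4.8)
The plan is to exhibit, for the given sample $S$, a single \nice mechanism whose \emph{realized} gains from trade on every triplet of $S$ equal the first-best realized GFT there; since $U(S)$ is supported on $S$, this gives the claim in expectation. By Proposition \ref{prop:mech-pair} it suffices to produce one compatible pair $(f_1,f_2)$ and then check that $M_{f_1,f_2}$ makes the right decision on each point of $S$ (the price is irrelevant for GFT). First I record the first-best allocation: since $S$ is generic, all $3k$ coordinates are distinct, so for each $j$ exactly one of $v_s^j,v_1^j,v_2^j$ is the largest; call $j$ a $1$-point if it is $v_1^j$, a $2$-point if it is $v_2^j$, and a $0$-point if it is $v_s^j$. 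First-best sells to buyer $1$ at a $1$-point (GFT $v_1^j-v_s^j$), sells to buyer $2$ at a $2$-point (GFT $v_2^j-v_s^j$), and does not trade at a $0$-point (GFT $0$).

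Next I give the construction. Let $\delta>0$ be strictly smaller than the minimum distance between any two distinct coordinate values appearing in $S$ (positive since $S$ is finite). Define $f_1(v_2^j):=v_1^j-\delta$ for every $1$-point $j$, and $f_1\equiv\infty$ on every other argument; symmetrically define $f_2(v_1^j):=v_2^j-\delta$ for every $2$-point $j$, and $f_2\equiv\infty$ elsewhere. This is well defined because genericity makes the arguments $\{v_2^j: j\text{ a }1\text{-point}\}$ pairwise distinct (and likewise for $f_2$). Since $\delta$ is less than the distance between any two sample coordinates, in particular $\delta<v_1^j-v_s^j$ at a $1$-point, so $v_s^j<f_1(v_2^j)<v_1^j$; likewise $v_s^j<f_2(v_1^j)<v_2^j$ at a $2$-point; and at a $0$-point both $f_1(v_2^j)$ and $f_2(v_1^j)$ equal $\infty$.

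The heart of the argument is checking that $(f_1,f_2)$ is compatible. Because both functions are infinite except on sample coordinates, a violation of Definition \ref{def:compat} would require $v_2=v_2^j$ for some $1$-point $j$ and $v_1=v_1^i$ for some $2$-point $i$ (necessarily $i\neq j$), with $v_1^i>f_1(v_2^j)=v_1^j-\delta$ and $v_2^j>f_2(v_1^i)=v_2^i-\delta$. By the choice of $\delta$ (any two distinct sample coordinates differ by more than $\delta$), these force $v_1^i>v_1^j$ and $v_2^j>v_2^i$. But $j$ being a $1$-point gives $v_1^j>v_2^j$ and $i$ being a $2$-point gives $v_2^i>v_1^i$, so chaining yields $v_2^i>v_1^i>v_1^j>v_2^j>v_2^i$, a contradiction. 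Hence $(f_1,f_2)$ is compatible and $M_{f_1,f_2}$ is a \nice mechanism.

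Finally I verify correctness point-by-point. At a $1$-point $j$ we have $v_s^j<f_1(v_2^j)<v_1^j$ and $f_2(v_1^j)=\infty$, so by Definition \ref{def:compat} $M_{f_1,f_2}$ sells to buyer $1$, realizing GFT $v_1^j-v_s^j$; at a $2$-point the symmetric computation gives $v_2^j-v_s^j$; at a $0$-point $v_1^j<f_1(v_2^j)=\infty$ and $v_2^j<f_2(v_1^j)=\infty$, so the no-trade clause applies and the realized GFT is $0$. In each case this equals the first-best realized GFT, so the expected GFT of $M_{f_1,f_2}$ under $U(S)$ equals the first-best GFT. The main obstacle is precisely the compatibility step: the construction is deliberately local — genericity decouples the threshold placed at each point, so compatibility is the only global constraint — and its content is the cyclic-inequality contradiction above, rephrased from its strict-inequality form into the $\delta$-slackened form that matches thresholds placed just below the winning buyer's value.
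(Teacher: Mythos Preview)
Your proof is correct and follows essentially the same approach as the paper (Definition~\ref{def:overfitm} and Lemma~\ref{lem:fullgft}): construct an explicit compatible pair $(f_1,f_2)$ that encodes the first-best allocation on each sample point by using the non-trading buyer's value as a lookup key, then verify the allocation pointwise. The only difference is cosmetic --- the paper sets the price to exactly the higher buyer's value $v_b^j$ and relies on the tie-breaking built into $M_{f_1,f_2}$, whereas you set it to $v_b^j-\delta$ to avoid ties altogether; your compatibility argument for the cross-index case $i\neq j$ (the cyclic-inequality contradiction) is in fact more explicit than the paper's sketch.
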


The \nice mechanism that extracts the first-best GFT can be viewed as a perfect over-fitting of the mechanism to the $k$ triplets in $S$. 
number of samples $t$ it is impossible to distinguish between the following two cases: 
We show that this simple lemma implies the theorem, by proving that for any finite
number of samples $t$, it is impossible to distinguish between the following two cases:
1) the $t$ samples are coming from the uniform distribution over $[0,1]^3$, for which 
there is a gap between the first best and what a strategic mechanism can achieve, and
2) the $t$ samples are from $U(S)$ for a ``random'' generic set $S$ of size $k>>t$, for which there is no such gap.  
This impossibility is due to the fact that for large enough $k$, it is unlikely that we sample any element of $S$ more than once.\footnote{Note that for the bilateral-trade setting an appropriate adjustment of the claim of Lemma \ref{lem:fb-uniform-sample} is not true (consistently with the fact that Theorem \ref{thm:no-cor-tri} does not hold for bilateral trade). 
That is, consider the set $S$ with two samples $(s,b)$ taking the realizations $(0,2)$ and $(5,7)$ (note that this set of values is generic).  As any \nice mechanism for bilateral trade is a fixed-price mechanism, the maximum expected GFT of any \nice mechanism on a uniform distribution over these two samples is $1$, while the first-best GFT is $2$.}

\subsection*{The independent case: a learning algorithm}

Given our strong negative result for arbitrary joint distributions, we now focus our attention on the special case where the valuations of the seller and two buyers are sampled independently, that is, the private values $(v_s,v_1,v_2)$ are sampled from the product distribution $V_s \times V_1 \times V_2$.  
First, we tighten the partial characterization of optimal \nice mechanisms. While, as we saw, the class of \nice mechanisms is too rich to learn, we are able to identify a smaller subclass of \nice mechanisms that for every three independent distributions is guaranteed to include a \nice mechanism that has at least as much GFT as any other \nice mechanism. %
While \nice mechanisms can be associated with  arbitrary compatible pairs of functions, we show that for product distributions it is enough to restrict our search to pair of \emph{monotone} functions that are \emph{tight}. 

\begin{definition}\label{def:tight}
    A pair $(f_1,f_2)$ of functions $f_1,f_2: \mathbb{R}_{\geq 0} \rightarrow \mathbb{R}_{\geq 0}$ is called \em{tight after} $(p_1,p_2)$ if there
    do {\em not} exist any values $v_1 \geq p_1, v_2 \geq p_2$ such that 
    $v_1 < f_1(v_2)$ and $v_2 < f_2(v_1)$.     The functions are called \textup{tight} if there exists a point $(p_1,p_2)$ after which they are tight.
\end{definition}

Figure \ref{fig:uniform} illustrates the pair of compatible functions that is associated with the GFT-maximizing \nice mechanism for the distribution $U[0,1]^3$.
Observe that this pair of functions is tight, with $p_1=p_2=\frac{1}{2}$, and, additionally, the functions are monotone.   
Our main characterization result is 
that {\em tight} compatible pairs of {\em monotone} functions always suffice for 
maximizing gains from trade in {\em the independent case}.

\begin{restatable}{theorem}{bestmech}\label{thm:bestmech}
    For any three distributions $V_s,V_1,V_2$ and any \nice mechanism $M$, there exists a compatible and tight pair $(f_1^*, f_2^*)$ of monotone functions such that the GFT of the 
     \nice mechanism $M_{f_1^*,f_2^*}$ (as per Proposition \ref{prop:mech-pair}) on $\textbf{V}= V_s\times V_1\times V_2$
     is at least as large as the GFT of $M$ on $\textbf{V}$. 
\end{restatable}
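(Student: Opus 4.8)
The plan is to carry $M$ along via its associated compatible pair $(f_1,f_2)$ given by Proposition~\ref{prop:mech-pair} and to modify this pair in two stages — first monotonizing it, then upgrading it to a tight pair — while never decreasing the gains from trade. Since $M_{f_1,f_2}$ uses the GFT-maximal tie-breaking rule among all mechanisms with pair $(f_1,f_2)$, we have $\mathrm{GFT}(M_{f_1,f_2})\ge\mathrm{GFT}(M)$, so it is enough to produce a compatible, tight pair $(f_1^*,f_2^*)$ of monotone functions with $\mathrm{GFT}(M_{f_1^*,f_2^*})\ge\mathrm{GFT}(M_{f_1,f_2})$. I would work in the bounded-support setting (as in $[0,1]^3$; unbounded support by truncation and a limit), write $\bar v_i=\sup\operatorname{supp}(V_i)<\infty$, and, for the exposition, assume $V_s,V_1,V_2$ are atomless (atoms handled by a standard approximation). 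Two normalizations are free of charge: redefining $f_2(v_1):=\infty$ for $v_1>\bar v_1$ changes no outcome on the support and can only help compatibility; and, integrating out $v_s$ first and using that compatibility makes the two ``sell'' events disjoint, one gets
\[
\mathrm{GFT}(M_{f_1,f_2})=\int G_1(f_1(v_2))\,dV_2(v_2)+\int G_2(f_2(v_1))\,dV_1(v_1),
\]
where $G_i(t)=\mathbb{E}_{v_i}\big[\mathbbm{1}[v_i>t]\cdot\mathbb{E}_{v_s}[(v_i-v_s)\mathbbm{1}[v_s<t]]\big]$ depends only on the single threshold $t$, vanishes for $t\ge\bar v_i$, and — once the GFT-maximal tie-break is in force, so boundary cases trade — is upper semicontinuous.

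For the monotonization I would run an alternating best-response argument, and this is where independence is essential. Fix $f_2$ and optimize over $f_1$: compatibility with $f_2$ is precisely the pointwise constraint $f_1(v_2)\ge\ell(v_2):=\sup\{v_1:f_2(v_1)<v_2\}$, where $\ell$ is non-decreasing and, by the normalization, $\ell\le\bar v_1$. The $G_2$-integral does not involve $f_1$, so the optimal $f_1$ is determined column by column: $f_1(v_2)\in\argmax_{t\ge\ell(v_2)}G_1(t)$. The key lemma I would prove is a monotone-selection statement: for bounded upper semicontinuous $G$, the map $r(\ell):=\sup_{\ell'\le\ell}\min\big(\argmax_{t\ge\ell'}G(t)\big)$ is non-decreasing and satisfies $r(\ell)\in\argmax_{t\ge\ell}G(t)$ — indeed $r(\ell)\ge\ell$; any term that can influence the supremum defining $r(\ell)$ then lies in $[\ell,\infty]$ and is a maximizer of $G$ over $[\ell,\infty]$; and that set is closed. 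Taking $f_1':=r\circ\ell$ (with $G=G_1$) gives a monotone best response, and since $r(\ell)\le\bar v_1$ whenever $\ell\le\bar v_1$ we also obtain $f_1'\le\bar v_1$; replacing $f_1$ by $f_1'$ does not decrease GFT. Running the same step once more, replace $f_2$ by a monotone best response $f_2'$ to $f_1'$; GFT again does not decrease, and $(f_1',f_2')$ is compatible (because $f_2'$ is compatible with $f_1'$) with both functions monotone (non-decreasing).

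Tightness then comes essentially for free. Since $f_1'\le\bar v_1$, the ``dead zone'' $\{(v_1,v_2):v_1<f_1'(v_2)\text{ and }v_2<f_2'(v_1)\}$ is contained in $\{v_1<\bar v_1\}$, so $(f_1',f_2')$ is tight after $(\bar v_1,0)$ in the sense of Definition~\ref{def:tight}. Setting $(f_1^*,f_2^*):=(f_1',f_2')$ we obtain a compatible, tight pair of monotone functions with $\mathrm{GFT}(M_{f_1^*,f_2^*})\ge\mathrm{GFT}(M_{f_1,f_2})\ge\mathrm{GFT}(M)$, which proves the theorem. (Here $f_2^*$ may take the value $\infty$ on $(\bar v_1,\infty)$, allowed by Definition~\ref{def:compat}; once $f_1^*$ is bounded this is in fact forced if the pair is to remain compatible and tight.)

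The main obstacle I expect is the monotonization step. A priori the column-optimal thresholds $\argmax_{t\ge\ell(v_2)}G_1(t)$ can jump around arbitrarily — $G_1$ need not be unimodal — and the two functions are entangled through the compatibility constraint, so it is not obvious that a monotone choice is optimal. The two ingredients that make it go through are (i) the decoupling, which is exactly where independence of $V_s,V_1,V_2$ is used: fixing one function makes optimizing the other separate into one-dimensional threshold problems that share the objective $G_i$ and are constrained only by a non-decreasing lower bound; and (ii) the order-theoretic selection lemma that extracts a non-decreasing maximizer from these pointwise optima. The remaining issues — attainment of the maxima (bought by the GFT-maximal tie-break, which makes $G_i$ upper semicontinuous), atoms in the distributions, and the mild reshaping used for tightness — are comparatively routine.
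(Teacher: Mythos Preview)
Your monotonization argument is essentially the paper's: both fix one function, observe that compatibility becomes a pointwise lower bound (your $\ell(v_2)$, the paper's ``compatibility restriction'' $r_b^{f_{-b}}$), and replace the other function by a restricted best price that is monotone because the restriction is monotone and a monotone selection from $\argmax_{t\ge r}G(t)$ exists. The paper's selection is simpler than yours: it takes $p^*(r):=\max\{\argmax_{t\ge r}G(t)\}$ and shows directly (Lemma~\ref{lem:rbpmono}) that this is non-decreasing in $r$; your $r(\ell)=\sup_{\ell'\le\ell}\min(\argmax_{t\ge\ell'}G(t))$ works but is needlessly convoluted and the claim ``$r(\ell)\in\argmax_{t\ge\ell}G(t)$'' deserves a cleaner justification than the sketch you give.

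The substantive difference is tightness. You obtain it vacuously: after two best-response steps you have $f_1'\le\bar v_1$, so no $v_1\ge\bar v_1$ can satisfy $v_1<f_1'(v_2)$, and the pair is tight after $(\bar v_1,0)$. This is formally enough for the theorem as stated (Definition~\ref{def:tight} only asks for \emph{some} point), but it is the trivial instance of tightness---any compatible pair with $f_1$ bounded by $\bar v_1$ satisfies it. The paper instead performs a \emph{third} best-response step ($f_2^*=g(f_1^*,V_s\times V_2)$) and proves in Lemma~\ref{lem:tight} that the resulting pair is tight after the \emph{unrestricted best prices} $(p_1,p_2)$. That stronger, non-vacuous tightness is exactly what Section~\ref{sec:alg} exploits: the dynamic program searches only over pairs that equal $(p_1,p_2)$ up to that point and are tight thereafter. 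Your tightness would not support that algorithm. So your proof establishes the theorem, but the paper's extra iteration buys the structural property actually used downstream.
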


For a product distribution $V$, we denote the optimal gains from trade of 
a \nice mechanism by $\OPTniceV$. %
A tight compatible pair of monotone functions that is associated with such a mechanism is called an \emph{optimal} pair for the distribution. Figures \ref{fig:uhalf} and \ref{fig:complexexample}
demonstrate that optimal pairs may be far from trivial.
Nevertheless, as we will show, the space of tight compatible pairs of
monotone functions has a small enough dimension as to allow efficient learning, where
the critical element is the monotonicity of the functions.

\begin{figure}[htp] 
\centering
\begin{minipage}{.45\textwidth}
  \centering
  \includegraphics[width=\linewidth]{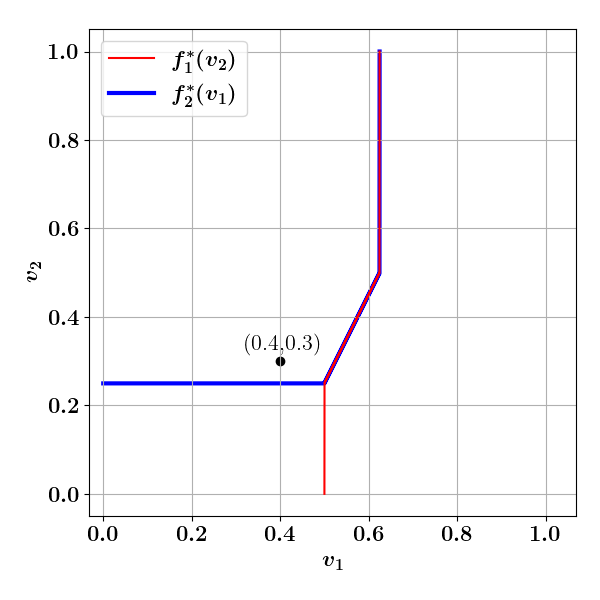} 
  \caption{Optimal pair of functions for 
  the independent distributions $v_s\sim U[0,1]$, $v_1\sim U[0,1]$, $v_2\sim U[0,\frac{1}{2}]$. Note that the buyer with lower value sometimes trades, e.g., for values $(v_s,v_1,v_2)= (0, 0.4,0.3)$.}
  \label{fig:uhalf}
\end{minipage}%
\hfill
\begin{minipage}{.45\textwidth}
  \centering
  \includegraphics[width=\linewidth]{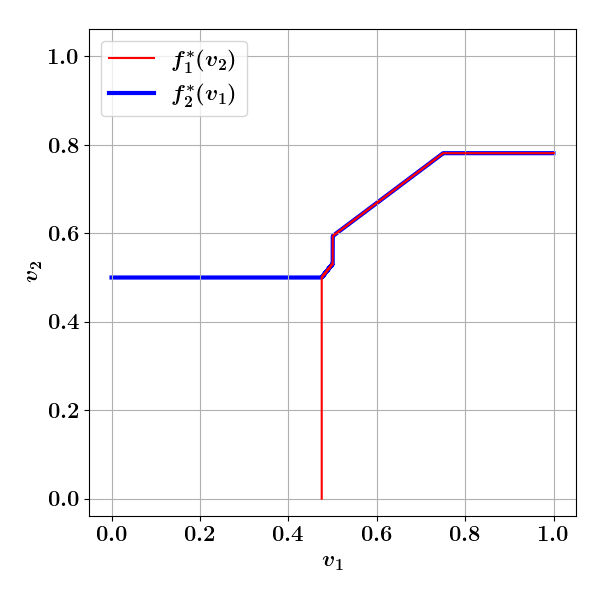} 
   \caption{Optimal pair for $v_s\sim U[0,1]$, \\$v_1\sim \begin{cases} U[0,\frac{1}{2}], & \frac{1}{4} \\ U[\frac{1}{2}, \frac{3}{4}], & \frac{3}{4} \end{cases}$, $v_2\sim \begin{cases} U[0,\frac{1}{4}], & \frac{3}{4} \\ U[\frac{1}{2}, 1], & \frac{1}{4} \end{cases}$. \\ Note that the functions are rather complicated.}
  \label{fig:complexexample}
\end{minipage}
\end{figure}

Before continuing with our goal of learning from samples,
we first present an algorithm that can find an optimal pair for a finitely supported and
{\em explicitly given} product distribution, i.e. where the distribution
over each player's value is 
given as a finite list of (value, probability) pairs.  
Such an algorithm is not trivial and we obtain it using careful dynamic programming.  

\begin{restatable}{theorem}{algmax}\label{thm:algmax}
    There exists a polynomial time algorithm that when explicitly given finite distributions $V_s, V_1, V_2$,  
    outputs a compatible pair of functions $(f_1^*,f_2^*)$ such that $M_{f_1^*,f_2^*}$ maximizes gains from trade for $\textbf{V}= V_s\times V_1\times V_2$ over all
    \nice mechanisms.
\end{restatable}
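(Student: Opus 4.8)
The plan is to combine the structural theorem above (Theorem~\ref{thm:bestmech}) with a discretization of the search space and a dynamic program. First, by Theorem~\ref{thm:bestmech} the optimum $\OPTniceV$ over \nice mechanisms is attained by some $M_{f_1^*,f_2^*}$ whose pair $(f_1^*,f_2^*)$ is compatible, tight, and monotone, so it suffices to search over compatible pairs of monotone functions and output the best. Writing $\mathrm{supp}(V_s)=\{c_1<\dots<c_{n_s}\}$, $\mathrm{supp}(V_1)=\{a_1<\dots<a_{n_1}\}$, $\mathrm{supp}(V_2)=\{b_1<\dots<b_{n_2}\}$, the allocation of $M_{f_1,f_2}$ — and hence its GFT — depends on $f_1$ only through the monotone vector $f_1(b_1)\le\dots\le f_1(b_{n_2})$ and on $f_2$ only through $f_2(a_1)\le\dots\le f_2(a_{n_1})$. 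Moreover a single value $f_1(b_j)$ affects the outcome only through which points of $\mathrm{supp}(V_s)$ lie strictly below it and which points of $\mathrm{supp}(V_1)$ lie strictly above it (the price itself is irrelevant to GFT, since payments are transferred to the seller), so without loss of generality $f_1(b_j)$ ranges over a canonical finite set of $O(n_s+n_1)$ values — one representative in each gap of $\mathrm{supp}(V_s)\cup\mathrm{supp}(V_1)$, including the ``above everything'' value meaning ``never trade with buyer~$1$ at this $v_2$'' — and symmetrically for $f_2(a_i)$.

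Second, I would put the objective in additive form. Since the pair is compatible, on every realization $(v_s,v_1,v_2)$ at most one buyer can trade, so the realized GFT is $v_1-v_s$, $v_2-v_s$, or $0$ (ties broken by the fixed rule of Definition~\ref{def:m}); by independence the expected GFT decomposes as a sum over the grid cells $(a_i,b_j)$, and the contribution of cell $(a_i,b_j)$ is $p_1(a_i)\,p_2(b_j)$ times a quantity that can be precomputed from the distributions and that depends only on $a_i$, $b_j$, the threshold $f_1(b_j)$, and the threshold $f_2(a_i)$. Thus, once the two monotone threshold vectors are fixed, the objective is a sum of purely local terms, which is exactly what a dynamic program needs.

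Third, the dynamic program. The idea is to build the two threshold vectors \emph{simultaneously} by sweeping the threshold level $\theta$ upward through the $O(n_s+n_1+n_2)$ canonical values. For monotone $f_1,f_2$, the set $\{b_j: f_1(b_j)<\theta\}$ is a prefix of $\mathrm{supp}(V_2)$ and $\{a_i: f_2(a_i)<\theta\}$ is a prefix of $\mathrm{supp}(V_1)$; so the DP state is the pair of prefix lengths reached so far, i.e.\ an index pair in $\{0,\dots,n_1\}\times\{0,\dots,n_2\}$. Raising $\theta$ to the next level amounts to choosing which further $a_i$'s receive $f_2$-value equal to this level and which further $b_j$'s receive $f_1$-value equal to this level; monotonicity is then automatic, each such choice pins down the contributions of a batch of grid cells through the local terms of the previous step, and the compatibility constraint (``no $(v_1,v_2)$ with $v_1>f_1(v_2)$ and $v_2>f_2(v_1)$'') turns into an inequality between the two current prefixes that can be checked at the frontier; tightness, if one wants to enforce it explicitly, is handled by additionally guessing the breakpoint $(p_1,p_2)\in\mathrm{supp}(V_1)\times\mathrm{supp}(V_2)$ (polynomially many guesses) and restricting the sweep accordingly. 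With $O(n_1 n_2)$ states, $O(n_s+n_1+n_2)$ levels, and polynomial-time transitions, the whole computation is polynomial, and the pair attaining the best DP value is, by Theorem~\ref{thm:bestmech}, an optimal pair to output.

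The main obstacle is precisely the coupling of $f_1$ and $f_2$ through compatibility: a DP that constructs $f_1$ by sweeping $v_2$, or $f_2$ by sweeping $v_1$, cannot even express this constraint, because it lives jointly on both axes. The technical heart is therefore to choose a sweep order and a frontier state — here, the sweep over threshold levels together with the pair of prefix lengths — for which (i) both functions stay monotone for free, (ii) compatibility (and, optionally, tightness) reduces to a local check at the frontier, and (iii) the expected GFT stays additive over the cells revealed by each step. Verifying that a state of polynomial size is a sufficient statistic for all three of these is where the monotonicity guaranteed by Theorem~\ref{thm:bestmech} is indispensable; without it the thresholds could behave arbitrarily and the natural state would blow up.
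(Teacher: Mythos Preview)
Your high-level plan is the same as the paper's: invoke Theorem~\ref{thm:bestmech} to reduce to compatible, tight, monotone pairs; discretize the thresholds to a finite grid; and run a dynamic program. The difference is in how the DP is organized, and this is where your proposal has a real gap.

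You propose to sweep a common threshold level $\theta$ upward, with state the pair of prefix lengths $(i,j)$, and you assert that (i) the contribution of each grid cell is ``pinned down'' as soon as its row or column is processed and (ii) compatibility ``turns into an inequality between the two current prefixes.'' Neither claim is justified. For (i), the contribution of the cell $(a_k,b_l)$ depends on \emph{both} $f_1(b_l)$ and $f_2(a_k)$ --- you need to know which buyer wins before you know which threshold governs the GFT --- and these two thresholds may be fixed at different levels of the sweep; the prefix-length state forgets the actual values already assigned, so it is not clear when or how the cell's contribution is accounted for without double-counting or omission. For (ii), compatibility is the constraint ``$a_k>f_1(b_l)\Rightarrow b_l\le f_2(a_k)$,'' which couples a threshold already set (say $f_1(b_l)=\theta$) to thresholds that will only be set later (all $f_2(a_k)$ with $a_k>\theta$); this is not a function of the current prefix lengths alone, and a small example with two values on each side already shows that incompatible assignments can pass any purely local prefix check.

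The paper sidesteps both issues by fully exploiting \emph{tightness}, which you treat as optional. Tightness together with compatibility and monotonicity means that above the point $(p_1,p_2)$ the two curves coincide as a single monotone staircase: every cell lies on exactly one side, and the unique threshold on that side determines its contribution. The paper first \emph{computes} $(p_1,p_2)$ directly (they are the unrestricted best bilateral prices, not guessed), and then runs a DP whose state is the current staircase corner $(s_i,s_j)$ and whose only transitions are ``go right'' (commit $f_2(s_i)=s_j$, move to $(i{+}1,j)$) or ``go up'' (commit $f_1(s_j)=s_i$, move to $(i,j{+}1)$); Lemma~\ref{lem:noties} justifies that these two options exhaust the optimum. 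This makes compatibility automatic and each step's GFT contribution a single precomputable bilateral term, so the recursion is clean and clearly polynomial. If you commit to tightness up front your sweep would collapse to essentially this path DP; as written, the gap is that you have not shown your state is a sufficient statistic.
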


To learn a good mechanism for a product distribution from a given sample, we use this optimization algorithm over the sample.  I.e. our learning algorithm first draws a sample of
appropriate size and then finds the GFT-optimal mechanism for the empirical distribution. Our main result shows that, with high probability, this mechanism is almost optimal 
for the true (unknown) product distribution as well.  The critical element that drives this theorem is the 
low dimension of the class of tight and compatible pairs of monotone functions, which as Theorem \ref{thm:bestmech} shows,
are sufficient to consider in order to optimize gains from trade over all \nice mechanisms in the case
of product distributions.

\begin{restatable}{theorem}{thmlearn}\label{thm:learn}
    Let $V_s, V_1, V_2$ be distributions over $[0,1]$, 
    and fix $\varepsilon, \delta > 0$. Denote the maximum expected GFT that can be achieved by a \nice mechanism on $\textbf{V}=V_s\times V_1 \times V_2$ by $\OPTniceV$. There exists a polynomial-time algorithm that given sample access to $V_s, V_1, V_2$ and parameters $\varepsilon, \delta$, 
    outputs, with probability at least $1-\delta$, a mechanism which has an expected GFT  
    of at least $\OPTniceV - \varepsilon$ on $\textbf{V}$, using $poly(\frac{1}{\varepsilon}\log \frac{1}{\varepsilon\delta})$ samples from $\textbf{V}$.
\end{restatable}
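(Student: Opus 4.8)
The plan is to run empirical risk minimization: draw $m=\Theta(\varepsilon^{-2}\log(1/\delta))$ i.i.d.\ samples from each of $V_s,V_1,V_2$, form the empirical product distribution $\hat{\textbf V}=\hat V_s\times\hat V_1\times\hat V_2$, run the dynamic program of Theorem~\ref{thm:algmax} on the explicit finite distribution $\hat{\textbf V}$, and output the \nice mechanism $\widehat M:=M_{\hat f_1,\hat f_2}$ it returns. Since $\hat{\textbf V}$ is a product distribution, Theorem~\ref{thm:bestmech} lets us take $(\hat f_1,\hat f_2)$ to be a tight compatible pair of \emph{monotone} functions (the dynamic program searches over such pairs, which is without loss by Theorem~\ref{thm:bestmech}). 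By Proposition~\ref{prop:mech-pair}, $\widehat M$ is a genuine \nice mechanism defined on all of $[0,1]^3$, hence a feasible output, and the running time is polynomial in $m$, hence polynomial overall, because the dynamic program is polynomial in the support size. What remains is the GFT guarantee, and this is where the work is.

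The heart of the argument is a \emph{uniform} generalization bound that deliberately avoids VC/pseudo-dimension machinery --- which is essential, since the class of monotone \nice mechanisms has infinite VC dimension. Writing $F_i,\hat F_i$ for the c.d.f.s of $V_i,\hat V_i$ and $\kappa:=\max_i\|F_i-\hat F_i\|_\infty$, I would show that there is an absolute constant $C$ such that, \emph{simultaneously} for every compatible pair of monotone functions $(f_1,f_2)$,
\[
\bigl|\,\mathrm{GFT}_{\textbf V}(M_{f_1,f_2})-\mathrm{GFT}_{\hat{\textbf V}}(M_{f_1,f_2})\,\bigr|\ \le\ C\kappa .
\]
To prove this, fix such a pair and let $g(v_s,v_1,v_2)$ be the realized gains from trade of $M_{f_1,f_2}$. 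Compatibility forces at most one buyer to trade at any $(v_s,v_1,v_2)$, so $g$ is a sum of two terms of the form $(v_i-v_s)\,\mathbbm{1}[v_s<f_i(v_{3-i})\le v_i]$. The key point is that if any two of the three coordinates are frozen, $g$ has total variation $O(1)$ in the remaining one: in the $v_s$ direction $g$ is just ``$v_i-v_s$ on an initial interval, then $0$''; in the $v_1$ (resp.\ $v_2$) direction, monotonicity of $f_2$ (resp.\ $f_1$) makes the set of values of the free coordinate on which the \emph{other} buyer trades an interval, on which $g$ is constant. Since $|\int g\,d\mu-\int g\,d\nu|\le\mathrm{TV}(g)\cdot\|F_\mu-F_\nu\|_\infty$ for bounded-variation $g$, swapping $V_s\to\hat V_s$, then $V_1\to\hat V_1$, then $V_2\to\hat V_2$ (integrating out the other two coordinates at each step) changes $\mathrm{GFT}(M_{f_1,f_2})$ by $O(\kappa)$ in total. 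No discretization of the value space is needed for this.

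With the uniform bound in hand the theorem follows by the standard chain. Let $M^*=M_{f_1^*,f_2^*}$ be a GFT-optimal \nice mechanism for $\textbf V$, taken with $(f_1^*,f_2^*)$ monotone via Theorem~\ref{thm:bestmech}. Then
\[
\mathrm{GFT}_{\textbf V}(\widehat M)\ \ge\ \mathrm{GFT}_{\hat{\textbf V}}(\widehat M)-C\kappa\ \ge\ \mathrm{GFT}_{\hat{\textbf V}}(M^*)-C\kappa\ \ge\ \mathrm{GFT}_{\textbf V}(M^*)-2C\kappa\ =\ \OPTniceV-2C\kappa ,
\]
using the uniform bound for $\widehat M$ and for $M^*$ at the two ends, and empirical optimality of the dynamic program (Theorem~\ref{thm:algmax}) in the middle. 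Finally, by the Dvoretzky--Kiefer--Wolfowitz inequality and a union bound over the three coordinates, $\kappa\le\varepsilon/(2C)$ with probability at least $1-\delta$ as soon as $m=\Theta(\varepsilon^{-2}\log(1/\delta))$, which is in particular $\mathrm{poly}(\tfrac1\varepsilon\log\tfrac1{\varepsilon\delta})$; on that event $\widehat M$ has expected GFT at least $\OPTniceV-\varepsilon$ on $\textbf V$.

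I expect the uniform bound of the second paragraph to be the main obstacle: the candidate class is genuinely too rich for black-box learning-theoretic tools, and the whole trick is to trade a hard high-dimensional estimate for three elementary one-dimensional bounded-variation estimates, using the product structure together with the monotonicity supplied by Theorem~\ref{thm:bestmech}. A secondary technical point I would have to handle carefully is ties: the empirical distributions are atomic, so the tie-breaking rule in the definition of $M_{f_1,f_2}$ affects the value of $g$ on positive-probability events --- but ties only change $g$ at its jump points and therefore do not affect the $O(1)$ total-variation bounds.
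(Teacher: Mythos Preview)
Your proposal is correct and mirrors the paper's architecture: empirical risk minimization via the dynamic program of Theorem~\ref{thm:algmax}, justified by a uniform convergence bound over all compatible pairs of \emph{monotone} functions (which suffice by Theorem~\ref{thm:bestmech}), followed by the standard optimality chain. The difference lies in how the uniform bound (the paper's Lemma~\ref{lem:gftbound}) is proved. The paper works with $\varepsilon$-samples for the range space of intervals and an explicit rounding-to-sample-points argument, using monotonicity of $f_1,f_2$ to control how much rounding $v_1$ (resp.\ $v_2$) can move the price $f_2(v_1)$ (resp.\ $f_1(v_2)$). You instead bound the Kolmogorov--Smirnov distance via DKW and swap one marginal at a time, using the inequality $|\int g\,d\mu-\int g\,d\nu|\le\mathrm{TV}(g)\cdot\|F_\mu-F_\nu\|_\infty$ after observing that monotonicity of $(f_1,f_2)$ forces the realized-GFT function to have $O(1)$ total variation in each single coordinate. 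Both arguments rest on exactly the same structural fact --- monotonicity supplied by Theorem~\ref{thm:bestmech} --- and give the same sample complexity up to a $\log(1/\varepsilon)$ factor (yours is slightly tighter, since DKW avoids the VC overhead). Your packaging via bounded variation is a bit more modular; the paper's rounding argument is more explicit about where every $\varepsilon$ comes from. One small caveat: your description of $g$ in the $v_s$ direction as ``$v_i-v_s$ on an initial interval, then $0$'' is not literally true at tie configurations, where the identity of the winning buyer can switch once as $v_s$ crosses a threshold; but this only adds one more monotone piece and keeps the total variation $O(1)$, so the argument is unaffected.
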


Note that our learning result is for bounded distributions. Such an assumption is necessary, as if the support is unbounded, almost all the gains from trade might come from extremely rare events that are never sampled, and thus a good mechanism cannot be learned.\footnote{As usual, learning results may be also recovered under assumptions
that appropriately bound the tail of the distribution.}

\subsection*{Open Problems}

Our %
results in this paper were focused on \nice mechanisms for the case of one seller and
two buyers.  It would be natural to extend these results along any one of two separate dimensions.
First, rather than consider the ``nicest'' class of \nice mechanisms that are deterministic, dominant-strategy incentive-compatible, ex-post individual-rational, and
strong budget-balanced, it would be interesting to consider more general classes 
of mechanisms that relax some (or all) of
these assumptions: to randomized, Bayesian Incentive-compatible,
interim individual rationality, and weak budget-balance, respectively.  Second,
it is natural to consider the case of more than a single seller and two buyers: most
generally, to the case of $k$ buyers and $l$ sellers, or at least to the case of
$k$ buyers and a single seller.

As mentioned above, our impossibility result does indeed generalize in both dimensions:
it holds both for any number of buyers and sellers (as long as it is  beyond the bilateral case) and shows that learning algorithms that are even allowed to use the most
general type of mechanisms cannot approximate the benchmark set by even
the most restricted type of mechanisms.

Our positive result, however, is very specific and it seems that any significant extension
along either of these two dimensions may require new ideas.  We leave the challenge
of extending our results along any one of these dimensions (or showing that this
would be impossible) as our main set of open problems.

In terms of numbers
of sellers and buyers, we simply do not have any sufficiently good characterization 
of gains-from-trade maximizing mechanisms for any case with more than one seller and two
buyers, even for independent valuations.  I.e. we lack an analog of
Theorem \ref{thm:bestmech} that will allow such mechanisms to be learned.  
For example we do not even know how to solve the learning problem in the case
of one seller and three buyers with independent valuations.

In terms of the class of mechanisms considered, again we 
lack sufficiently good characterizations for any richer class of mechanisms\footnote{
Except for the case of allowing universally-truthful randomized mechanisms rather
than deterministic ones, which one may observe does not improve the gains-from-trade.}
and thus we do not know whether richer classes of mechanism have sufficiently 
low dimension to be efficiently learned.
For example, we do not even know whether relaxing the strong budget-balance requirement
to weak budget-balance may improve the gains-from-trade, and if so, whether it is possible to learn such mechanisms in the case of one seller and two buyers with
independent valuations.  

When considering Bayesian Incentive-Compatibility (BIC), one
should be careful in the definition of the learning challenge. This is so as the exact notion
of BIC may require
an exact specification of the distribution which would not be available from sample data,
so it might be helpful to focus on the weaker notion of $\varepsilon$-BIC.  On the other hand, 
the algorithmic question itself may be easier.  E.g., finding the optimal
Bayesian Incentive-Compatible mechanism for a given sample can be solved using a 
linear program.

\subsection{Additional related work}\label{sec:related}

Prior work \cite{Cesa-BianchiCCF21} has studied the bilateral trade problem in a regret-minimization framework over rounds of seller/buyer interactions, with no prior knowledge on the private seller/buyer valuations.
They showed that learning a close to optimal \nice mechanism can be done in some challenging online scenarios (but provably not in some even more challenging ones). We study the easier problem of (offline) learning from samples, but the  setting of 1-seller 2-buyers and beyond (while they study bilateral trade).

The problem of gains-from-trade maximization in trade settings has been extensively studied. Myerson and Satterthwaite \cite{myerson1983efficient} 
show that for bilateral trade settings, the second-best mechanism  has strictly lower GFT than the first best. The second-best mechanism is the mechanism with the highest GFT subject to being  Bayesian Incentive-Compatible (BIC), Interim Individually Rational (IIR), and ex-ante Weakly Budget Balanced (WBB).
In the double auctions setting, McAfee\cite{McAfee92} has shown that there is a DSIC mechanisms that is IIR and WBB with GFT that is at least $(1-1/q)$ fraction of the first best, when $q$ is the size of the efficient trade. This result was extended to other trade settings,  e.g., to 
spatially-distributed markets \cite{BNP09}.
Note that the approximation provided by this mechanism to the first best is $0$ when $q=1$, and thus this mechanism gives no approximation in small markets, and in particular, in single-seller  settings (e.g., bilateral trade and the 1-seller 2-buyer setting we focus on). %
Recently, it was shown that for bilateral trade (and for double auctions), the multiplicative gap between the first and second best  is only a constant factor \cite{DengMSW22}. 

For double auctions, \cite{BCGZ18} has presented mechanisms that are constant approximation to the second best, and are also asymptotically efficient. 
In \cite{BGG2020}, a {Bulow-Klemperer}-style result for GFT  maximization was presented to the double auctions setting, showing that with IID agents, a variant of the McAfee\cite{McAfee92} mechanism 
with an additional buyer ensures at least as much expected GFT as the first best of the original setting. 

A long line of research has studied constant-fraction approximation to the welfare in bilateral trade \cite{BlumrosenD21, CW23,KPV2022} and more general trade settings \cite{CGKLTS2020}.

\OLD{\subsection{new outline:}
\mbc{new outline for the intro:}
\begin{itemize}
    \item GFT maximization must take incentives into account (truthful is not best for agents in the simple clearing mechanism). 
    \item We are interested in GFT maximization from samples for general trade settings, when taking incentives into account. \nn{It seems that we can
    talk about this only after we finished with MS showing that even bilateral trade must take incentives into account}
    \item M\&S show that incentives create a gap in obtainable GFT: second-best is strictly smaller than first best for bilateral trade. %
    \item They also show that the second-best mechanism for bilateral trade may be quite complex, even for very benign continuous distributions.
    \item     Computationally speaking, when the support of the distribution of the valuations of players is finite, the second-best mechanism is captured by a linear program and thus computable in polynomial time when explicitly given the distribution as input (true also for more buyers and sellers). (Same true for randomized DSIC we can do LP - but if we want deterministic DSIC we have IP and not LP) 
    \item we are interested in learning from samples in trade settings. \nn{Here mention recent related work}
    \item (NEW:) we show that for bilateral trade and correlated dist. it is not possible to learn second-best  (best BIC) from samples.  
    \item (NEW:) What about the learning problem when the bilateral trade is with independent distributions? If we insist on exact BIC we get a strange situation in which the equilibrium should be BIC with respect to the unknown distributions. One the other hand, Noam now knows how to prove that the we can use the LP on the empirical distribution to find the empirical GFT optimal mechanism, and that mechanism 
    is $\varepsilon$-BIC, and it also gives a good GFT approximation on the real prior (whp). This is poven through monotonicity of the marginal probability of winning.  
    \item in light of this, it is not a surprise that the Italians are moving from learning BIC to learning \nice mechanisms, and succeed for bilateral trade (even online learning - non-batch) 
    \item Can we extend learning \nice (or BIC) \nn{Just nice}  mechanisms beyond bilateral trade? we move to 1-seller 2-buyer.
    \item LP still allows us to compute the second best, even for correlated dist.  \nn{No need to discuss again here }
    \item But can we learn close-to-optimal mechanisms from samples? 
    \item impossibility for \nice mechanisms for 1-seller 2-buyer.
    \item So we move to independent distributions. Now optimal \nice mechanisms are not computable by an LP. Can they be computed?
    \item we show they can be computed. 
    \item we move to learning. 
    
\end{itemize}
}

\section{Model and Notation}\label{sec:model}

Consider a market with one seller and two buyers, where the seller has one indivisible item to sell, and each buyer wants to buy that item. We denote the seller by $s$, and the two buyers by $b_1$ and $b_2$. The private value of the seller for keeping the item is $v_s\geq 0$, and the private value of buyer $b_i\in \{b_1,b_2\}$ for getting the item is $v_i\geq 0$. The triplet of values  $\textib{v}=(v_s,v_1,v_2)\in \mathbb{R}_{\geq 0}^3$ is sampled from a joint prior distribution $\textib{V}$ which is supported on a subset of $\mathbb{R}_{\geq 0}^3$. 
When referring to vectors of valuations, we  sometime use the standard notation $(v_i,\textib{v}_{-i})$ to separate the valuation of agent $i$ from the others. When referring to functions relating to buyer $b_i$, we sometimes omit the $b$ and refer to them as $f_i$, meaning $f_{b_i}$. Also, when discussing a buyer $b\in \{b_1,b_2\}$ we sometimes refer to the other buyer as $-b$.

\subsection{\Nice Mechanisms}

In this section we formally define the ``\nice'' mechanisms that we study.  
Basically, we focus on (dominant strategy) truthful mechanisms where the buyer who gets the item -- if trade occurs -- pays the seller, and no payments occur for non-traders.  The basic example of such a mechanism is where the sale price is
determined by bid of the {\em low buyer}, and trade happens with the high bidder 
if the seller's bid
is lower than the price.  
A very different type of mechanism of this type is a sequential one, where the first buyer gets a chance to trade at some fixed price,
and only if the first buyer declines then the second buyer gets a chance to 
trade
at some other fixed price.  As we will see, however, 
the full space of \nice mechanisms is much richer,  %
and contains significant generalizations.

A deterministic (direct revelation) mechanism $M=(A,p)$ for a 1-seller 2-buyer setting is constructed from an allocation rule $A:\mathbb{R}_{\geq 0}^3 \rightarrow \{s,b_1,b_2\}$  and a payment function $p:\mathbb{R}_{\geq 0}^3 \rightarrow \mathbb{R}^3$.
In such a direct revelation mechanism, agents are asked to report their valuations (submit bids), and we denote the profile of reports  by $\textib{v}'=(v_s',v_1',v_2')\in \mathbb{R}_{\geq 0}^3$.
Given reports $\textib{v}'= (v'_s,v'_1,v'_2)$,  the allocation $A(\textib{v}')=A(v'_s,v'_1,v'_2)$ denotes the agent which ends up with the item, and $p_i(\textib{v}')= p_i(v'_s,v'_1,v'_2)$  denotes the payment paid by agent $i$.\footnote{If a payment is negative, that agent receives a payment.}

Denote a trade between the seller and buyer $b \in \{b_1,b_2\}$ by $s\rightarrow b$. Let $M=(A,p)$ be a mechanism, we define the utilities of the agents with respect to the mechanism $M$ when the values are $\textib{v}$  and the reports are $\textib{v}'$ as follows:
\begin{itemize}
    \item Seller $s$: $u_s(v_s,\textib{v}')=-\mathbbm{1}\{A(\textib{v}') \neq s\}\cdot v_s-p_s(\textib{v}')$.
    \item Buyer $b \in \{b_1,b_2\}$: $u_b(v_b,\textib{v}')=\mathbbm{1}\{A(\textib{v}') = b\}\cdot v_b - p_b(\textib{v}')$.
\end{itemize}

In a DSIC mechanism it is a dominant strategy for each agent to bid truthfully: 
\begin{definition}
    A mechanism $M$ is \textup{Dominant Strategy Incentive Compatible} (DSIC) if for every agent $i\in \{s,b_1,b_2\}$, for every valuation profile $\textib{v}=(v_s,v_1,v_2)$ and every bidding profile $\textib{v}'=(v_s',v_1',v_2')$ it holds that
    $$u_i(v_i,(v_i,\textib{v}_{-i}')) \geq u_i(v_i,(v_i',\textib{v}_{-i}'))$$ 
\end{definition}

From now on, whenever a mechanism is DSIC we assume that all the agents are truthful, as being truthful is a dominant strategy for every agent. When truthfulness also ensures non-negative utility, participation in the mechanism is individually rational: 

\begin{definition}
    A mechanism $M$ is \textup{ex-post Individually Rational} (IR) if for every agent $i\in \{s,b_1,b_2\}$, every valuation profile $\textib{v}=(v_s,v_1,v_2)$ and every bidding profile $\textib{v}'=(v_s',v_1',v_2')$: 
        \begin{gather*}
            u_i(v_i,(v_i,\textib{v}_{-i}')) \geq 0
    \end{gather*}
\end{definition}

A natural requirement is that the mechanism never loses money. We focus on mechanisms that also never gain any money:
\begin{definition}
    A mechanism $M$ is \textup{Strongly Budget-Balanced} (SBB) if:
    \begin{gather*}
        \forall v'_s,v'_1,v'_2: \sum_{i\in \{s,b_1,b_2\}}p_i(v'_s,v'_1,v'_2)=0
    \end{gather*}
\end{definition}

We say that an agent is \emph{winning} if she is involved in a trade. 
\begin{definition}[Winning Outcomes]
    The \textup{Winning Outcomes} $W_i\subseteq \{s,b_1,b_2\}$ of agent $i$ are a set of all the outcomes in which agent $i$ is involved in a  trade. Specifically:
    \begin{align*}
        W_1&=\{1\} & W_2&=\{2\} & W_s&=\{1,2\}
    \end{align*}
\end{definition}

As usual, we focus on mechanisms that are normalized (losers pay zero)\footnote{Note that this property does not necessarily hold for all ex-post IR and SBB mechanisms. E.g., a mechanism that offers a trade $b\rightarrow 1$ with $p_1=1, p_2=-1/2, p_s=1/2$. If the agents agree to the trade, Buyer 2 is a loser but his price is not 0.} :
\begin{definition}
    A mechanism $M=(A,p)$ is called \textup{normalized} if the payment when losing is always 0, i.e. for every agent $i$ and for every bidding profile $\textib{v}'=(v_s',v_1',v_2')$, if $A(\textib{v}')\notin W_i$ then $p_i(\textib{v}')=0$.
\end{definition}

In this paper we focus on \nice mechanisms: a mechanism is \emph{\nice} if it is deterministic, normalized, DSIC, ex-post IR and SBB.
{As our setting is a single-parameter domain, the standard general characterization of normalized DSIC mechanisms (monotone allocation with critical value payments) applies to it.
We next present its implications for the model of a seller and two buyers. }

First, we define allocation monotonicity. An allocation rule is monotone if a winner does not become a loser by submitting a better bid: 

\begin{definition}
    An allocation function $A$ is monotone if:
    \begin{enumerate}
        \item $\forall (v_s',v_1',v_2'), \forall b\in \{b_1,b_2\},  \forall v_b'' \geq v_b':A(v_b',\textib{v}_{-b}')\in W_b \Rightarrow A(v_b'',\textib{v}_{-b}')\in W_b$. 
        \item $\forall (v_s',v_1',v_2'), \forall v_s''\leq v_s' :A(v_s',v_1',v_2')\in W_s \Rightarrow A(v_s'',v_1',v_2')\in W_s$.
    \end{enumerate}
\end{definition}

A celebrated result by Myerson \cite{myerson1981optimal} implies a characterization of normalized DSIC mechanisms in a single-parameter domain. We will use a simple statement of this result, based on Theorem 9.36 in \cite{agt}, as it applies for our domain:

\begin{proposition} \label{prop:classicmech}
    A normalized mechanism $M=(A,p)$ is DSIC if both of the following conditions hold:
    \begin{enumerate}
        \item $A$ is monotone.
        \item The payment functions define the critical values for winning: 
        \begin{enumerate}
            \item For every buyer $b$, if $v_b' > p_b(\textbf{v}')$ then he wins,   and if $v_b' < p_b(\textbf{v}')$ he loses.
            \item For the seller, if $v_s' < -p_s(\textbf{v}')$ then she wins, and if $v_s' > -p_s(\textbf{v}')$ she loses.
        \end{enumerate}    
    \end{enumerate}
\end{proposition}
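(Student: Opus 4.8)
The plan is to give the standard proof of the sufficiency direction of Myerson's lemma, specialized to this single‑parameter domain. Because each agent's type is a single number and DSIC is a pointwise requirement, it suffices to fix an agent $i$ and an arbitrary report profile $\textib{v}_{-i}'$ of the other two agents, and to show that among all reports $z\in\mathbb{R}_{\geq 0}$ available to $i$, reporting the true value $v_i$ maximizes $i$'s utility $u_i(v_i,(z,\textib{v}_{-i}'))$. I would handle a buyer $b\in\{b_1,b_2\}$ first, and then the seller, whose analysis is identical up to reversing one inequality and tracking the sign of the payment.

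For a buyer $b$, the first step is to read off the critical‑value structure from conditions 1 and 2. Holding $\textib{v}_{-b}'$ fixed, monotonicity of $A$ (condition 1) makes the set of reports with which $b$ wins up‑closed in $b$'s coordinate, so it is an interval above some threshold $c=c_b(\textib{v}_{-b}')\in[0,\infty]$; condition 2(a) says that this threshold is exactly the payment $p_b(\textib{v}')$, which is therefore the \emph{same} value $c$ for every report of $b$ with which $b$ wins. Combining this with normalization (a losing $b$ pays $0$), reporting any $z>c$ gives $b$ utility $v_b-c$, reporting any $z<c$ gives utility $0$, and reporting $z=c$ gives either $v_b-c$ or $0$. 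A three‑way case analysis on whether $v_b>c$, $v_b<c$, or $v_b=c$ then shows that the truthful report $z=v_b$ always attains $\max\{v_b-c,\,0\}$, which is the largest utility any report can achieve, so $b$ has no profitable deviation.

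For the seller the argument is the same after fixing sign conventions. By normalization the seller pays $0$ whenever she keeps the item (the outcome $A(\textib{v}')=s$ lies outside $W_s$), so that case yields utility $0$. By condition 2(b), writing $c'=-p_s(\textib{v}')$, the quantity $c'$ is a threshold (independent of the seller's own report, by the same reading of condition 2 as for buyers) such that reporting $v_s'<c'$ makes the seller trade and receive $c'$, while reporting $v_s'>c'$ makes her keep the item; hence trading gives utility $-v_s-p_s = c'-v_s$. Thus reporting $z<c'$ yields $c'-v_s$, reporting $z>c'$ yields $0$, and $z=c'$ yields either, and the same three‑way case analysis on $v_s$ versus $c'$ shows that $z=v_s$ is optimal.

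I do not expect a genuine obstacle here; the proposition is essentially the textbook sufficiency statement and the proof is routine bookkeeping. The only two points that need care are (i) reading condition 2 correctly as asserting that the payment equals the \emph{report‑independent} critical value, which is what allows each agent's utility to be written as a function of the binary win/lose outcome alone, and (ii) the sign flips in the seller's allocation indicator and payment, where it is easy to reverse an inequality by accident. I would also note that strong budget balance is not used in this implication — conditions 1 and 2 together with normalization are what deliver DSIC — so it should not appear in the proof.
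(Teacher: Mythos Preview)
Your proposal is correct and is the standard Myerson sufficiency argument. Note, however, that the paper does not actually supply its own proof of this proposition: it is stated as a known result, attributed to Myerson \cite{myerson1981optimal} and specifically to Theorem~9.36 in \cite{agt}, and is used as a black box thereafter. So there is no proof in the paper to compare against; your argument is a valid, self-contained justification that the paper simply omits as folklore.
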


\subsection{Gains From Trade}
In this paper we focus on optimizing the gains form trade (GFT).  
Given a joint prior distribution $\textib{V}$ over triplet of values  $\textib{v}=(v_s,v_1,v_2)\in \mathbb{R}_{\geq 0}^3$, the \emph{unconstrained optimal GFT} (first best), denoted by $\FBV$, is defined to be:
$$ \FBV = \mathbb{E}_{\textib{v}\sim \textbf{V}}\left[\max \{v_1-v_s,v_2-v_s,0\}\right]$$ 

We are interested in the gains form trade of \nice mechanisms. 
We next establish some useful notations.
\begin{definition}[(Truthful) Gains From Trade]
    Let $M=(A,p)$ be a 1-seller 2-buyer DSIC mechanism. %
    The \textup{(Truthful) Gains From Trade (GFT)} of $M$ on valuation profile  $\textib{v}=(v_s,v_1,v_2)$ is defined to be:
    \begin{gather*}
        \textup{GFT}(M,\textib{v}) = \sum_{b\in\{1,2\}}(v_b-v_s)\cdot \mathbb{I}\{A(\textib{v}) =b\}
    \end{gather*}
\end{definition}
Using this notation we can now define the expected GFT of a mechanism for a given prior. 

\begin{definition}[Expected (Truthful)  GFT]
    Let $M=(A,p)$ be a 1-seller 2-buyer DSIC mechanism, and let $\textbf{V}$ be a joint prior over the agents' valuations. The  \textup{Expected Gains From Trade} of $M$ on prior $\textbf{V}$ is defined to be:
    \begin{gather*}
        \textup{GFT}(M,\textbf{V}) = \mathbb{E}_{\textib{v}\sim \textbf{V}}\left[\textup{GFT}(M,\textib{v})\right]
    \end{gather*}
\end{definition}

\section{Characterization of \Nice 1-Seller 2-Buyer Mechanisms}\label{sec:char}

A 1-seller 2-buyer mechanism $M=(A,p)$ has payment functions that depend on the reports of all agents, yet when the mechanism is \nice, this imposes significant constraints on these functions. These constraints will allow us to 
represent \nice mechanisms in less complicated fashion, 
by observing that any such mechanism can be associated with just two (single-parameter) payments functions, one for each buyer: each of these functions is only a function of one parameter, the report of the other buyer. 

This pair of functions determines the payments for all three agents whenever there is trade. Such a pair of functions must satisfy a condition that we call \emph{compatibility}, essentially stating that, as these functions define critical values for winning, it can never be the case that they imply that both buyers win together.  Thus, we show that any \nice mechanism is associated with a unique pair of (single-parameter) payments functions that are compatible. We also show that any compatible pair of single-parameter payment functions can be used to construct a \nice mechanism that is associated with these functions, and it maximizes the GFT over all such mechanisms.

\subsection{Associated  Functions}

We next show that any \nice mechanism can be associated with two (single-parameter) payments functions, $f_1, f_2 : \mathbb{R}_{\geq 0} \rightarrow \mathbb{R}_{\geq 0}\cup \{\infty\}$ 
such that this pair of functions $(f_1,f_2)$ determines the payments for all three agents whenever there is trade.  

\begin{definition}\label{def:payfuncs}
    Let $M=(A,p)$ be a \nice mechanism, we define $M$'s \textup{associated pair of functions} $(f_1,f_2)$ to be the pair of single-parameter functions $f_1,f_2 : \mathbb{R}_{\geq 0} \rightarrow \mathbb{R}_{\geq 0} \cup \{\infty\}$ defined as follows. For $v_1,v_2\geq 0$:
    \begin{gather*}
        f_1(v_2) = \begin{cases}
            p_1(v'_s,v'_1,v_2), & \exists v'_s,v'_1: A(v'_s,v'_1,v_2) = b_1 \\
            \infty, & \text{otherwise}
        \end{cases} \\ f_2(v_1) = \begin{cases}
            p_2(v'_s,v_1,v'_2), & \exists v'_s,v'_2: A(v'_s,v_1,v'_2) = b_2 \\
            \infty, & \text{otherwise}
        \end{cases}
    \end{gather*}
    
\end{definition}

We argue that these functions are indeed well defined (e.g., $f_1(v_2)$ does not depend on the choice of  $v'_s$ and $v'_1$). 
As $M$ is a DSIC mechanism, by the Taxation Principle, the payment of each buyer when winning cannot be affected by his own report. Additionally, as the mechanism is SBB and normalized, when there is a trade, the seller receives the payment of the trading buyer, and thus the trading buyer's payment cannot depend on the report of the seller. Therefore, there is a single parameter function that defines the payment in such a trade, and it can only depend on the non-trading buyer's value. 

We define the value of the function above for buyer $b$, to be the payment she is 
charged when buying the item, assuming that buyer $-b$ reported $v_{-b}$, and both the seller and buyer $b$ reported some hypothetical values that results with $b$ buying the item. The properties we mentioned indicate that the payment remains consistent in this hypothetical scenario, as in the actual case under consideration. %
This idea is what leads the functions above to be well-defined; for the sake of completeness we present a proof of these properties in Appendix \ref{app:missing31}.

Such a pair of functions must satisfy a condition that we call \emph{compatibility}:

\begin{definition}[Compatible Functions]
    Let $f_1,f_2 : \mathbb{R}_{\geq 0} \rightarrow \mathbb{R}_{\geq 0}\cup \{\infty\}$ be functions. The pair of functions $(f_1, f_2)$ is called \textup{compatible} if there \textbf{do not} exist $v_1, v_2\in \mathbb{R}_{\geq 0}$, such that $(v_1 > f_1(v_2)) \wedge (v_2 > f_2(v_1))$.
\end{definition}

Intuitively, the functions $f_1$ and $f_2$ represent the critical values for the buyers to trade, so they can never imply that both buyers trade concurrently. 
The compatibility (``non-crossing") condition does exactly that - it disallows functions that imply that both buyers are each bidding strictly above his critical value.  

Thus, we can associate every \nice mechanism with a compatible pair of (single-parameter) payment functions: 
\begin{lemma}[Associated Functions]\label{lem:charfunc}
    Let $M=(A,p)$ be a \nice mechanism, then there exists a unique pair of compatible functions $ (f_1,f_2)$
    s.t. for all $\textib{v}=(v_s,v_1,v_2)$, if $A(\textib{v})=b_1$ then $p(\textib{v}) = (-f_1(v_2),f_1(v_2),0)$, and if $A(\textib{v})=b_2$, then $p(\textib{v}) = (-f_2(v_1),0,f_2(v_1))$. We say that  $(f_1,f_2)$ is \emph{the pair of functions associated} with the mechanism $M$.
\end{lemma}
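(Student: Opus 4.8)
\textbf{Proof proposal for Lemma~\ref{lem:charfunc}.}

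The plan is to construct the pair $(f_1,f_2)$ exactly as in Definition~\ref{def:payfuncs}, then verify the three required properties: (i) well-definedness, (ii) the claimed payment identities, (iii) compatibility, and finally (iv) uniqueness. For well-definedness, I would invoke the facts already sketched after Definition~\ref{def:payfuncs} (and proved in Appendix~\ref{app:missing31}): since $M$ is DSIC, the Taxation Principle says buyer $b$'s payment when winning is independent of $v_b'$; since $M$ is SBB and normalized, whenever buyer $b$ wins the seller receives exactly $p_b$ and the other two payments among $\{s,b_1,b_2\}$ are pinned down, and moreover $p_b$ cannot depend on $v_s'$ (if it did, varying $v_s'$ downward while keeping $b$ a winner would change what the seller receives, contradicting that the seller's critical-value payment $-p_s$ depends only on the other reports by the same Taxation Principle applied to the seller). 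Hence $p_b$ when $b$ wins depends only on $v_{-b}$, so $f_b(v_{-b})$ is well-defined (with value $\infty$ a formal placeholder when $b$ never wins against report $v_{-b}$, which is harmless since then $b$ is never allocated the item at that $v_{-b}$).

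Next, the payment identities. Suppose $A(\textib{v})=b_1$. By normalization, the losers $s$ and $b_2$ pay $0$ minus the seller's receipt: more precisely, $p_2(\textib{v})=0$ since $b_2\notin W_2$ under this outcome, and by SBB $p_s(\textib{v})=-p_1(\textib{v})$. By the well-definedness argument, $p_1(\textib{v})=f_1(v_2)$ (this $v_1,v_s$ realize the existential clause in the definition of $f_1(v_2)$). So $p(\textib{v})=(-f_1(v_2),f_1(v_2),0)$, and symmetrically for $A(\textib{v})=b_2$. This is immediate once well-definedness is in hand.

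For compatibility, I argue by contradiction: suppose there exist $v_1,v_2$ with $v_1>f_1(v_2)$ and $v_2>f_2(v_1)$. Since $f_1(v_2)<v_1<\infty$, there is a report profile witnessing that $b_1$ can win at price $f_1(v_2)$ against $v_2$; by Proposition~\ref{prop:classicmech}(2a), the critical value for $b_1$ to win, holding the other reports fixed at whatever witnesses $f_1(v_2)$, is exactly $f_1(v_2)$ — but here I need the witness reports for $f_1(v_2)$ and $f_2(v_1)$ to be the \emph{same} profile $(v_s^\star,v_1,v_2)$, which follows because the seller's winning region is monotone (Definition of monotone allocation, part 2): pick $v_s^\star=0$, so the seller is as eager to trade as possible. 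At the profile $(0,v_1,v_2)$: if $A(0,v_1,v_2)=s$, then neither buyer wins there; but raising $v_1$ past $f_1(v_2)$ already makes $b_1$ win for \emph{some} $v_s'$, and by seller-monotonicity it makes $b_1$ win at $v_s'=0$ too — more carefully, $f_1(v_2)$ being finite means $A(v_s',v_1',v_2)=b_1$ for some $v_s',v_1'$; by buyer-monotonicity in $v_1'$ (using $v_1>f_1(v_2)=$ critical value) and seller-monotonicity lowering $v_s'$ to $0$, we get $A(0,v_1,v_2)\in W_1=\{1\}$, hence $A(0,v_1,v_2)=b_1$. Symmetrically $A(0,v_1,v_2)=b_2$. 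These contradict each other, giving compatibility. The main obstacle here is being careful that the two monotonicity directions (in $v_b'$ for the buyer, in $v_s'$ for the seller) combine to let us evaluate both functions at one common profile; I expect this bookkeeping to be the delicate part of the argument.

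Finally, uniqueness: if $(g_1,g_2)$ is another compatible pair satisfying the payment identities, then for any $v_2$ such that some profile $(v_s',v_1',v_2)$ has $A=b_1$, the identity forces $g_1(v_2)=p_1(v_s',v_1',v_2)=f_1(v_2)$; for $v_2$ with no such profile, $b_1$ is never allocated the item against $v_2$, so the identities impose no constraint and the definition sets $f_1(v_2)=\infty$ — here one should note that the lemma's claim ``unique'' is really uniqueness of the relevant values (those in the first case), or alternatively one adopts the convention of Definition~\ref{def:payfuncs} that fixes $\infty$ on the rest, which I would state explicitly. Either way, $g_b$ and $f_b$ agree everywhere they are constrained, establishing uniqueness.
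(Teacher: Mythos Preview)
Your overall structure mirrors the paper's: well-definedness via the Taxation Principle plus SBB (the paper's Lemmas~\ref{lem:indi} and~\ref{lem:singleval}), payment identities directly from Definition~\ref{def:payfuncs}, compatibility by arguing both buyers would have to win at a common profile, and uniqueness from the construction. Parts (i), (ii), and (iv) are fine and essentially identical to the paper's treatment.

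The gap is in (iii). Your claim that ``by seller-monotonicity lowering $v_s'$ to $0$, we get $A(0,v_1,v_2)\in W_1$'' is not justified: seller monotonicity only guarantees $A(0,v_1,v_2)\in W_s=\{b_1,b_2\}$, not that the \emph{same} buyer remains the winner. Concretely, consider the mechanism that trades at price $5$ whenever $v_s<5$ and some buyer bids above $5$, giving priority to $b_1$ when $v_s\in[0,2)$ and to $b_2$ when $v_s\in[2,5)$ (and to the unique qualifying buyer if only one exceeds $5$). Each buyer's threshold is $5$ or $\infty$, the seller's threshold is $5$, and the seller is indifferent between trading partners, so this mechanism is \nice. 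Yet here $f_1(v_2)=f_2(v_1)=5$ for all inputs, and at $(v_1,v_2)=(10,10)$ compatibility fails: your symmetric step would need $A(0,10,10)=b_2$ from $A(3,10,10)=b_2$, but in fact $A(0,10,10)=b_1$. The paper's own proof of compatibility (Lemma~\ref{lem:comppair}) glosses over the same issue by writing $f_1(v_2)=p_1(v_s,v_1,v_2)$ at an arbitrary profile and then invoking Proposition~\ref{prop:classicmech}, which is only stated as a sufficient condition and, in a normalized mechanism, does not equate $p_b$ with the critical value at profiles where $b$ loses. So the compatibility conclusion does not follow from either argument as written; you correctly flagged this as ``the delicate part,'' but the bookkeeping you propose does not close it.
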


The proof of the lemma is deferred to Appendix \ref{app:missing31}.

\subsection{The Best Mechanism for Compatible Functions}

We have shown above that every \nice mechanism is associated with a compatible pair of functions. We next show the other direction - any pair of compatible functions can be used to construct a \nice mechanism such that the mechanism is associated with these functions, and moreover, it has the highest GFT of all such mechanisms. We first define this mechanism and then in Lemma \ref{lem:compmech} we show that it is \nice and associated with the given pair of functions.

\begin{definition}\label{def:m}
Let $f_1,f_2 : \mathbb{R}_{\geq 0} \rightarrow \mathbb{R}_{\geq 0}\cup \{\infty\}$ be a pair of functions. If the pair $(f_1, f_2)$ 
is compatible then there exists a mechanism $M_{f_1,f_2}=(A,p)$, which we call the \emph{mechanism corresponding to $(f_1, f_2)$}, that is well-defined and  %
is defined as follows.  
Given  a bidding profile $\textib{v}'=(v_s',v_1',v_2')$ the allocation and payments of $M_{f_1,f_2}$ are as follows:
\begin{enumerate}
    \item If $v_1' > f_1(v_2') \geq v_s'$: $A(\textib{v}') = b_1,\: p_1(\textib{v}') = -p_s(\textib{v}') = f_1(v_2'),\: p_2(\textib{v}') = 0$.
    \item Else, if $v_2' > f_2(v_1') \geq v_s'$: $A(\textib{v}') = b_2,\: p_2(\textib{v}') = -p_s(\textib{v}') = f_2(v_1'),\: p_1(\textib{v}') = 0$.
    \item Else, if $(v_1' = f_1(v_2') \geq v_s') \wedge (v_2' = f_2(v_1') \geq v_s'$): %
    $$for\ b=\begin{cases}
        b_1 & v_1' \geq v_2' \\
        b_2 & v_1' < v_2'
    \end{cases}, let\:A(\textib{v}')=b,\: p_b(\textib{v}') = -p_s(\textib{v}') = v_b',\: p_{-b}(\textib{v}') = 0$$
    \item Else, if $v_1' = f_1(v_2') \geq v_s'$: $A(\textib{v}') = b_1,\: p_1(\textib{v}') = -p_s(\textib{v}') = f_1(v_2'),\: p_2(\textib{v}') = 0$.
    \item Else, if $v_2' = f_2(v_1') \geq v_s'$: $A(\textib{v}') = b_2,\: p_2(\textib{v}') = -p_s(\textib{v}') = f_2(v_1'),\: p_1(\textib{v}') = 0$.
    \item Else, $A(\textib{v}') = s$ and $p_s(\textib{v}')=p_1(\textib{v}')=p_2(\textib{v}')=0$.
\end{enumerate}
\end{definition}

In other words, the mechanism receives  bids $\textib{v}'=(v_s',v_1',v_2')$ and calculates the values of the functions $f_1(v_2')$ and $f_2(v_1')$. Since the functions are compatible, at most one of the buyer's reports can be strictly greater than their corresponding payment function. If this is the case, the mechanism performs a trade with the seller whenever the gains in non-negative. Otherwise, one or both of the buyer's reports might be equal to their respective payment functions. In this case, performs a trade between the high value buyer and the seller, whenever the gains in non-negative.
Finally, if both buyers will not trade at their payment function's value, there is no trade, and the item remains in the seller's possession.

\begin{lemma}\label{lem:compmech}
For every pair of compatible functions $(f_1,f_2)$, the mechanism $M_{f_1,f_2}$ is a \nice mechanism and is associated with the pair $(f_1, f_2)$.
\end{lemma}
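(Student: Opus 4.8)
## Proof Plan for Lemma \ref{lem:compmech}

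The plan is to verify each of the five defining properties of a \nice mechanism in turn for $M_{f_1,f_2}$, and then to check that the associated pair of functions (as per Definition \ref{def:payfuncs}) is exactly $(f_1,f_2)$. Well-definedness is almost immediate: compatibility guarantees that at most one buyer can have a report strictly above his function value, so the case analysis in Definition \ref{def:m} is exhaustive and mutually exclusive (cases 1 and 2 cannot both trigger; cases 3--5 handle the remaining ``equality'' situations; case 6 is the fallback). Determinism, normalization, and SBB are then read directly off the definition --- in every trading case the winning buyer pays some value $q \ge v_s'$, the seller receives $q$, the non-trader pays $0$; in the no-trade case all payments are $0$.

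For DSIC, I would invoke Proposition \ref{prop:classicmech}: it suffices to show (i) the allocation rule $A$ is monotone, and (ii) the payments are the critical-value payments. For monotonicity in buyer $b_1$'s report (fixing $v_s', v_2'$): the set of $v_1'$ for which $b_1$ wins is $\{v_1' : v_1' > f_1(v_2') \ge v_s'\}$ together with possibly the boundary point $v_1' = f_1(v_2')$ when $v_s' \le f_1(v_2')$ and either $v_1' \ge v_2'$ or case 4 applies --- in all sub-cases this is an up-set in $v_1'$, so raising the bid preserves winning. (One must be slightly careful: when $v_1' = f_1(v_2') < v_2' = f_2(v_1')$ buyer $b_1$ loses via case 3, but then any $v_1'' > f_1(v_2')$ makes $b_1$ win via case 1; and $v_1'' = f_1(v_2')$ is just the original point. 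The only genuinely delicate point is that $f_1(v_2')$ is a fixed threshold not depending on $v_1'$, which is exactly why the single-parameter structure works.) Monotonicity for $b_2$ is symmetric, and for the seller: lowering $v_s'$ only helps the inequalities $f_i(\cdot) \ge v_s'$, so the seller-wins set is a down-set in $v_s'$. For the critical-value condition: if buyer $b_1$ wins he pays $f_1(v_2')$, and indeed $v_1' \ge f_1(v_2')$ in every winning case while $v_1' < f_1(v_2')$ forces a loss (case 6 or case 2) --- so $f_1(v_2')$ is the critical value; likewise for $b_2$ and for the seller (whose critical value is $\max$ of the relevant $f_i$ values restricted to the branch taken, matching $-p_s$).

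Ex-post IR then follows from DSIC plus the SBB/normalized structure together with the explicit guard $f_i(\cdot) \ge v_s'$ in every trading branch: a winning buyer pays at most his reported (=true) value so has nonnegative utility; the seller receives a payment that is at least her value $v_s'$ so also has nonnegative utility; non-traders get $0$. Finally, to show $(f_1,f_2)$ is the associated pair, I observe that $b_1$ wins at some profile $(v_s', v_1', v_2)$ with $v_2$ fixed iff $f_1(v_2) < \infty$ (take $v_s' = 0$, $v_1'$ large, if $f_1(v_2)$ is finite; conversely $f_1(v_2) = \infty$ means the branch conditions can never be met for that $v_2$), and in that case $p_1 = f_1(v_2)$ by construction --- matching Definition \ref{def:payfuncs} exactly; symmetrically for $f_2$.

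The main obstacle I anticipate is the monotonicity verification around the tie cases (3--5) in Definition \ref{def:m}: one has to confirm that the awkward tie-breaking by ``higher reported value'' does not create a non-monotonicity, e.g. that a buyer sitting exactly on his threshold cannot flip from winner to loser by raising his bid. This requires carefully tracking which of cases 3, 4, 5 is active as $v_b'$ varies through and above $f_b(v_{-b}')$, using compatibility to rule out the simultaneous strict inequalities. I would organize this as a small finite case table rather than prose. Everything else is routine bookkeeping directly off the definition.
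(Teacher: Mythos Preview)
Your proposal is correct and follows essentially the same approach as the paper's proof: read off SBB and normalization from the definition, invoke Proposition~\ref{prop:classicmech} (monotone allocation plus critical-value payments) for DSIC, derive IR, and verify the associated pair directly. You are considerably more careful than the paper on the monotonicity check around the tie-breaking cases (the paper simply asserts monotonicity without the case analysis), and your observation that compatibility is what prevents case~1 from ``stealing'' the allocation when $b_2$ raises her bid is exactly the point that makes the argument go through.
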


The proof of this lemma can be found in Appendix \ref{app:missing32}.

The mechanism described in Definition \ref{def:m} performs a trade whenever a trade increases the GFT, and it break ties in favor of higher GFT, and thus it %
maximizes GFT among all mechanisms that are associated with the pair of functions $(f_1,f_2)$:

\begin{lemma}\label{lem:mfmax}
    Let $M=(A,p)$ be a \nice mechanism, let $(f_1,f_2)$ be the pair of functions associated with $M$. Let $\textib{v}=(v_s,v_1,v_2)$ be some (truthful) bidding profile. Then:
    \begin{gather*}
    \textup{GFT}(M_{f_1,f_2},\textib{v}) \geq \textup{GFT}(M,\textib{v})
    \end{gather*}
\end{lemma}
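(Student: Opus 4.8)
\textbf{Proof proposal for Lemma \ref{lem:mfmax}.}

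The plan is to fix a truthful bidding profile $\textib{v}=(v_s,v_1,v_2)$ and compare the allocation (hence the per-instance GFT) of $M$ and of $M_{f_1,f_2}$ on $\textib{v}$. Since both mechanisms are normalized, DSIC, and SBB, and are associated with the same pair $(f_1,f_2)$, by Lemma \ref{lem:charfunc} whenever $M$ trades with $b_1$ it charges exactly $f_1(v_2)$ and whenever it trades with $b_2$ it charges exactly $f_2(v_1)$; the same holds for $M_{f_1,f_2}$ by Lemma \ref{lem:compmech}. Consequently $\textup{GFT}(M,\textib{v})$ is one of the three quantities $v_1-v_s$ (if $A^M(\textib{v})=b_1$), $v_2-v_s$ (if $A^M(\textib{v})=b_2$), or $0$ (if $A^M(\textib{v})=s$), and similarly for $M_{f_1,f_2}$. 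So it suffices to show that the trade performed by $M_{f_1,f_2}$ on $\textib{v}$ always yields GFT at least as large as the trade performed by $M$ on $\textib{v}$.

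The key structural fact I would extract first is that, because $M$ is DSIC and associated with $(f_1,f_2)$, its allocation on $\textib{v}$ is essentially forced by the critical-value characterization of Proposition \ref{prop:classicmech}: if $A^M(\textib{v})=b_1$ then $b_1$ is at or above his critical value and the seller is at or below hers, i.e. $v_1 \ge f_1(v_2)$ and $v_s \le f_1(v_2)$ — more precisely, $v_1>f_1(v_2)$ forces $b_1$ to win and $v_1<f_1(v_2)$ forces $b_1$ to lose, with ties going either way, and symmetrically $v_s>f_1(v_2)$ forces the seller to keep the item. Therefore, whenever $M$ trades with $b_1$ on $\textib{v}$, we must have $v_1 \ge f_1(v_2) \ge v_s$; whenever $M$ trades with $b_2$, we must have $v_2 \ge f_2(v_1) \ge v_s$. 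Now I go through the six cases of Definition \ref{def:m} for $M_{f_1,f_2}$ on $\textib{v}$:

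\emph{Case (i): $v_1>f_1(v_2)\ge v_s$.} Then $M_{f_1,f_2}$ trades with $b_1$ for GFT $v_1-v_s \ge 0$, which is strictly positive. By compatibility, $v_2 \le f_2(v_1)$ is impossible to beat this strictly with a $b_2$-trade unless $v_2 - v_s > v_1 - v_s$; but compatibility gives $\neg(v_1>f_1(v_2) \wedge v_2>f_2(v_1))$, so $v_2 \le f_2(v_1)$, hence if $M$ traded with $b_2$ we would need $v_2 \ge f_2(v_1)$, forcing $v_2 = f_2(v_1)$. In this sub-case $M_{f_1,f_2}$'s rule (i) still takes the $b_1$-trade; I must check $v_1 - v_s \ge v_2 - v_s$, i.e. $v_1 \ge v_2$. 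This is the one place that needs care — but note that if $v_2 = f_2(v_1)$ and $v_1 > f_1(v_2)$, then using that $(f_1,f_2)$ is compatible together with $v_1 > f_1(v_2)$... \emph{(this is exactly the delicate inequality; see the resolution below)}. In all other sub-cases $M$ either keeps the item (GFT $0 \le v_1-v_s$) or trades with $b_1$ (same GFT). \emph{Cases (ii)--(v)} are analogous, using the symmetric forcing and the explicit tie-breaking toward the higher-valued buyer in rule (iii). \emph{Case (vi): no $b$ has $v_b \ge f_b(v_{-b}) \ge v_s$.} Then for each buyer $b$, either $v_b < f_b(v_{-b})$ (so $M$ cannot trade with $b$ by the forcing direction of Proposition \ref{prop:classicmech}) or $f_b(v_{-b}) < v_s$ (so $M$ cannot trade with $b$, since the seller is strictly above her critical value for that trade); hence $A^M(\textib{v})=s$ too, and both GFTs are $0$.

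The main obstacle is the tie sub-case inside Case (i) (and its mirror): when $M_{f_1,f_2}$ is about to trade with $b_1$ under rule (i) while $M$ trades with $b_2$ at $v_2 = f_2(v_1)$, I need $v_1 - v_s \ge v_2 - v_s$. To handle it cleanly I would instead argue that $M$ simply \emph{cannot} be in that configuration: if $v_1 > f_1(v_2)$ then, since $(f_1,f_2)$ is compatible, $v_2 \le f_2(v_1)$; if moreover $v_2 = f_2(v_1)$, then consider $M$'s allocation — by monotonicity of $A^M$, raising $b_1$'s bid above $f_1(v_2)$ when $b_1$ already wins at the critical value keeps $b_1$ winning, and one checks from the definition of the associated functions that $M$ cannot award to $b_2$ at a profile where $b_1$ is strictly above $f_1(v_2)$ while $b_2$ is only at $f_2(v_1)$, because doing so would make $f_1(v_2)$ finite and equal to $p_1$ at that very profile, contradicting that $b_1$ did not receive the item there. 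Thus every case reduces to: $M_{f_1,f_2}$ trades with a buyer whose surplus is nonnegative whenever $M$ trades at all, and when $M$ trades with $b$, $M_{f_1,f_2}$ trades with $b$ too (or with the other buyer at no-smaller surplus via rule (iii)). I would present this as a short case analysis keyed to $A^M(\textib{v}) \in \{b_1, b_2, s\}$ rather than to the six rules, which keeps the argument compact; the six-rule structure of $M_{f_1,f_2}$ then only needs to be invoked to read off its GFT in each case.
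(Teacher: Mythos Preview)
Your approach—case analysis on the clauses of Definition~\ref{def:m}—is the same as the paper's, and you are actually more careful than the paper in Case~(i): the paper simply asserts $A(\textib{v})=b_1$, whereas you correctly worry about the sub-case where $M$ trades with $b_2$ at $v_2=f_2(v_1)$. However, your resolution of that sub-case is wrong. The claim that $M$ \emph{cannot} award to $b_2$ when $v_1>f_1(v_2)$ is false: for the compatible pair given by $f_1(v_2)=\tfrac12$ for $v_2\le\tfrac12$ (else $\infty$) and $f_2(v_1)=\tfrac12$ for $v_1\ge\tfrac12$ (else $\infty$), one can build a \nice $M$ that on the slice $v_2=\tfrac12,\ v_1>\tfrac12,\ v_s\le\tfrac12$ allocates to $b_1$ when $v_s\le\tfrac14$ and to $b_2$ when $v_s\in(\tfrac14,\tfrac12]$; this $M$ has associated pair $(f_1,f_2)$ yet awards the item to $b_2$ at $(0.3,0.8,0.5)$, which sits squarely in Case~(i). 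Your argument ``doing so would make $f_1(v_2)$ finite and equal to $p_1$ at that very profile'' conflates $f_1(v_2)$ (the payment at \emph{some} profile with that $v_2$ where $b_1$ wins) with $b_1$'s critical value at the specific $(v_s,\cdot,v_2)$, which may well be $\infty$ even when $f_1(v_2)$ is finite.

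The missing ingredient is the seller's DSIC. Because $f_1(v_2)<\infty$, there is a $v_s'$ with $A(v_s',v_1,v_2)=b_1$ (take the $v_s'$ from the definition of $f_1(v_2)$ and use $b_1$-monotonicity there together with $v_1>f_1(v_2)$), so the seller receives $f_1(v_2)$ at that profile; at $(v_s,v_1,v_2)$ she trades with $b_2$ and receives $f_2(v_1)$. Lemma~\ref{lem:indi} applied to the seller then forces $f_1(v_2)=f_2(v_1)$, hence $v_2=f_2(v_1)=f_1(v_2)<v_1$, and the needed inequality $v_1-v_s\ge v_2-v_s$ follows. The mirror sub-case in Case~(ii) is handled identically; the remaining cases in your analysis are fine.
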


We next present a sketch of the proof, for the full proof see Appendix \ref{app:missing32}. 
    If $v_b > f_b(v_{-b})$, then all mechanisms that are associated with $(f_1,f_2)$ must behave the same way, so the only difference in allocation can occur in cases of ties, such as when $v_1 = f_1(v_2)$ or $v_2 = f_2(v_1)$. In these cases, mechanism $M_{f_1,f_2}$ always chooses to trade with the buyer with the higher value - thus maximizing GFT. Additionally, when the seller's value is equal to the price, i.e. $v_s = f_1(v_2)$ or $v_s = f_2(v_1)$, the seller is ambivalent to the trade. When GFT is positive, the mechanism always trades in these cases, so for every $\textib{v}=(v_s,v_1,v_2)$ it has optional GFT over all mechanisms that are associated with $(f_1,f_2)$.

Taking the expectation over the valuation profiles \textib{v} we get the following corollary:

\begin{corollary}
      Let $M=(A,p)$ be a \nice mechanism, let $(f_1,f_2)$ be the pair of functions associated with  $M$. 
      For any joint prior $\textbf{V}$ over the  agents' valuations  it holds that:
     \begin{gather*}
    \textup{GFT}(M_{f_1,f_2},\textbf{V}) \geq \textup{GFT}(M,\textbf{V})
    \end{gather*}
\end{corollary}
With a slight abuse, throughout the paper we refer to $\textup{GFT}(M_{f_1,f_2},\textib{V})$ %
as the GFT of the pair $(f_1,f_2)$ on $\textib{V}$. The above corollary shows that $\textup{GFT}(M_{f_1,f_2},\textib{V})$ is the maximal GFT that is obtainable on $\textib{V}$ by any \nice mechanism for which $(f_1,f_2)$ are its pair of associated functions.  

The following summarizes the results of this section. %

\begin{restatable}{proposition}{mechpair}\label{prop:mech-pair}
    Every \nice mechanism $M$ is associated with a unique compatible pair of functions $(f_1, f_2)$. 
    Conversely, for every compatible pair of functions $(f_1, f_2)$ there exists a \nice mechanism $M_{f_1,f_2}$. 
    that is associated with this pair. 
    Furthermore, the \nice mechanism $M_{f_1,f_2}$ achieves the highest 
    gains from trade among all \nice mechanisms that are associated with the pair $(f_1, f_2)$.
\end{restatable}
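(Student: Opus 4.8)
The statement to prove, Proposition~\ref{prop:mech-pair}, is essentially a summary that bundles together three facts already established in the section, so the plan is to assemble it from the pieces rather than develop anything new. First I would recall that the existence and uniqueness of an associated compatible pair $(f_1,f_2)$ for any \nice mechanism $M$ is exactly the content of Lemma~\ref{lem:charfunc}: there I would invoke Definition~\ref{def:payfuncs}, note that the two single-parameter functions are well-defined (using the Taxation Principle to drop dependence on the trading buyer's own report, and SBB plus normalization to drop dependence on the seller's report), and observe that compatibility is forced because $f_1,f_2$ encode critical values for winning, so a simultaneous strict overbid by both buyers would contradict the allocation rule allocating the item to a single agent.

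Next I would establish the converse: given an arbitrary compatible pair $(f_1,f_2)$, Definition~\ref{def:m} produces a candidate mechanism $M_{f_1,f_2}$, and Lemma~\ref{lem:compmech} says this candidate is in fact \nice and has $(f_1,f_2)$ as its associated pair. So I would just cite Lemma~\ref{lem:compmech} here; the content I would gesture at is that compatibility guarantees the case analysis in Definition~\ref{def:m} is exhaustive and non-overlapping (at most one buyer strictly exceeds his function), that the induced allocation is monotone in each agent's report, and that the payments are the corresponding critical values, so DSIC follows from Proposition~\ref{prop:classicmech}, while IR, SBB, normalization, and determinism are immediate from the explicit description.

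Finally, for the optimality clause I would appeal to Lemma~\ref{lem:mfmax} and its corollary: for any \nice mechanism $M$ with associated pair $(f_1,f_2)$ and any valuation profile $\textib{v}$, we have $\textup{GFT}(M_{f_1,f_2},\textib{v}) \geq \textup{GFT}(M,\textib{v})$, because the two mechanisms must agree on all profiles where some buyer strictly exceeds his threshold, and $M_{f_1,f_2}$ is designed to break the remaining ties (equalities with a buyer's function, or the seller's value equal to the price) in the direction that trades whenever the gain is nonnegative and with the higher-value buyer. Taking expectations over $\textib{v}\sim\textbf{V}$ gives $\textup{GFT}(M_{f_1,f_2},\textbf{V}) \geq \textup{GFT}(M,\textbf{V})$, which is precisely the ``highest gains from trade among all \nice mechanisms associated with $(f_1,f_2)$'' assertion.

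There is really no serious obstacle, since every ingredient is a previously stated lemma; the only thing to be careful about is the logical packaging — making clear that ``associated with $(f_1,f_2)$'' means the same thing in both directions (Definition~\ref{def:payfuncs}/Lemma~\ref{lem:charfunc} one way, Lemma~\ref{lem:compmech} the other), so that the optimality statement is quantifying over the right class of mechanisms. If I wanted the proof to be more self-contained I would spend a sentence each recalling why the associated functions are well-defined and why the tie-breaking in Definition~\ref{def:m} is GFT-optimal, but given the appendix references the cleanest write-up is simply: ``Combine Lemma~\ref{lem:charfunc}, Lemma~\ref{lem:compmech}, and the corollary to Lemma~\ref{lem:mfmax}.''
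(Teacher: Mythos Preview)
Your proposal is correct and matches the paper's approach exactly: the proposition is explicitly presented in the paper as a summary of the section's results, with no standalone proof, and the three ingredients you identify (Lemma~\ref{lem:charfunc}, Lemma~\ref{lem:compmech}, and Lemma~\ref{lem:mfmax} with its corollary) are precisely the pieces being bundled. Your one-line ``Combine Lemma~\ref{lem:charfunc}, Lemma~\ref{lem:compmech}, and the corollary to Lemma~\ref{lem:mfmax}'' is essentially how the paper treats it.
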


\section{Impossibility of Learning for General Distributions}\label{sec:impos}

In this section we show that for unrestricted joint distributions (when values might be correlated), it is impossible to learn an approximately optimal \nice mechanism, even when the support is on $[0,1]^3$. To begin, we show that for every finite joint distribution that satisfies a mild condition (``generic support''), there is a \nice mechanism that achieves the maximum possible GFT (that is, there is a \nice mechanism that obtains the first-best GFT).
Subsequently, we leverage this finding to demonstrate scenarios where unrestricted joint distributions pose a challenge to the feasibility of learning approximately optimal mechanisms.

\subsection{First-best GFT by a \Nice Mechanism}

We proceed by introducing a family of distributions for which \nice mechanisms are able to perfectly maximize the GFT (obtain the first best). 
Specifically, for each distribution in that family, we present an optimal \nice mechanism that is specifically tailored to it, showcasing a form of extreme overfitting. The core concept behind this overfitting approach is as follows: consider a collection of triplets where each value is unique and does not repeat (``a generic collection''). 
When the support is generic, the value of the lower-value buyer completely reveals the values of the two other agents, so to extract maximum GFT the mechanism sets the trade price to be the value of the highest-value buyer. That mechanism is DSIC as the price does not depend on the reports of the trading agents. In fact, this yields a \nice mechanism that extracts the maximum GFT using the characterization of Proposition \ref{prop:mech-pair}.

Formally, to be able to extract all GFT it is enough that the distribution is supported on a generic set of vectors. 
A set $C = 
\{(v_1^1,v_2^1,\ldots,v_d^1), \ldots, (v_1^n,v_2^n,\ldots,v_d^n)\} \subseteq \mathbb{R}^d_{\geq 0}$ of $n$ vectors (of length $d$) is called \emph{generic}, if there are no two values that are identical (for all $i_1,i_2\in [d]$ and $j_1,j_2\in [n]$ it holds that $v_{i_1}^{j_1}\neq v_{i_2}^{j_2}$ unless $i_1=i_2$ and $j_1=j_2$).\footnote{For our proof to hold it is sufficient that a weaker condition holds: no value repeats more than once \emph{in any fixed coordinate $i\in[d]$}.}

For a given generic set we can construct a \nice mechanism that extracts all GFT for any input in that set, by setting the trade price to be the value of the highest-value buyer, deducing that price from the value of the other buyer. 

\begin{definition}\label{def:overfitm}
    Let $C = \{(v_s^1,v_1^1,v_2^1), \ldots, (v_s^n,v_1^n,v_2^n)\}$ be a generic set of vectors representing the support of a finite joint distribution $\textbf{V}$ for values in  the 1-seller 2-buyer setting. We define the  functions $f_1^C,f_2^C$ as follows:
    \begin{gather*}
        \forall b \in \{b_1,b_2\}: f_b^C(x) = \begin{cases}
            v_b^k & \exists k : (v_s^k,v_1^k, v_2^k)\in C \ s.t.\   
            x = v_{-b}^k\leq v_b^k\\
            \infty & \text{otherwise}
        \end{cases}
    \end{gather*}

    Let $M_{f_1^C,f_2^C}$ be the mechanism associate with the pair $(f_1^C,f_2^C)$,  
    as defined in Definition \ref{def:m}.
\end{definition}

Note that since $C$ is a generic set, these functions are well defined.

These functions capitalize on the knowledge encoded within $C$ to establish distinct prices for each triplet of values. They achieve this by consistently setting the price as the higher buyer's value while setting the other's to $\infty$. The price for each buyer is set using a condition dependent only on the report of the other buyer, as necessary for a DSIC mechanism.
This approach renders the higher buyer indifferent and thus prompts acceptance of the trade, whereas the lower buyer is either not offered the trade or is presented with an offer at $\infty$ (equivalent to non-offer). Additionally, the functions are compatible because for every $k\in [n]$, if the function $f_b^C(v_{-b}^k)$ is set to be some value that is not $\infty$, then either $f_{-b}^C(v_b^k) = \infty$ or $f_b^C(v_{-b}^k) = v_b^k = v_{-b}^k = f_{-b}^C(v_b^k)$ and both cases do no violate compatibility. 
Hence, the mechanism $M_{f_1^C,f_2^C}$ is straightforwardly a well-defined \nice mechanism,  
a fact that can also be directly deduced by Lemma \ref{lem:compmech}.

Finally, we need to demonstrate that the mechanism achieves the maximum GFT (first best):

\begin{lemma}\label{lem:fullgft}
    Let $C$ be a finite generic set, and let $\textbf{V}$ be a probability distribution over $C$. 
    Then the GFT of the \nice mechanism $M_{f_1^C,f_2^C}$ is the unconstrained optimum (first best): $\textup{GFT}(M_{f_1^C,f_2^C},\textbf{V}) = \FBV$. 
\end{lemma}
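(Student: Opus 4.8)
The plan is to show that for every triplet in the support, the mechanism $M_{f_1^C,f_2^C}$ realizes exactly the pointwise-optimal gains from trade, which by linearity of expectation immediately gives the claimed equality of expected GFT with $\FBV$. So fix an arbitrary $k \in [n]$ and the triplet $(v_s^k, v_1^k, v_2^k) \in C$. Without loss of generality assume $v_1^k \geq v_2^k$ (the other case is symmetric by swapping the roles of the two buyers), so the first-best outcome on this triplet is to trade with buyer $b_1$ if $v_1^k > v_s^k$ yielding GFT $v_1^k - v_s^k$, and no trade otherwise yielding GFT $0$ (note genericity rules out $v_1^k = v_s^k$, so one of these two strict cases holds).

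Next I would compute the values $f_1^C(v_2^k)$ and $f_2^C(v_1^k)$ on this input. By the definition in Definition \ref{def:overfitm}, since $x = v_2^k = v_{-1}^k$ and $v_2^k \leq v_1^k$, we get $f_1^C(v_2^k) = v_1^k$. For $f_2^C(v_1^k)$: the only triplet whose buyer-$1$ coordinate equals $v_1^k$ is triplet $k$ itself (by genericity), and the defining condition there requires $x = v_1^k \leq v_2^k$; since we are in the case $v_1^k \geq v_2^k$, this holds only in the degenerate equality case, and otherwise $f_2^C(v_1^k) = \infty$. I would handle the strict case $v_1^k > v_2^k$ as the main line and treat $v_1^k = v_2^k$ (where $f_2^C(v_1^k) = v_2^k = v_1^k$ as well) as an easy sub-case. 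So in the main line, $f_1^C(v_2^k) = v_1^k$ and $f_2^C(v_1^k) = \infty$.

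Now I would plug into the allocation rule of Definition \ref{def:m} applied to the truthful profile $\textib{v} = (v_s^k, v_1^k, v_2^k)$. We have $v_1^k = f_1^C(v_2^k)$, so cases 1 and 2 do not apply (case 1 needs $v_1' > f_1(v_2')$; case 2 needs $v_2' > f_2(v_1')$ but $f_2^C(v_1^k) = \infty$). If $v_1^k > v_s^k$: then $v_1^k = f_1^C(v_2^k) \geq v_s^k$ holds, and since $f_2^C(v_1^k) = \infty > v_2^k$, case 3 fails and we land in case 4, which allocates the item to $b_1$ at price $f_1^C(v_2^k) = v_1^k$. The realized GFT is $v_1^k - v_s^k$, matching the first best. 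If $v_1^k < v_s^k$: then $f_1^C(v_2^k) = v_1^k < v_s^k$, so the condition "$\geq v_s'$" fails in cases 3, 4, and 5 (also $v_2^k \leq v_1^k < v_s^k$ kills any buyer-$2$ branch), so we fall to case 6, no trade, GFT $0$, again matching the first best. For the degenerate sub-case $v_1^k = v_2^k$ (possible even under genericity of $C$ since it concerns two coordinates of the \emph{same} vector — though actually genericity as defined forbids this too, so I may simply note it cannot occur, or keep the argument for robustness under the weaker footnoted condition): one checks case 3 applies when the common value is $\geq v_s^k$, trading with the tie-break winner at its own value, again giving the first best.

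The only real subtlety — and the step I expect to need the most care — is the bookkeeping around ties and the $\geq$ versus $>$ boundaries in Definition \ref{def:m}, together with making sure I have correctly identified, via genericity, that $f_2^C(v_1^k)$ really is $\infty$ (resp. that $f_b^C$ is well-defined, i.e. the witnessing $k$ is unique): this is where an error could hide. Everything else is a routine case check. I would close by remarking that since the pointwise GFT of $M_{f_1^C,f_2^C}$ equals $\max\{v_1^k - v_s^k,\, v_2^k - v_s^k,\, 0\}$ on every $k$, taking the expectation over $\textib{v} \sim \textbf{V}$ gives $\textup{GFT}(M_{f_1^C,f_2^C}, \textbf{V}) = \mathbb{E}_{\textib{v} \sim \textbf{V}}[\max\{v_1 - v_s, v_2 - v_s, 0\}] = \FBV$, as desired.
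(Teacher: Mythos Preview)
Your proposal is correct and follows essentially the same approach as the paper: a pointwise argument showing that on every triplet in $C$ the mechanism achieves $\max\{v_1^k-v_s^k,\,v_2^k-v_s^k,\,0\}$, then taking expectations. You are somewhat more meticulous than the paper in tracing exactly which case of Definition~\ref{def:m} fires (the paper simply asserts the trade outcome), and your observation that genericity rules out both $v_1^k=v_s^k$ and $v_1^k=v_2^k$ is accurate and cleanly dispatches the boundary sub-cases.
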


\begin{proof}

Recall that the optimal GFT is defined to be 
$$ \FBV = \mathbb{E}_{\textib{v}\sim \textbf{V}}\left[\max \{v_1-v_s,v_2-v_s,0\}\right]$$

To prove the claim we show that for every valuation $\textbf{v} = (v_s^k,v_1^k,v_2^k)\in C$, the GFT obtained by $M_{f_1^C,f_2^C}$ is exactly $\max \{v_1-v_s,v_2-v_s,0\}$.

We first consider the case that 
$\max\{v_1^k, v_2^k\} < v_s^k$. 
We prove the claim for $\max\{v_1^k, v_2^k\} =v_1^k$  (the case that $\max\{v_1^k, v_2^k\} =v_2^k$ is similar and is omitted).
In this case  $f_1^C(v_2^k)= v_1^k < v_s^k$ and $f_2^C(v_1^k)= \infty > v_2^k$, and there is no trade in $M_{f_1^C,f_2^C}$. The mechanism  obtains GFT of $0$, and that is optimal  as $\max \{v_1^k- v_s^k,v_2^k- v_s^k, 0\} =0$.

Conversely, if $\max\{v_1^k, v_2^k\} \geq v_s^k$, then $\textup{GFT}(M_{f_1^C,f_2^C},\textib{v}) = \max\{v_1^k, v_2^k\} - v_s^k$. 
We prove the claim for $\max\{v_1^k, v_2^k\} =v_1^k$ (the other case is similar). In that case 
$f_1^C(v_2^k)= v_1^k \geq v_s^k$ and $f_2^C(v_1^k)= \infty > v_2^k$, and there is trade in $M_{f_1^C,f_2^C}$ between buyer $b_1$ and the seller, obtaining GFT of $v_1^k - v_s^k\geq 0$.
This represents the maximum GFT achievable from the triplet $(v_s^k,v_1^k,v_2^k)$, leading to $\textup{GFT}(M_{f_1^C,f_2^C},\textib{v}) = v_1^k- v_s^k= \max \{v_1^k- v_s^k,v_2^k- v_s^k, 0\}$. 
\end{proof}

\subsection{Impossibility of Learning for General %
Distributions}

In this section we consider the problem of learning a \nice mechanism with GFT that is approximately optimal over all \nice  mechanisms, when the values are sampled from some unknown joint distribution (values may be  correlated), and we have sample access to the distribution. We prove that no learning algorithm is able to distinguish 
between the case that a \nice mechanism can obtain the optimal GFT (first best), and the case that every \nice mechanism can only obtain a significantly lower GFT.

Suppose that we have a learning algorithm that accepts some finite number $t$ of samples from an unknown distribution $\mathbf{V}$ and presumably learns a \nice mechanism that approximately maximizes GFT for $\mathbf{V}$.  We can certainly use such an algorithm to also tell whether for distribution $\mathbf{V}$ there is a large gap between the first-best GFT, and the GFT of the GFT-optimal \nice mechanism.

Recall that for distribution $\mathbf{V}$, we denote the optimum GFT (first best) by $\FBV$, the highest GFT of any \nice mechanism by $\OPTniceV$.  

\begin{lemma}\label{lem:sepnice} 
Assume that there exists a learning algorithm that accepts $t$ random samples from any unknown distribution $\mathbf{V}$ on $[0,1]^3$ and, with high probability, outputs a \nice mechanism with GFT that is within an additive $c$ from that of the best \nice mechanism for $\mathbf{V}$.  Then, there exists an algorithm that accepts $t+O(1/c^2)$ samples and, with high probability, distinguishes between the following two cases:
    \begin{enumerate}
        \item $\OPTniceV=\FBV$. 
        \item $\OPTniceV \le \FBV-6c$.
    \end{enumerate}   
\end{lemma}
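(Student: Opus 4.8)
The plan is to use the hypothesized learning algorithm $L$ as a black box to build a distinguisher. Given a sample budget of $t + O(1/c^2)$ from an unknown $\mathbf{V}$ on $[0,1]^3$, the distinguisher splits its samples: it feeds the first $t$ samples to $L$, obtaining (with high probability) a \nice mechanism $M$ whose GFT on $\mathbf{V}$ satisfies $\mathrm{GFT}(M,\mathbf{V}) \ge \OPTniceV - c$. It then uses the remaining $m = O(1/c^2)$ fresh samples to form an empirical estimate $\widehat{g}$ of $\mathrm{GFT}(M,\mathbf{V})$. Since per-sample GFT lies in $[-1,1]$ (values are in $[0,1]$, and realized GFT on a profile is at most $\max\{v_1-v_s, v_2-v_s\} \le 1$ and at least $0$ once a \nice mechanism only trades when it's beneficial — in any case it is bounded), a Hoeffding/Chernoff bound gives $|\widehat{g} - \mathrm{GFT}(M,\mathbf{V})| \le c$ with high probability for $m$ a suitable constant times $1/c^2$. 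The distinguisher outputs ``Case 1'' if $\widehat{g} \ge \FBV - 3c$ and ``Case 2'' otherwise — but note it does not know $\FBV$, so I instead have it output ``Case 1'' iff $\widehat{g} \ge$ (a threshold it can compute). I need to be more careful here: the two cases are about the relationship between $\OPTniceV$ and $\FBV$, and the distinguisher must separate them. The natural move is to also empirically estimate $\FBV$ from the same $m$ fresh samples, since $\FBV = \mathbb{E}[\max\{v_1 - v_s, v_2 - v_s, 0\}]$ is an expectation of a $[0,1]$-bounded function of a sample, so $\widehat{\mathrm{FB}}$ is within $c$ of $\FBV$ with high probability.

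Putting the estimates together: the distinguisher computes $\widehat{\mathrm{FB}} - \widehat{g}$ and declares ``Case 1'' if this difference is at most $3c$, and ``Case 2'' otherwise. In Case 1 ($\OPTniceV = \FBV$), we have $\mathrm{GFT}(M,\mathbf{V}) \ge \FBV - c$, so $\FBV - \mathrm{GFT}(M,\mathbf{V}) \le c$, hence $\widehat{\mathrm{FB}} - \widehat{g} \le (\FBV + c) - (\mathrm{GFT}(M,\mathbf{V}) - c) \le c + 2c = 3c$, so we correctly say ``Case 1''. In Case 2 ($\OPTniceV \le \FBV - 6c$), since $M$ is a \nice mechanism, $\mathrm{GFT}(M,\mathbf{V}) \le \OPTniceV \le \FBV - 6c$, so $\FBV - \mathrm{GFT}(M,\mathbf{V}) \ge 6c$, hence $\widehat{\mathrm{FB}} - \widehat{g} \ge (\FBV - c) - (\mathrm{GFT}(M,\mathbf{V}) + c) \ge 6c - 2c = 4c > 3c$, so we correctly say ``Case 2''. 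All the ``with high probability'' events — $L$ succeeding, $|\widehat g - \mathrm{GFT}(M,\mathbf V)| \le c$, $|\widehat{\mathrm{FB}} - \FBV| \le c$ — hold simultaneously except with small probability by a union bound, and the constant in $O(1/c^2)$ can be chosen to make each failure probability a small constant (or even $\delta$-small), giving overall high-probability correctness.

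The main subtlety — the part I'd flag as needing the most care rather than being hard per se — is making sure the distinguisher only uses quantities it can actually compute from samples. It cannot directly evaluate $\mathrm{GFT}(M,\mathbf{V})$ or $\FBV$, only their empirical counterparts on the fresh block of $m$ samples; the argument above is written to respect that. A second point worth stating explicitly is that evaluating $\mathrm{GFT}(M, \mathbf{v})$ on a given sample profile $\mathbf{v}$ is computationally trivial given the mechanism $M$ that $L$ outputs (just run $M$ on $\mathbf{v}$ under truthful reports), and likewise $\max\{v_1 - v_s, v_2 - v_s, 0\}$ is immediate; so the distinguisher is efficient whenever $L$ is. Finally, one should double-check the constants: the gap in the statement is $6c$ precisely because we lose $c$ from the learning guarantee and $2c$ each from the two empirical estimates (bounding $\widehat{\mathrm{FB}} - \widehat g$ requires controlling both estimation errors), and $6c$ versus a threshold of $3c$ leaves a clean margin of $c$ on the ``Case 2'' side and exact tightness is not needed. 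I would present the proof by first stating the algorithm, then the concentration lemmas via Hoeffding, then the two-case correctness analysis, and close with the union bound and sample-count bookkeeping.
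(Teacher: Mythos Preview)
Your proposal is correct and follows essentially the same approach as the paper: split the samples, run the learner on the first $t$, use the remaining $O(1/c^2)$ fresh samples to empirically estimate both $\mathrm{GFT}(M,\mathbf{V})$ and $\FBV$, and threshold their difference at $3c$. The paper invokes Chebyshev where you use Hoeffding, and it phrases the Case~1/Case~2 arithmetic slightly differently (combining the learning error and the estimation error into a single $|G_c - \OPTniceV| < 2c$ bound), but the structure and constants line up exactly with yours.
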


\begin{proof}
    Denote by $S_t$ the sample that is comprised of the first $t$ samples from $\mathbf{V}$, by $S_c$ the sample that consists of the other $O(1/c^2)$ samples. Let $U(S_t)$ and $U(S_c)$ be the uniform distributions on these multi-sets of samples, respectively. Let $M$ be the \nice mechanism that is the result of running the assumed algorithm on $U(S_t)$. By assumption,  with high probability, the GFT of $M$ on $\mathbf{V}$ is within an additive $c$ from that of the best \nice mechanism for $\mathbf{V}$, that is: $|\textup{GFT}(M, V) - \OPTniceV|<c$.
    Denote $G_c=\textup{GFT}(M, U(S_c))$ the GFT obtained by running $M$ on $U(S_c)$, and $G^*_c = \textup{OPT}(U(S_c))$ the GFT of the first-best mechanism for $U(S_c)$ (i.e. average the value $max\{0, v_1-v_s, v_2-v_s\}$ over all $O(1/c^2)$ triplets).

    Notice that from the Chebyshev's inequality, with high probability, $|G^*_c - \FBV| = |\textup{OPT}(U(S_c))$ $ - \FBV|\leq c$. This holds since the GFT on $\mathbf{V}$ is the expected value of $max\{0, v_1-v_s, v_2-v_s\}$ over $\mathbf{V}$, while we took the average GFT of $O(1/c^2)$ samples which, with high probability, gives us a good estimate (specifically to within $c$) of this expected value. Similarly, with high probability, $|G_c-\textup{GFT}(M, \textbf{V})|\leq c$.
    Again, this holds since we estimated the expected value of $\textup{GFT}(M, \textbf{V})$  using the average of $O(1/c^2)$ samples. Combining with the fact that, 
    with high probability, $|\textup{GFT}(M, V) - \OPTniceV|<c$ (by our assumption on the learned mechanism), we derive that w.h.p. 
    $|G_c - \OPTniceV|<2c$.

    In case (1), as $\OPTniceV=\SBV=\FBV$, with high probability it holds that:
    \begin{gather*}
        G_c^* \leq \FBV + c = \OPTniceV + c \leq (G_c + 2c) +c=  G_c + 3c\\
        \Rightarrow G_c^* - G_c \le 3c
    \end{gather*}

    In case (2), as $\OPTniceV%
    <\FBV-6c$, then with high probability it holds that:
    \begin{gather*}
        G^*_c \geq \FBV - c >%
        \OPTniceV + 5c \geq (G_c-2c) +5c = G_c +3c \\
        \Rightarrow G_c^* - G_c > 3c
    \end{gather*}

    Thus by computing the difference $G_c^* - G_c$, with high probability we can distinguish between the two cases.    
\end{proof}

We will be applying this lemma to cases where
not only is $\OPTniceV$ smaller than $\FBV$, but even the GFT of
any Bayesian incentive-compatible and weakly budget-balanced mechanism, as studied 
by \cite{myerson1983efficient}, is
smaller.
So for explicitness let us denote
the GFT of the best mechanism from
the general class studied by \cite{myerson1983efficient} 
(the ``second best'' there) by  $\SBV$.  
Note that $\OPTniceV \le \SBV$
and whenever
$\FBV=\OPTniceV$ then also 
$\FBV=\SBV$.  
We now start with some uncorrelated distribution where the first-best $\FBV$ is strictly larger than the second-best 
$\SBV$ (and thus
certainly also larger than $\OPTniceV$) \cite{myerson1983efficient}. Specifically, $v_s, v_1$ and $v_2$ are independently sampled, each distributed uniformly on $[0,1]$.\footnote{\label{fn:gap}Actually, \cite{myerson1983efficient} does not explicitly discuss cases apart from bilateral trade. However, for 1-seller 2-buyer uniform distribution over $[0,1]^3$, consider the subcase when $v_s,v_1 \geq \frac{1}{2}$ and $v_2 < \frac{1}{2}$ (which happens with a constant probability). Since no trade with buyer 2 is possible, and since buyer $2$ value gives no useful information due to independence, we revert to a bilateral trade setting where \cite{myerson1983efficient} showed a  gap - and so there must also be a gap for uniform distribution over $[0,1]^3$.}  

Fix a parameter $T$ and consider a random choice of $T>>t$ triplets
$S=\{(v_s^i, v_1^i, v_2^i)\:|\:i \in [T]\}$ where
each of these $3T$ values are chosen uniformly at random in $[0,1]$, and look at the distribution on triplets $U_S$ that is 
uniform over the triplets in $S$. Since all values were chosen from a continuous distribution, clearly, with probability 1, no value appears more than once anywhere in $S$, and so we can apply Lemma \ref{lem:fullgft} to deduce that almost surely (over the choice of $S$) there exists a \nice mechanism that extracts the first-best gains from trade from $U_S$.

On the other hand, since we specifically started with a distribution for which there is a gap between the first best and second best (the uniform distribution over $[0,1]^3$), 
the GFT of any \nice mechanism is smaller than the first-best by some constant \cite{myerson1983efficient}. Assuming that we have an algorithm that learns, using $t$ samples, a \nice mechanism that (almost) maximizes gains from trade, then the previous lemma implies that we can distinguish
between the (correlated) distribution $U_S$, for a random $S$, 
and the uniform distribution over $([0,1]^3)^t$.  
But that, as we will show, is impossible, as these distributions are 
statistically close to each other.

\begin{lemma} \label{lem:cannot-tell}
Denote by $U_t$ the distribution on $t$ triplets chosen uniformly at random from $[0,1]^3$.
Let $S$ be chosen by taking a uniformly random sample of 
$T$ triplets from $[0,1]^3$, and denote by $U_S$ the distribution obtained
by choosing $t$ triplets from $S$ uniformly at random.  (Both $U_t$ and $U_S$
are distributions over $([0,1]^3)^t$.)  If $T > 6t^2$ then for any algorithm 
that accepts $t$ triplets, the
distribution of the output of the algorithm when fed a random input from $U_t$ is within
total variation distance of at most $1/6$ from the distribution of the output when fed a
random input from $U_S$.
\end{lemma}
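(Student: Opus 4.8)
The plan is to bound the total variation distance between $U_t$ and $U_S$ by coupling on the event that all $t$ triplets drawn in the $U_S$-process are \emph{distinct} elements of $S$. Conditioned on that event, the $t$ chosen triplets from $S$ are $t$ distinct uniform-without-replacement draws from $T$ i.i.d.\ uniform triplets, which has exactly the same distribution as $t$ i.i.d.\ uniform triplets from $[0,1]^3$ (this is the standard exchangeability fact: conditioned on being distinct, sampling without replacement from an i.i.d.\ pool is the same as sampling i.i.d.). Since the data-processing inequality implies that applying any (possibly randomized) algorithm cannot increase total variation distance, it suffices to show that the two input distributions $U_t$ and $U_S$ on $([0,1]^3)^t$ are within TV distance $1/6$; in fact I will show they are within $1/6$ by exhibiting a coupling that agrees except on an event of probability at most $1/6$.

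First I would set up the coupling: generate $S = (w^1,\dots,w^T)$ with each $w^i$ i.i.d.\ uniform on $[0,1]^3$, then draw indices $i_1,\dots,i_t$ i.i.d.\ uniform from $[T]$, and output $(w^{i_1},\dots,w^{i_t})$ — this is exactly $U_S$. Let $\mathcal{C}$ be the event that $i_1,\dots,i_t$ are pairwise distinct. On $\mathcal{C}^c$ I let the coupled sample from $U_t$ be an independent fresh draw; on $\mathcal{C}$ I claim the output is already distributed as $t$ i.i.d.\ uniform triplets, so I can set the two equal. The key identity to check is: conditioned on $\mathcal{C}$ (which depends only on the indices, not on the $w$'s), the tuple $(w^{i_1},\dots,w^{i_t})$ is i.i.d.\ uniform. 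This holds because, given distinct indices, we are reading off $t$ of the $T$ independent uniform random variables $w^1,\dots,w^T$, and distinct coordinates of an i.i.d.\ vector are themselves i.i.d.\ with the same marginal. Hence under this coupling the two samples differ only on $\mathcal{C}^c$, so $d_{TV}(U_t, U_S) \le \Pr[\mathcal{C}^c]$.

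Finally I would bound $\Pr[\mathcal{C}^c]$, the probability of a collision among $t$ uniform draws from $[T]$, by a union bound over pairs:
\begin{gather*}
\Pr[\mathcal{C}^c] \;\le\; \binom{t}{2}\cdot \frac{1}{T} \;\le\; \frac{t^2}{2T} \;<\; \frac{t^2}{2 \cdot 6 t^2} \;=\; \frac{1}{12} \;<\; \frac{1}{6},
\end{gather*}
using the hypothesis $T > 6t^2$. Combining with the data-processing inequality (the output of any algorithm on input $X$ is a Markov kernel applied to $X$, which cannot increase $d_{TV}$), the distribution of the algorithm's output under $U_t$ and under $U_S$ are within $1/6$ in total variation, as claimed.

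The main obstacle — really the only subtlety worth care — is the conditional-i.i.d.\ claim: that conditioning on the indices being distinct does not distort the joint law of the read-off values. I would make sure to phrase this cleanly, noting that $\mathcal{C}$ is independent of $(w^1,\dots,w^T)$, and that for any fixed distinct $(j_1,\dots,j_t)$ the tuple $(w^{j_1},\dots,w^{j_t})$ is i.i.d.\ uniform; averaging over the (uniform, conditioned-on-distinct) choice of which distinct indices we got preserves this. Everything else is a routine union bound plus an invocation of data processing for total variation.
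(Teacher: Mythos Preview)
Your proposal is correct and follows essentially the same approach as the paper: invoke the data-processing inequality to reduce to bounding $d_{TV}(U_t,U_S)$, observe that conditioned on the $t$ index draws from $[T]$ being distinct the resulting tuple is exactly i.i.d.\ uniform on $[0,1]^3$, and bound the collision probability by a birthday-type union bound. Your coupling formulation is a bit more explicit than the paper's terse ``restricted to the subset where no real value appears twice, $U_S$ is also uniform,'' and you obtain the slightly sharper constant $\binom{t}{2}/T \le t^2/(2T)$ rather than the paper's $t^2/T$, but the arguments are otherwise the same.
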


\begin{proof}
The total variation distance between the output distributions is well known to be bounded by
the total variation distance between the input distributions, so it remains to bound this total variation distance from above by $1/6$ for the lemma to follow.

The distribution $U_t$ is clearly uniform over $([0,1]^3)^t$.  It remains to show that
$U_S$ has total variation distance of at most $1/6$ from uniform.  
Restricted to the subset of $([0,1]^3)^t$ where no real value appears
twice, $U_S$ is also uniform. It remains bound from above the probability
that $U_S$ places on the complement of this subset.
Note that with probability $1$ the set $S$ is generic, i.e. no real value appears twice anywhere in $S$.   Once $S$ has been chosen in a generic way the probability the $t$ samples
contain some 
repeated value 
can be easily estimated from above by $t^2/T < 1/6$. 
\end{proof}

Combining these two lemmata we get the result showing  that optimal simple mechanisms cannot be 
learned, and in fact it is even impossible to learn a general mechanism (Bayesian incentive-compatible and weakly budget-balanced in the sense of \cite{myerson1983efficient}) that
approximates the GFT of the best \nice mechanism.  

\nocortri*

\begin{proof}
    Assume that such a learning algorithm exists 
    for $c>0$ small enough so that $6c$ is smaller than the gap between the 
    first-best and the second best GFT of the uniform distribution over $[0,1]^3$, as ensured 
    by \cite{myerson1981optimal} (see Footnote \ref{fn:gap}).
    Lemma \ref{lem:sepnice} shows that the algorithm may
    be used to distinguish between a uniform distribution on any generic set for which, by Lemma
    \ref{lem:fullgft},
    $\OPTniceV=\FBV$, and the uniform distribution on 
    $[0,1]^3$ for which $\OPTniceV \le \SBV \le \FBV-6c$.   Since the
    distribution $U_S$ from the previous lemma is just the
    average over all possible sets $S$ of size $T$ (which are
    generic with probability 1) of the uniform distribution of $t$ samples from the set $S$, the algorithm (which accepts $t$ samples from $[0,1]^3$) also separates between $U_S$ and the uniform distribution on $([0,1]^3)^t$.  But this yields a contradiction as Lemma \ref{lem:cannot-tell} %
    states that no algorithm can achieve
    such a separation with high probability.
    
    In fact, even if the hypothetical learning algorithm is allowed to output any {\em general} mechanism that 
    provides GFT that are approximately at least as high as those of the best {\em \nice} mechanism then the exact same proof of Lemma \ref{lem:sepnice} 
    shows that this
    general mechanism can
    be used to distinguish between the two cases, since still the GFT of the learned 
    general mechanism in the case
    of the uniform distribution on $[0,1]^3$ would be separated from $\FBV$.
\end{proof}

\section{GFT-Optimal \Nice Mechanisms under Independence}%
\label{sec:optimal}
 
In the previous section we have shown that for
unrestricted joint distributions, it is impossible to learn an approximately GFT-optimal \nice mechanism. Given this impossibility result for arbitrary joint distributions, we move to focus on product distributions, where the valuations of the agents that are sampled independently.
Thus, we assume the value of seller $s$ is sampled from $V_s$, and the value of buyer $b_i\in \{b_1,b_2\}$ is sampled from $V_i$, so the triplet $(v_s,v_1,v_2)$ is sampled from the product distribution
$\textib{V}=V_s\times V_1 \times V_2$.
In this section we establish that under independence, the GFT of any \nice mechanism can also obtained by a \nice mechanism characterized by a compatible pair of associated functions which are monotone, and with a specific structure (being ``tight'').
These properties of the pair of functions will enable learning a \nice mechanism with GFT that is close to being GFT-optimal, as described in the following  sections.

Consider any \nice mechanism $M$ and its pair of associated functions $(f_1,f_2)$ which is compatible. 
These functions might not be monotone, as can be seen in the example that appears in Figure \ref{fig:randomexample}. We will show that when the valuations are independent, it is possible to modify the two functions and generate a pair of monotone functions $(f^*_1, f_2^*)$ that is still compatible, such that the GFT of $M_{f^*_1, f^*_2}$ is at least as high as the GFT of $M$. For this, we first need to introduce the concept of \emph{compatibility restriction}, which given one function deduce the constraints on the other function that are implied by the compatibility requirement.  

\begin{definition}[Compatibility Restriction]\label{def:comprest}
For buyer $b\in \{b_1,b_2\}$ and a function $f_{-b}: \mathbb{R}_{\geq 0} \rightarrow \mathbb{R}_{\geq 0}\cup \{\infty\}$, we define the \emph{compatibility restriction $r_b^{f_{-b}}(v_{-b})$ that $f_{-b}$ imposes on $f_{b}$} at $v_{-b}$ as follows:
    \begin{gather*}
        r_b^{f_{-b}}(v_{-b})=\begin{cases}
            \sup \{v_b | v_{-b} \geq f_{-b}(v_b)\}& \text{if } \exists v_b : v_{-b} \geq f_{-b}(v_b) \\
            0& \text{otherwise}
        \end{cases}
    \end{gather*}
We refer to $r_b^{f_{-b}}(\cdot)$ as \emph{the compatibility restriction function of $f_{-b}$ on $f_{b}$}.  
\end{definition}

An illustration of a compatibility restriction function can be seen in Figure \ref{fig:compres1}. The function $r_b^{f_{-b}}$ is clearly monotone non-decreasing,  as when $v_{-b}$ increases the supremum is taken over a superset.

\begin{figure}[htp]
\centering
\begin{minipage}{.45\textwidth}
  \centering
  \includegraphics[width=\linewidth]{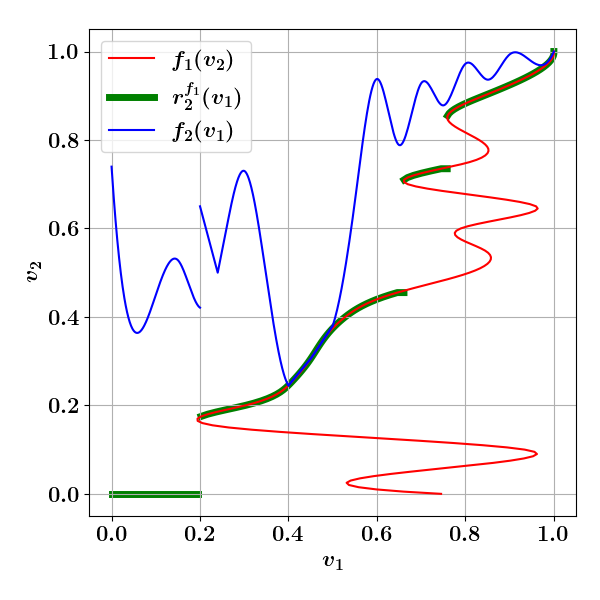} 
  \caption{An example of the Compatibility Restriction function $r_2^{f_1}$, the compatibility restriction of  $f_1$ on $f_2$.}
  \label{fig:compres1}
\end{minipage}%
\hfill
\begin{minipage}{.45\textwidth}
  \centering
  \includegraphics[width=\linewidth]{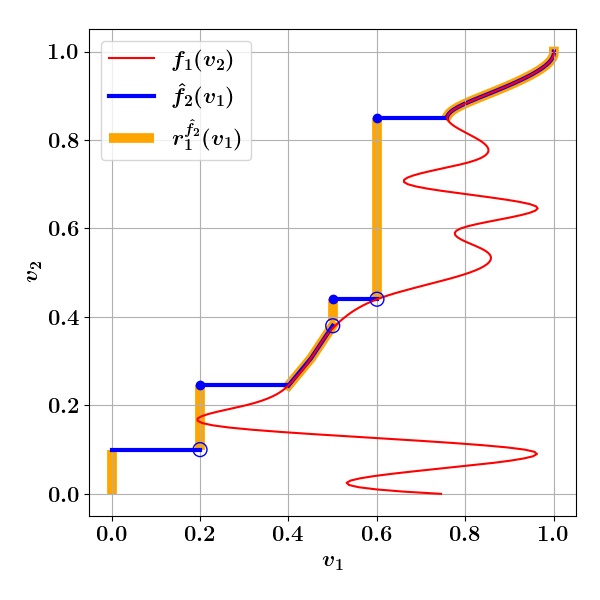} 
  \caption{The intermediate function $\hat{f}_2$, and the Compatibility Restriction function $r_1^{\hat{f}_2}$, the compatibility restriction of $\hat{f}_2$ on $f_1$.}
  \label{fig:f2hat}
\end{minipage}
\end{figure}

We next describe the three-step process that starts with a pair of compatible functions $(f_1,f_2)$, and ends with a  compatible and tight (see Definition \ref{def:tight}) pair of monotone functions $(f^*_1,f^*_2)$.
In each step the expected GFT of the mechanism that corresponds to the pair of functions (as per Definition \ref{def:m}) does not decrease. We now present how we do one of these step (the two other steps are similar). Consider any \nice mechanism $M$ and its pair of associated functions $(f_1,f_2)$ which is compatible. We want to turn $f_2$ into a monotone function $\hat{f}_2$ such that the pair $(f_1, \hat{f}_2)$
is compatible, and the GFT of $M_{f_1, \hat{f}_2}$ is at least as high as the GFT of $M$. To do that we fix $f_1$, and for every value $v_1$ define the value of $\hat{f}_2(v_1)$ to be the highest price over all prices that maximizes the GFT in the bilateral trade between buyer $b_2$ and seller $s$, under the constraint that the price is at least  $r_2^{f_{1}}(v_{1})$. The function $r_2^{f_{1}}$ is monotone non-decreasing and as a result the function  $\hat{f}_2$ is also monotone non-decreasing
(see proof in Appendix \ref{app:missing5}).   

In the three-step modification process, we first turn $f_2$ into a monotone function $\hat{f}_2$ such that the pair $(f_1, \hat{f}_2)$ is compatible, and the GFT of $M_{f_1, \hat{f}_2}$ is at least as high as the GFT of $M$. Figure \ref{fig:f2hat} shows this new function $\hat{f}_2$ and the compatibility restriction it imposes on $f_1$. Second, we turn $f_1$ into a monotone function $f^*_1$ such that the pair $(f^*_1, \hat{f}_2)$ is compatible, and the GFT of $M_{f^*_1, \hat{f}_2}$ is at least as high as the GFT of $M_{f_1, \hat{f}_2}$. The third and final step is to modify  the monotone function $\hat{f}_2$ and generate a monotone function $f^*_2$ such that the pair $(f^*_1, f^*_2)$ is compatible and does not lose any GFT, while being tight (see Definition \ref{def:tight}). 
\OLD{
\begin{definition}\label{def:tight}
    Let $(f_1,f_2)$ be a pair of compatible functions, and let $p_1,p_2 \in \mathbb{R}$. The pair $f_1,f_2$ is said to be \textup{tight after} $(p_1,p_2)$ if:
    $$\forall v_1 \geq p_1, \forall v_2 \geq p_2 : (v_1 \geq f_1(v_2) \vee v_2 \geq f_2(v_1))$$

    The functions are called \textup{tight} if there exists a point $(p_1,p_2)$ after which they are tight.
\end{definition}
In other words, this definition means that there is no point above $(p_1,p_2)$ where both $v_1 < f_1(v_2)$ and $v_2 < f_2(v_1)$. 
}
Recall that the pair $f_1,f_2$ is said to be tight after $(p_1,p_2)$ if there is no point above $(p_1,p_2)$ where both $v_1 < f_1(v_2)$ and $v_2 < f_2(v_1)$.
Figure \ref{fig:f1star_f2'} shows the functions $f_1^*,\hat{f}_2$, illustrating that they are not tight, and Figure \ref{fig:stars} shows the final functions $f_1^*,f_2^*$, that are tight.

\begin{figure}[htp]
\centering
\begin{minipage}{.45\textwidth}
  \centering
  \includegraphics[width=\linewidth]{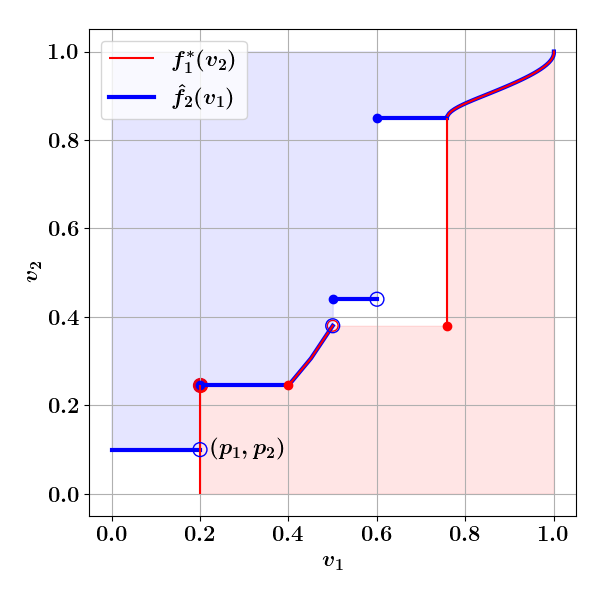} 
  \caption{Illustration of $f_1^*$ and $\hat{f}_2$. The lower-right shaded area (in red) represents points at which the mechanism will use the posted-price mechanism with price $f_1^*(v_2)$ on buyer $b_1$ and the seller. Similarly, the upper-left shaded area (in blue) represents 
  using price $\hat{f}_2(v_1)$ on buyer $b_2$ and the seller.}
  \label{fig:f1star_f2'}
\end{minipage}%
\hfill
\begin{minipage}{.45\textwidth}
  \centering
  \includegraphics[width=\linewidth]{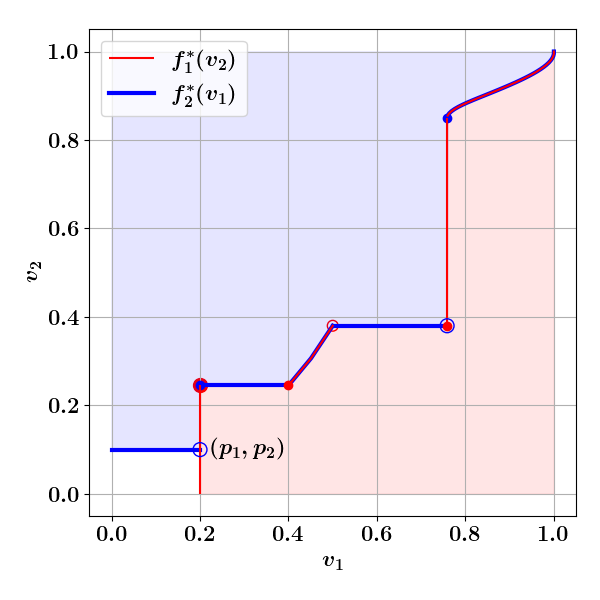} 
  \caption{Illustration of the final functions $f_1^*$ and $f_2^*$. Note that above and to the right of their first meeting point, there are no pairs of values strictly between the two curves (no blank spaces), i.e. they are tight.}
  \label{fig:stars}
\end{minipage}
\end{figure}

In summary, by applying these three modifications on a pair of functions $(f_1,f_2)$ that is associated with a \nice mechanism, we get another \nice mechanism $M_{f^*_1, f^*_2}$ that is associated with $(f^*_1,f^*_2)$ that are a compatible and tight pair of monotone functions. We have thus derived the main result of this section:

\bestmech*

\subsection{Proof of Theorem \ref{thm:bestmech}}
In this section we present the proof of  Theorem \ref{thm:bestmech}. 
As stated above, in each of the three steps in which we change one of the functions, we do so by setting the value at each point to be the best price over all prices that maximize GFT in the bilateral trade scenario, under the constraint that the price is not lower than the appropriate compatibility restriction. Formally, we define the best price under a restriction:

\begin{definition}[Restricted Best Price]\label{def:resbp}
        Let $V_s,V_b$ be bounded distributions. We define $p^*(r, V_s\times V_b)$, %
         the \textup{Restricted Best Price} for distribution $V_s\times V_b$ %
         given restriction $r\in\mathbb{R}$,
         to be:
    \begin{gather*}
        p^*(r, V_s\times V_b) = \sup\left\{\argsup\limits_{p \geq r} \textup{GFT}(p, V_s\times V_b)\right\}
    \end{gather*}
    
    When $r=0$ there is no restriction on price and we call $p^*(0, V_s\times V_b)$ the \emph{best price} for distribution $V_s\times V_b$.
\end{definition}

In other words, for bilateral trade between seller $s$ and buyer $b$, the restricted best price is essentially the highest price that is at least $r$ that obtained the maximal GFT for distribution $V_s\times V_b$ (over all prices that are at least $r$). 

Note that the definition is using  \textit{sup} and \textit{argsup}, so as defined, it is not clear that the defined price itself maximizes the GFT.  Yet, we prove that it actually does, showing we can substitute the \textit{sup} and \textit{argsup} with \textit{max} and \textit{argmax}, respectively. 
Notably, this isn't a straightforward result and typically it would necessitate the GFT function to be continuous — a condition that isn't met in this case. However, upper semi-continuity suffices for our specific context, and GFT is indeed upper semi-continuous. Additionally, the restricted best price is also monotone non-decreasing, which in combination with the fact that any compatibility restriction function in also monotone non-decreasing will aid us in modifying the payment functions to be monotone. Detailed proofs of these properties are available in Appendix \ref{app:missing5}.

We are now ready to define the operator that will be iteratively executed three times, each time on a different agent.
The operator takes as input the function $f_{-b}$, associated with buyer $-b$, and the distributions for the seller $s$ and buyer $b$, and outputs a new function $\tilde{f}_b$ for $b$ that gives the best price for the product distribution $V_s\times V_b$ under the constraint imposed by $f_{-b}$.

\begin{definition}\label{def:f'}
   Fix some distributions $V_S, V_1,V_2$, and let $(f_1,f_2)$ be a pair of compatible functions. Define the function $\tilde{f}_b = g(f_{-b},V_s\times V_b)$ as the restricted best-price function on $V_s\times V_b$, restricted by the function $f_{-b}$. That is, for value $v_{-b}$ the function is defined as follows:
    \begin{gather*}
        \tilde{f}_b(v_{-b}) = p^*(r_b^{f_{-b}}(v_{-b}), V_s\times V_b)
    \end{gather*}

    The three modification steps we do are defined as follows:
    \begin{align*}
        \hat{f}_2 &= g(f_1,V_s\times V_2) & f_1^* &=  g(\hat{f}_2,V_s\times V_1) & f_2^* &= g(f_1^*,V_s\times V_2)
    \end{align*}
\end{definition}

Note that even though we notate them simply as $\hat{f}_2,f_1^*,f_2^*$, the function $\hat{f}_2$ actually depends on $f_1$, the function $f_1^*$ depends on $\hat{f}_2$, and the function $f_2^*$ depends on $f_1^*$. Also, they all depend on the distributions $V_s,V_1$ and $V_2$. An illustration of the function $\hat{f}_2$ (which is the result of the first modification step) is presented in Figure \ref{fig:f2hat}.

Crucially, this operator makes the new function monotone, while not violating compatibility:

\begin{lemma}\label{lem:starcompmono}
     Fix some distributions $V_S, V_1,V_2$, and let $(f_1,f_2)$ be a pair of compatible functions. The functions $\tilde{f}_1 =g(f_1,V_s\times V_2)$ and $\tilde{f}_2 = g(f_2,V_s\times V_1)$ are monotone non-decreasing; additionally, the pair $(\tilde{f}_1,{f}_2)$ is compatible, and so is the pair $({f}_1,\tilde{f}_2)$.
\end{lemma}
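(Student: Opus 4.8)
\textbf{Proof plan for Lemma \ref{lem:starcompmono}.}

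The plan is to break the statement into its two independent-looking halves: (a) that $\tilde f_b = g(f_{-b}, V_s \times V_b)$ is monotone non-decreasing, and (b) that the pair obtained by replacing one function with its $g$-image remains compatible. These are symmetric in $b_1$ versus $b_2$, so I would prove each claim once for a generic buyer $b$ and remark that the other case is identical.

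For monotonicity (part (a)): recall $\tilde f_b(v_{-b}) = p^*\bigl(r_b^{f_{-b}}(v_{-b}),\, V_s \times V_b\bigr)$. The paper already observes two facts I would invoke as black boxes: the compatibility restriction function $r_b^{f_{-b}}(\cdot)$ is monotone non-decreasing (remarked just after Definition \ref{def:comprest}, since increasing $v_{-b}$ enlarges the set $\{v_b \mid v_{-b} \geq f_{-b}(v_b)\}$ over which the supremum is taken), and the restricted best price $p^*(r, V_s \times V_b)$ is monotone non-decreasing in the restriction parameter $r$ (stated in the discussion preceding Definition \ref{def:f'}, with detailed proof deferred to Appendix \ref{app:missing5}). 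Composing a non-decreasing function with a non-decreasing function yields a non-decreasing function, so $v_{-b} \mapsto \tilde f_b(v_{-b})$ is non-decreasing. That is the whole argument for part (a).

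For compatibility (part (b)): I want to show that, e.g., $(f_1, \tilde f_2)$ is compatible, i.e.\ there is no pair $(v_1, v_2)$ with $v_1 > f_1(v_2)$ and $v_2 > \tilde f_2(v_1)$. Suppose for contradiction such a pair exists. The key structural fact is that $\tilde f_2(v_1) = p^*(r_2^{f_1}(v_1), V_s \times V_2) \geq r_2^{f_1}(v_1)$ by construction of the restricted best price (the argsup is taken over $p \geq r$). So from $v_2 > \tilde f_2(v_1)$ we get $v_2 > r_2^{f_1}(v_1) = \sup\{v_2' \mid v_1 \geq f_1(v_2')\}$ (using the first case of Definition \ref{def:comprest}; I should also handle the ``otherwise'' case where the restriction is $0$, but then $v_1 \geq f_1(v_2')$ fails for all $v_2'$, in particular $v_1 < f_1(v_2)$, contradicting $v_1 > f_1(v_2)$ directly). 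Since $v_2$ strictly exceeds that supremum, $v_2$ is not in the set $\{v_2' \mid v_1 \geq f_1(v_2')\}$, which means $v_1 < f_1(v_2)$ — contradicting $v_1 > f_1(v_2)$. Hence no such pair exists and $(f_1, \tilde f_2)$ is compatible; the argument for $(\tilde f_1, f_2)$ is symmetric.

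The main obstacle is not conceptual but bookkeeping: I must be careful about the boundary between the two cases in the definition of $r_b^{f_{-b}}$, about whether the supremum defining $r_b^{f_{-b}}$ is attained (it need not be, which is exactly why the strict inequality $v_2 > r_2^{f_1}(v_1)$ is what does the work — it pushes $v_2$ strictly past the supremum and hence out of the set), and about the fact that $p^*$ itself is defined via $\sup/\argsup$ rather than $\max/\argmax$; but for this lemma I only need the inequality $p^*(r, \cdot) \geq r$, which holds regardless of attainment, plus monotonicity of $p^*$ in $r$, both of which are granted by the preceding discussion. So the genuinely load-bearing external inputs are: monotonicity of $r_b^{f_{-b}}$, monotonicity of $p^*(\cdot, V_s\times V_b)$, and the trivial bound $p^*(r,\cdot)\geq r$ — everything else is a short contradiction argument.
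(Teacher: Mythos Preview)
Your proposal is correct and follows essentially the same approach as the paper. The paper factors the proof into two auxiliary lemmas in the appendix --- Lemma \ref{lem:starmono} (monotonicity via composition of the non-decreasing maps $r_b^{f_{-b}}$ and $p^*(\cdot, V_s\times V_b)$, exactly your part (a)) and Lemma \ref{lem:starcomp} (compatibility, reduced to the observation $\tilde f_b \geq r_b^{f_{-b}}$ together with the same supremum-contradiction you give, which is isolated there as Lemma \ref{lem:compres}) --- and then declares Lemma \ref{lem:starcompmono} immediate from these.
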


Additionally, modifying the functions according to the compatibility restrictions does not cause the new mechanism to lose any GFT, because we changed only one of the functions, and we only (weakly) improved it at each point:

\begin{lemma}\label{lem:moregft}
     Let $\textbf{V}=V_s\times V_1 \times V_2$ be a product distribution, and let $(f_1,f_2)$ be a pair of compatible functions. Denote $\tilde{f}_1 = g(f_2,V_s\times V_1)$ and $\tilde{f}_2 = g(f_1,V_s\times V_2)$. Then $\textup{GFT}(M_{\tilde{f}_1,f_{2}},\textbf{V}) \geq \textup{GFT}(M_{f_1,f_2},\textbf{V})$ and $\textup{GFT}(M_{{f}_1,\tilde{f}_{2}},\textbf{V}) \geq \textup{GFT}(M_{f_1,f_2},\textbf{V})$. %
\end{lemma}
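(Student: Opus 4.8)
The plan is to collapse the two‑buyer problem to a one‑dimensional (bilateral‑trade) statement by using the product structure, and then invoke the optimality that is built into the operator $g$. By the symmetric roles of the two buyers it is enough to prove the second inequality $\textup{GFT}(M_{f_1,\tilde f_2},\textbf V)\ge \textup{GFT}(M_{f_1,f_2},\textbf V)$; the first is obtained by swapping the indices. Since $\textbf V=V_s\times V_1\times V_2$ is a product distribution, I would condition on the report $v_1$ of buyer $b_1$: for any function $g$,
\[
\textup{GFT}(M_{f_1,g},\textbf V)=\mathbb E_{v_1\sim V_1}\!\big[H_g(v_1)\big],\qquad H_g(v_1):=\mathbb E_{(v_s,v_2)\sim V_s\times V_2}\!\big[\textup{GFT}(M_{f_1,g},(v_s,v_1,v_2))\big],
\]
so it suffices to show $H_{\tilde f_2}(v_1)\ge H_{f_2}(v_1)$ for every fixed $v_1$, where $\tilde f_2(v_1)=p^*\!\big(r_2^{f_1}(v_1),\,V_s\times V_2\big)$.

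Next I would fix $v_1$, write $p:=g(v_1)$ for buyer $b_2$'s posted price, and prove the key structural decomposition: $H_g(v_1)=A(v_1)+\textup{GFT}(p,\,V_s\times V_2)$, where $A(v_1)$ does not depend on $g$. To see this, split the realizations of $v_2$ by the sign of $f_1(v_2)-v_1$. On $\{v_2:f_1(v_2)>v_1\}$ buyer $b_1$ never trades (his critical value exceeds his report), so $M_{f_1,g}$ is literally the posted‑price mechanism at price $p$ between $s$ and $b_2$ on this part, contributing the corresponding slice of $\textup{GFT}(p,V_s\times V_2)$. On $\{v_2:f_1(v_2)<v_1\}$ compatibility of $(f_1,f_2)$ forces $v_2\le f_2(v_1)$, so (away from the boundary $v_2=f_2(v_1)$) buyer $b_2$ never trades, and $b_1$ trades exactly when $v_s\le f_1(v_2)$, contributing $(v_1-v_s)\,\mathbbm{1}[v_s\le f_1(v_2)]$, which is independent of $g$. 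Collecting the two parts gives the decomposition, and then I would conclude as follows: compatibility gives $f_2(v_1)\ge r_2^{f_1}(v_1)$, so $f_2(v_1)$ is a feasible competitor in the supremum defining $p^*(r_2^{f_1}(v_1),V_s\times V_2)$; hence, using the fact (from the appendix, via upper semi‑continuity of the bilateral GFT) that $p^*$ actually attains the restricted maximum, $\textup{GFT}(\tilde f_2(v_1),V_s\times V_2)\ge \textup{GFT}(f_2(v_1),V_s\times V_2)$, which yields $H_{\tilde f_2}(v_1)\ge H_{f_2}(v_1)$. Integrating over $v_1$ finishes the argument, and the companion claim in Lemma~\ref{lem:starcompmono} (compatibility and monotonicity of the modified pair) is what makes $M_{f_1,\tilde f_2}$ a legitimate \nice mechanism to compare against.

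The main obstacle is the structural decomposition, and specifically a careful treatment of the ties handled by cases (3)–(5) of Definition~\ref{def:m}. When the distributions carry atoms, the events $v_1=f_1(v_2)$, $v_2=f_2(v_1)$, and ``$v_s$ equal to the relevant price'' have positive probability, and on them the allocation of $M_{f_1,f_2}$ is not purely ``posted price $p$ for $b_2$ plus a $g$‑free term for $b_1$'': the boundary set $\{v_2:f_1(v_2)=v_1\}$ contributes an extra, $g$‑dependent amount, and on it the bound $f_2(v_1)\ge r_2^{f_1}(v_1)$ need not hold. The work is to show that the tie‑breaking of Definition~\ref{def:m} only ever adds nonnegative ``bonus'' GFT in these boundary events, to account for it correctly in $A(v_1)$, and to verify that pushing $\tilde f_2$ up to the compatibility restriction $r_2^{f_1}$ recovers at least as much GFT (by optimizing the bilateral price) as it gives up on the boundary — using that $r_2^{f_1}$ records exactly the slack allowed by compatibility and that the bilateral GFT is upper semi‑continuous in the price.
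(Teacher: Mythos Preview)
Your approach is essentially the same as the paper's: both condition on $v_1$, split the expected GFT into the contribution from trades with $b_1$ (your $A(v_1)$, the paper's $\textup{GFT}_1$) and the contribution from trades with $b_2$ (your $\textup{GFT}(p,V_s\times V_2)$, the paper's $\textup{GFT}_2$), observe that the $b_1$ part is independent of $g$ away from ties, and then use that $\tilde f_2(v_1)$ is by definition the restricted best price with $f_2(v_1)$ a feasible competitor (via $f_2(v_1)\ge r_2^{f_1}(v_1)$) to conclude the $b_2$ part weakly improves. The paper's proof is terse and treats the tie cases in one sentence (``what we lose in $\textup{GFT}_1$ we gain in $\textup{GFT}_2$''), which is the same informal accounting you flag as the main obstacle; neither version gives a fully rigorous case analysis of Definition~\ref{def:m}(3)--(5), so your identification of where the real work lies is accurate.

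One small correction: in your second paragraph you invoke ``compatibility of $(f_1,f_2)$'' on the set $\{v_2:f_1(v_2)<v_1\}$, but for the decomposition to apply to a general $g$ you need compatibility of $(f_1,g)$ there (which holds for both $g=f_2$ and $g=\tilde f_2$ by Lemma~\ref{lem:starcompmono}); and your remark that ``the bound $f_2(v_1)\ge r_2^{f_1}(v_1)$ need not hold'' on the boundary is off --- that inequality always holds for a compatible pair, the issue on the boundary is only the tie-breaking allocation, not the feasibility of $f_2(v_1)$ as a competitor.
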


Detailed proofs of these lemmata can be found in Appendix \ref{app:missing5}.

We proceed by explaining the reason behind iterating over the operator three times, alternating between the two agents. After two iterations  we have a pair of compatible functions are monotone non-decreasing. This is sufficient for learning a GFT-optimal mechanism in polynomial time, but by taking an extra iteration we can make learning easier. 

To understand this statement, we first point out that there is an `empty' rectangle (no trade) between $(0,0)$ and some point $(p_1,p_2)$, this is illustrated in Figure \ref{fig:f1star_f2'}. The values $p_1,p_2$ are the \textbf{unrestricted} best prices, and below them it is preferable for no trade to occur. This `empty' rectangle $(0,0)-(p_1,p_2)$ appears in many different distributions, for two examples, see Figure \ref{fig:uhalf} and Figure \ref{fig:complexexample}.

In Figure \ref{fig:f1star_f2'}, the colored areas represent the points where a trade occurs (given that the seller agrees) - red for trades $s\rightarrow b_1$ and blue for $s\rightarrow b_2$. Notice that above the point $(p_1,p_2)$, there are some `blank' rectangles. They are a result of defining $\hat{f}_2$ on the restrictions from $f_1$, which might be arbitrary. 
To close these gaps, we perform a third iteration, defining $f_2^*$ using the restriction from $f_1^*$ (the restriction $r_2^{f_1^*}(v_1)$ at each $v_1$). Intuitively, this means that wherever possible we `lower' $\hat{f}_2$'s value to the previous restricted best price, because now that price is allowed. An illustration of the result of this step can be seen in Figure \ref{fig:stars}. 

Again, this new function pair $(f_1^*, f_2^*)$ is compatible and both of the functions are monotone. However, now we also claim that they are tight - meaning they describe the same curve after the point $(p_1,p_2)$:

\begin{lemma}\label{lem:tight}
   Let $\textbf{V}=V_s\times V_1 \times V_2$ be a product distribution, and let $(f_1,f_2)$ be a pair of compatible functions. Let $p_1,p_2 \in \mathbb{R}$ be the best prices of $V_s\times V_1$ and $V_s\times V_2$ respectively, and let $f_1^*,f_2^*$ be defined as in Definition \ref{def:f'}. Then the pair $(f_1^*,f_2^*)$ is tight after $(p_1,p_2).$
\end{lemma}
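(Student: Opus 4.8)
\textbf{Proof plan for Lemma \ref{lem:tight}.}

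The plan is to show directly, from the definition of $f_2^* = g(f_1^*, V_s \times V_2)$, that for every pair $(v_1, v_2)$ with $v_1 \ge p_1$ and $v_2 \ge p_2$ we cannot have simultaneously $v_1 < f_1^*(v_2)$ and $v_2 < f_2^*(v_1)$. Suppose for contradiction that some such pair exists. First I would unpack what $v_2 < f_2^*(v_1)$ means: since $f_2^*(v_1) = p^*\bigl(r_2^{f_1^*}(v_1),\, V_s\times V_2\bigr)$ is the \emph{highest} price achieving the maximum GFT among prices $\ge r_2^{f_1^*}(v_1)$, and since $p_2$ is the unrestricted best price for $V_s\times V_2$, the value $f_2^*(v_1)$ is at least $\max\{p_2,\, r_2^{f_1^*}(v_1)\}$. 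So $v_2 < f_2^*(v_1)$ together with $v_2 \ge p_2$ forces $v_2$ to lie strictly between $r_2^{f_1^*}(v_1)$ (possibly) and $f_2^*(v_1)$, and in particular $v_2 < f_2^*(v_1)$ with $f_2^*(v_1)$ being a price $> r_2^{f_1^*}(v_1)$ or $v_2$ simply below the max — I need to extract from this that $v_2 \ge r_2^{f_1^*}(v_1)$ is \emph{not} available as a contradiction yet, so the real work is in relating $v_1 < f_1^*(v_2)$ to the restriction $r_2^{f_1^*}(v_1)$.

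The key observation I would isolate is a ``round-trip'' property of the compatibility restriction: for the function $f_1^*$, if $v_1 < f_1^*(v_2)$ for some $v_2 \ge p_2$, then $r_2^{f_1^*}(v_1) \ge v_2$, i.e. the restriction that $f_1^*$ imposes on $f_2$ at the point $v_1$ is at least $v_2$. Indeed, by Definition \ref{def:comprest}, $r_2^{f_1^*}(v_1) = \sup\{v_2' : v_1 \ge f_1^*(v_2')\}$ (when the set is nonempty). Using monotonicity of $f_1^*$ (established in Lemma \ref{lem:starcompmono}) together with $v_1 < f_1^*(v_2)$, any $v_2'$ with $v_1 \ge f_1^*(v_2')$ must satisfy... here I need to be careful about the direction of monotonicity — $f_1^*$ is non-decreasing in $v_2$, so $f_1^*(v_2') \le v_1 < f_1^*(v_2)$ gives $v_2' \le v_2$ only if $f_1^*$ is strictly increasing; in general it gives nothing directly. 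The clean route instead uses the \emph{tightness of the intermediate pair}: I would first show $(f_1^*, \hat f_2)$ satisfies that for $v_1 \ge p_1,\, v_2 \ge p_2$, if $v_1 < f_1^*(v_2)$ then $v_2 \ge \hat f_2(v_1)$ — this is essentially what the second modification step $f_1^* = g(\hat f_2, V_s\times V_1)$ buys us, because $f_1^*(v_2) = p^*(r_1^{\hat f_2}(v_2), V_s \times V_1)$ and $v_1 < f_1^*(v_2)$ with $v_1 \ge p_1$ forces $v_1 < r_1^{\hat f_2}(v_2)$, which by the definition of $r_1^{\hat f_2}$ as a sup means there is $v_1'$ arbitrarily close to (and above) $v_1$ with $v_2 \ge \hat f_2(v_1')$, and then monotonicity of $\hat f_2$ plus a limiting argument yields $v_2 \ge \hat f_2(v_1)$ (using that $\hat f_2$ is non-decreasing and some care at the jump — upper semicontinuity / the sup being attained, which the appendix lemmas provide).

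With that in hand, the final step is to feed this back through the third modification. We have $v_2 \ge \hat f_2(v_1) \ge r_2^{f_1^*}(v_1)$ — the last inequality because $\hat f_2$ is one admissible choice of function compatible with $f_1^*$ at the relevant points, or more precisely because $r_2^{f_1^*}(v_1)$ is the smallest value that makes $f_2$ compatible with $f_1^*$ at $v_1$ and $\hat f_2(v_1)$ already achieves compatibility — so $v_2 \ge r_2^{f_1^*}(v_1)$. But $f_2^*(v_1) = p^*(r_2^{f_1^*}(v_1), V_s\times V_2)$, and the restricted best price always satisfies $p^*(r, V_s\times V_2) = r$ whenever $r \ge p_2$... no: rather, when $r \le p_2$ it equals $p_2$ (unrestricted optimum is feasible), and when $r > p_2$ it is some value $\ge r$ but it need not equal $r$ — here is the crux. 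Actually what I need is: $v_2 < f_2^*(v_1) = p^*(r_2^{f_1^*}(v_1), V_s\times V_2)$ combined with $v_2 \ge r_2^{f_1^*}(v_1)$ must yield a contradiction with the compatibility of $(f_1^*, \hat f_2)$ or with optimality — specifically, $v_2$ lies in $[r_2^{f_1^*}(v_1),\, f_2^*(v_1))$, a price range over which GFT is not maximized (since $f_2^*(v_1)$ is where the max is attained and prices in between are strictly worse or equal). The honest difficulty, and what I expect to be the main obstacle, is exactly this last chain: pinning down that the ``tight'' conclusion $v_2 \ge f_2^*(v_1)$ (or $v_1 \ge f_1^*(v_2)$) follows, which requires a genuinely careful case analysis of whether $r_2^{f_1^*}(v_1)$ exceeds $p_2$ or not, using that $\hat f_2$ and $f_2^*$ agree exactly on the region where the restriction is active and equal $p_2$ elsewhere, plus the semicontinuity facts from Appendix \ref{app:missing5} to handle boundary points where suprema may or may not be attained. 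I would organize the proof as: (i) reduce to showing the contrapositive at an arbitrary point $(v_1,v_2)$ above $(p_1,p_2)$; (ii) prove the round-trip lemma $v_1 < f_1^*(v_2) \Rightarrow v_2 \ge \hat f_2(v_1)$ via the sup-definition of $r_1^{\hat f_2}$ and monotonicity/semicontinuity of $\hat f_2$; (iii) deduce $v_2 \ge r_2^{f_1^*}(v_1)$; (iv) observe $f_2^*(v_1) = p^*(r_2^{f_1^*}(v_1), V_s\times V_2) \le \hat f_2(v_1)$ is false in general — instead conclude that since $v_2 \ge r_2^{f_1^*}(v_1)$ the point $v_2$ is in the feasible range for the restricted-best-price problem and hence $v_2 \ge f_2^*(v_1)$ would need $f_2^*$ to be the smallest feasible price, so I instead argue $f_2^*(v_1) \le \hat f_2(v_1) \le v_2$ directly by monotonicity of $g$ in its restriction argument and the fact that $r_2^{f_1^*} \le r_2^{f_1} $ pointwise after $p_1$ — contradiction with $v_2 < f_2^*(v_1)$.
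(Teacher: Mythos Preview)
Your overall strategy is the paper's: from $v_1 < f_1^*(v_2)$ (with $v_1 \ge p_1$) deduce $v_2 \ge \hat f_2(v_1)$ via the compatibility restriction $r_1^{\hat f_2}$, separately establish $\hat f_2 \ge f_2^*$ pointwise, and combine to get $v_2 \ge f_2^*(v_1)$. After all the meandering, your steps (ii) and (iv) are exactly the two ingredients the paper uses, and step (iii) is an unnecessary detour (once you have $v_2 \ge \hat f_2(v_1)$ and $\hat f_2 \ge f_2^*$ you are finished).

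There is one genuine error in your justification. In step (iv) you argue $f_2^* \le \hat f_2$ ``by monotonicity of $g$ in its restriction argument and the fact that $r_2^{f_1^*} \le r_2^{f_1}$ pointwise after $p_1$.'' The inequality $r_2^{f_1^*} \le r_2^{f_1}$ need not hold: $f_1^*$ is built from $\hat f_2$, not from $f_1$, and there is no pointwise comparison between $f_1^*$ and $f_1$. The correct argument (which the paper uses, and which you in fact already gestured at in step (iii)) goes through compatibility of the pair $(f_1^*, \hat f_2)$: since this pair is compatible (Lemma~\ref{lem:starcompmono}), $\hat f_2(v_1) \ge r_2^{f_1^*}(v_1)$ for every $v_1$, so $\hat f_2(v_1)$ is feasible for the restricted-best-price problem defining $f_2^*(v_1)$; then one checks that the restricted best price over $[r_2^{f_1^*}(v_1),\infty)$ cannot exceed $\hat f_2(v_1)$ (otherwise $\hat f_2(v_1)$ would fail to be the \emph{sup} of the argmax over the larger feasible set $[r_2^{f_1}(v_1),\infty)$). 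Replace your claimed inequality on the restrictions with this compatibility argument and the proof goes through exactly as in the paper.
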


\begin{proof} %
Recall that by the definition of a tight pair we need to show that there do {\em not} exist any values $v_1 \geq p_1, v_2 \geq p_2$ such that $v_1 < f_1^*(v_2)$ and $v_2 < f_2^*(v_1)$.

First of all, we claim that $\forall v_1: \hat{f}_2(v_1) \geq f_2^*(v_1)$. This is because for $f_2^*$ we used the restrictions from $f_1^*$, which is itself restricted from $\hat{f}_2$. Therefore, $f_2^*$ could use the values of $\hat{f}_2$, or improve by lowering the price. Increasing it will never occur, because the restricted best price always takes the maximum among best prices - and so no improvement would be made by increasing the price.

Now we fix $v_1 \geq p_1, v_2 \geq p_2$. If $v_1 \geq f_1^*(v_2)$ - we are done. Otherwise $v_1 < f_1^*(v_2)$. Recall that $\hat{f}_2$ is monotone, and that $\forall v_1: \hat{f}_2(v_1) \geq f_2^*(v_1)$. Additionally, $f_1^* \geq r_1^{\hat{f}_2}(v_2) = \sup \{v_1 | v_2 \geq \hat{f}_2(v_1)\} > v_1$, and so $\exists v_1' > v_1 : v_2 \geq \hat{f}_2(v_1')$. Therefore also $v_2 \geq f_2^*(v_1)$.
\end{proof}

From this lemma we learn that the pair $(f_1^*,f_2^*)$ is indeed tight after $(p_1,p_2)$, and above that point there will always be a trade when the seller's value is low enough - as illustrated in Figure \ref{fig:stars}.

In summary, we prove the main theorem of the section:

\begin{proof}[Proof of Theorem \ref{thm:bestmech}]
    Let $M$ be a \nice mechanism, let $(f_1,f_2)$ be the associated pair of functions of $M$, and let $\hat{f}_2, f_1^*$ and $f_2^*$ be defined as in Definition \ref{def:f'} for $(f_1,f_2)$ and the product distribution $\textib{V}$. 
    By using Lemma \ref{lem:starcompmono} three times, on the pairs of function $(f_1,\hat{f}_2)$, $(f^*_1,\hat{f}_2)$ and $(f^*_1, f^*_2)$, we derive that the pair of functions $(f_1^*, f_2^*)$ is compatible, each function is monotone non-decreasing and by Lemma \ref{lem:tight} they are tight. Thus, by Lemma \ref{lem:compmech}, $M_{f_1^*,f_2^*}$ is well-defined and \nice. Finally, by Lemma \ref{lem:moregft} $\textup{GFT}(M_{f_1^*,f_2^*},\textib{V}) \geq \textup{GFT}(M_{f_1,f_2},\textib{V})$.
\end{proof}

\section{Computing a GFT-optimal \Nice Mechanism under Independence}\label{sec:alg}

Before presenting our methodology for learning GFT-optimal mechanisms from samples, we introduce an algorithm designed to find a \nice mechanism that maximizes GFT on a product distribution with finite support,
when the product distribution is given to the algorithm as input. This will be useful in the next section, in which we construct a mechanism when we do not know the product distribution; rather, we have sample access to it. 
We show that when we only have sample access to a product distribution, 
we can learn a good mechanism by running the algorithm we present in this section on the empirical distribution generated from enough samples. 

In this section we consider the problem of finding a GFT-optimal mechanism for a given finite product distribution $\textib{V}=V_s\times V_1\times V_2$. %
First, to aid our description and computation of the algorithm, we construct an ordered set $S= \{s_1 < \ldots <s_m\}$ of size $m$, %
which is a union of the finite supports of the distributions $V_s,V_1,V_2$, so that the realized value of every agent is always in $S$. Clearly, it is sufficient to define the value of the functions $f_1^*$ and $f_2^*$ on on values in $S$. Moreover, we claim that to maximize the GFT, it is sufficient to consider only functions that output values in $S$. 
This is so because the mechanisms we consider break ties in favor of trade and in favor of the higher value.
Therefore, rounding a price that is not in $S$ to the closest value in $S$ would result in the exact same trade in all cases. This fact also allows us to assume that the (unrestricted) best prices $p_1,p_2$ are in $S$.

We leverage the result presented in Theorem \ref{thm:bestmech}%
, and use a dynamic programming approach to efficiently compute such a mechanism. Additionally, observe that as we only consider functions $f_1,f_2: S\rightarrow S$, since $S$ is finite we consider only a finite number of mechanisms. Therefore the maximum GFT is attainable and there exists a GFT-optimal mechanism.
By this and Theorem \ref{thm:bestmech}, we know there is a GFT-optimal mechanism with an associated compatible pair $(f_1^*,f_2^*)$ of monotone functions that have the following properties: 1) There is a point $(p_1,p_2)$ which represents the unrestricted best prices for each, and up to that point the functions are equal to those values. 2) After $(p_1,p_2)$ the functions are tight.

Before formalizing our result, we establish some useful notations. %
We use $V^{\geq r}$ to denote the distribution of $V$ conditioned on the value being at least $r$, i.e. $Pr[V^{\geq r} = v] = Pr[V = v | V \geq r]$. Similarly, we use the notation $V^{<r}$ for the conditional distribution defined by $Pr[V^{< r} = v] = Pr[V = v | V < r]$.

In the first step of the algorithm we find the best prices $p_1,p_2$ for distributions $V_s\times V_1$ and $V_s\times V_2$, respectively, by iterating over all elements in $S$ (for simplicity, we denote them as $p_1,p_2$ and not as indices in $S$). Then we construct a matrix $G$ that represents values in $S\times S$ in the range  from $(p_1,p_2)$ to $(s_m,s_m)$, where each element is defined as follows:

$$
    G[i,j] = \max_{\substack{\text{Compatible }(f_1,f_2)\text{ s.t.}\\(f_1(s_j) = s_i) \vee (f_2(s_i) = s_j)}}\left\{\textup{GFT}\left(M_{f_1,f_2}, \left(V_s \times V_1^{\geq s_i}  \times V_2^{\geq s_j}\right)\right)\right\} \cdot Pr[V_1 \geq s_i] \cdot Pr[V_2 \geq s_j]
$$

In other words, each element $G[i,j]$ contains the maximum GFT an optimal \nice mechanism can achieve on the product distribution $V_s \times V_1^{\geq s_i}  \times V_2^{\geq s_j}$, times the probability that $v_1 \geq s_i$ and $v_2 \geq s_j$, and conditioned on the fact that the functions $f_1,f_2$ begin at, or pass through, $s_i,s_j$.
For the functions to begin at, or pass through, $s_i,s_j$ they must meet the following condition: $(f_1(s_j) = s_i) \vee (f_2(s_i) = s_j)$.

Since at this step we only need to find the functions $f_1,f_2$ from $(p_1,p_2)$ on-wards and we know that they pass through that point, this condition is met at $(p_1,p_2)$ and so if the matrix is filled in correctly we will be able to use it to construct the functions $f_1,f_2$ from $(p_1,p_2)$ up to $(s_m,s_m)$. Figure \ref{fig:parts} illustrates the two parts of the functions that we compute in the algorithm (up to $(p_1,p_2)$, and from there onward). The  definition of $G[i,j]$ requires considering all compatible pairs of functions that begin at $s_i,s_j$, and there are exponentially many such compatible pairs of functions. However, in the next subsection we show how the matrix can nevertheless be filled in polynomial time.

When the matrix is full, we can traverse it by following the values that were chosen at each point, thus constructing the tight part of the compatible and tight pair of functions $(f_1^*,f_2^*)$. Finally, by combining this with the part that sets the best prices $p_1,p_2$ we get the pair of functions $(f_1^*,f_2^*)$ that are associated with a GFT-optimal mechanism for the given product distribution.

\begin{figure}[htp]
\centering
\begin{minipage}{.45\textwidth}
  \centering
  \includegraphics[width=\linewidth]{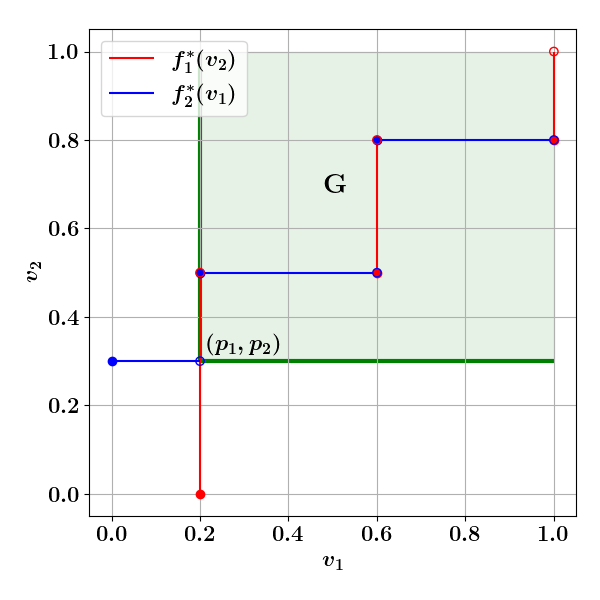} 
  \caption{Algorithm \ref{alg:maxgft} finds the values of $f_1^*,f_2^*$ in two parts. The best prices $p_1,p_2$ are found separately, and the matrix G is computed for the value above $(p_1,p_2)$, seen here in the green rectangle.}
  \label{fig:parts}
\end{minipage}%
\hfill
\begin{minipage}{.45\textwidth}
  \centering
  \includegraphics[width=\linewidth]{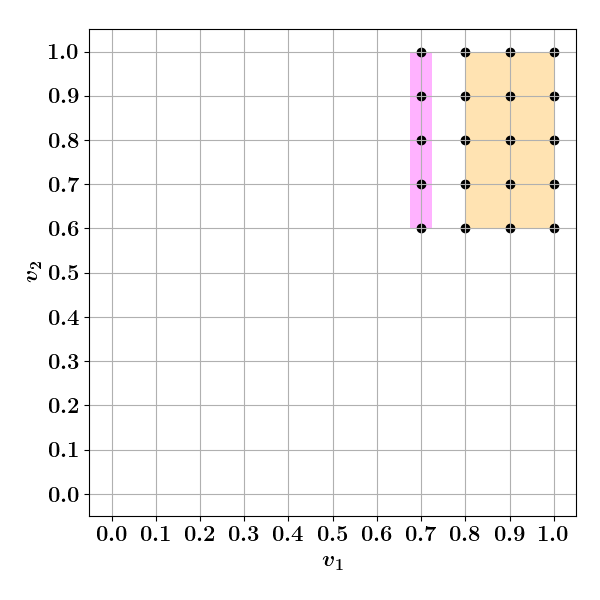} 
  \caption{Option (1) of the simple computation: By setting $f_2(0.7) = 0.6$ we gain GFT from the magenta column and $V_s$, and we add the optimal GFT that can be gained from the orange rectangle.}
  \label{fig:trunc}
\end{minipage}
\end{figure}

In summary, we present the main result of this section:

\algmax*

\subsection{Description of the Algorithm}

As stated previously, there is an exponential number of compatible and tight pairs $(f_1,f_2)$ of monotone functions $f_1,f_2: S\rightarrow S$ that begin at $s_i,s_j$. However we can use the condition that either $f_1(s_j)=s_i$ or $f_2(s_i)=s_j$, and that by Theorem \ref{thm:bestmech} there is a pair of tight and monotone functions that maximizes GFT, in order to reduce the number of functions that we must consider to only polynomially many. We first prove the following lemma regarding ties of these functions:

\begin{lemma}\label{lem:noties}
    Let $\textbf{V} = V_s \times V_1 \times V_2$ be a finite product distribution supported on the ordered set $S$, and let $p_1, p_2$ be the best prices for $V_s\times V_1$ and $V_s\times V_2$, respectively. Denote $m = |S|$, and let $ s_m \geq s_i \geq p_1, s_m \geq s_j \geq p_2$. 
    There exists a GFT-optimal mechanism $M$ s.t. the only point where $(f_1(s_j) = s_i) \wedge (f_2(s_i) = s_j)$ 
    is where $i = j = m$.
\end{lemma}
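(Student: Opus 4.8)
The plan is to run an exchange argument seeded by the structured optimal mechanism of Theorem~\ref{thm:bestmech}. Since the supports are finite there are only finitely many pairs $f_1,f_2\colon S\to S$, so a GFT-optimal \nice mechanism exists; by Theorem~\ref{thm:bestmech} we may take its associated pair $(f_1,f_2)$ to be compatible, coordinate-wise monotone non-decreasing, and tight after $(p_1,p_2)$, and --- since that pair is produced by the operator $g$ of Definition~\ref{def:f'}, whose outputs are restricted best prices --- we may also assume $f_1\ge p_1$ and $f_2\ge p_2$ pointwise. Call $(s_i,s_j)$ with $s_i\ge p_1,\ s_j\ge p_2$ a \emph{double-tie} if $f_1(s_j)=s_i$ and $f_2(s_i)=s_j$. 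The goal is to exhibit a GFT-optimal pair whose only double-tie is $(s_m,s_m)$.

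\textbf{Key structural fact.} At any double-tie $(s_i,s_j)$, compatibility forces a dichotomy on the two ``flat pieces'' that meet there: one cannot simultaneously have $f_1(v_2)=s_i$ for some $v_2>s_j$ and $f_2(v_1)=s_j$ for some $v_1>s_i$, since at such $(v_1,v_2)$ this would give $v_1>f_1(v_2)$ and $v_2>f_2(v_1)$. By the buyer-swap symmetry ($f_1\leftrightarrow f_2$, $p_1\leftrightarrow p_2$) it suffices to remove double-ties $(s_i,s_j)$ with $s_i\le s_j$; a such a point that is not $(s_m,s_m)$ has $s_i<s_m$, so there is room to ``raise'' $f_1$. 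The plan is then a local surgery that \emph{dodges} the lattice corner $(s_i,s_j)$: push $f_1$ up by exactly one step of $S$ past the end of the relevant flat piece of $f_2$ (so that no valuation falls strictly between the old and new values of $f_1$), re-coordinate $f_2$ as needed, and re-apply the tightening operator $g(\tilde f_1,V_s\times V_2)$ of Definition~\ref{def:f'}, which by Lemmas~\ref{lem:starcompmono}, \ref{lem:moregft} and \ref{lem:tight} preserves compatibility, monotonicity and tightness and does not lower GFT. One then checks that such a surgery creates at most one new double-tie and that its first coordinate strictly exceeds $s_i$ (it lies ``closer to the corner''), so the potential $\Phi=\sum_{\text{interior double-ties }(s_i,s_j)}\big((m-i)+(m-j)\big)$ strictly decreases at each step; being a non-negative integer it reaches $0$ after finitely many surgeries, at which point only $(s_m,s_m)$ can be a double-tie.

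\textbf{Main obstacle.} The crux --- and the step I expect to be hardest --- is verifying that the surgery itself does not decrease GFT. This is a valuation-triple-by-valuation-triple comparison over the affected triples (those with $v_2\ge s_j$ where $f_1(v_2)$ lies below the raised value), splitting into cases by the seller's value relative to consecutive elements of $S$ and by which of the two buyers has the higher value. Compatibility is used to locate $f_2(v_1)$ relative to $v_2$, the tie-break rule of Definition~\ref{def:m} is used to route ties to the higher-value buyer, and the facts that $s_i\le s_j$ and that the raise is by a single step of $S$ are precisely what keep each triple from losing value; where a naive one-step raise of $f_1$ alone would still lose GFT (e.g.\ when $f_2(v_1)>v_1$ on the relevant range while $f_1$ is ``low'' at height $v_2=f_2(v_1)$), the re-coordination of $f_2$ --- or, in the symmetric situation, choosing instead to modify $f_2$ --- is what closes the gap. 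Once GFT-neutrality of the surgery is established, the potential argument above completes the proof.
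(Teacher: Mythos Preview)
Your plan contains the right core move---raise $f_1$ by one step of $S$ at a double-tie---but you have misidentified where the difficulty lies. The paper's proof is essentially two lines: at a double-tie $(s_i,s_j)$ with (say) $s_j\ge s_i$, the tie-break rule of Definition~\ref{def:m} already awards the item to buyer~$2$, so buyer~$1$ bidding exactly $s_i$ never wins when $v_2=s_j$. Since all bids lie in the finite set $S$, buyer~$1$ wins at height $s_j$ iff $v_1\ge s_{i+1}$, both before and after redefining $f_1(s_j):=s_{i+1}$. Hence the set of $v_1$ for which buyer~$1$ trades is unchanged, and raising the trade price from $s_i$ to $s_{i+1}$ can only enlarge the set of $v_s$ for which the seller agrees, so GFT weakly increases; compatibility is trivially preserved because increasing $f_1$ only makes $v_1>f_1(v_2)$ harder. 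Iterate until only $(s_m,s_m)$ remains.

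What you flag as ``the main obstacle''---GFT-neutrality of the surgery---is therefore the trivial step, not the hard one, once you use that valuations live in $S$ and that the tie at $(s_i,s_j)$ was already being lost by the buyer whose threshold you raise. You do not need Theorem~\ref{thm:bestmech}, monotonicity, tightness, re-coordination of $f_2$, or the operator $g$: the lemma does not require the output pair to be monotone or tight, and your worry that ``a naive one-step raise of $f_1$ alone would still lose GFT'' is simply unfounded in this finite-support setting. Moreover, re-applying $g$ after each surgery is dangerous for your termination argument: $g$ acts globally on $f_2$, and you have not shown it cannot manufacture new double-ties at smaller indices, so the claimed strict decrease of $\Phi$ after the $g$-step is unjustified as written.
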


\begin{proof}
    Towards a contradiction, assume there is another such point $(s_i,s_j)$ and w.l.o.g. $i < m$. Since $(f_1(s_j) = s_i) \wedge (f_2(s_i) = s_j)$ we know the mechanism must break the tie when the players report $(s_i,s_j)$. Since the mechanism is optimal, we know the mechanism breaks the tie in favor of the larger value or arbitrarily if thy are equal. W.l.o.g. assume that the mechanism breaks the tie in favor of buyer 2, i.e. $s_j \geq s_i$. However, we can change $f_1(s_j) = s_{i+1}$ and the new mechanism will be equivalent, because for buyer 1 to win he must bid at least $s_{i+1}$. Therefore, there exists such an optimal mechanism $M$ where the only tie can be $i = j = m$. 
\end{proof}

Consider the case that $i = j = m$ does not hold. In that case, if $f_1(s_j)=s_i$ then $f_2(s_i) > s_j$, because from the compatibility property $f_2(s_i) \not < s_j$, and from Lemma \ref{lem:noties} it holds that $f_2(s_i) \neq s_j$. 
Similarly, if $f_2(s_i)=s_j$ then $f_1(s_j) > s_i$. Therefore we can consider $G[i,j]$ as the maximum of only two options:

\begin{align*}
    & G[i, j] = \max \Bigl\{\\
    & Pr[v_1 = s_i] \cdot Pr[v_2 \geq s_j] \cdot Pr[s_j \geq v_s] \cdot (\mathbb{E}[v_2\ |\  v_2 \geq s_j] - \mathbb{E}[v_s\ |\  s_j \geq v_s]) + G[i+1, j] \\
    & Pr[v_1 \geq s_i] \cdot Pr[v_2 = s_j] \cdot Pr[s_i \geq v_s] \cdot (\mathbb{E}[v_1\ |\  v_1 \geq s_i] - \mathbb{E}[v_s\ |\  s_i \geq v_s]) + G[i, j+1] \Bigr\}
\end{align*}

The first option handles the case that 
$f_2(s_i) = s_j$. In this case, we count the GFT from setting a price of $s_j$ for $s\rightarrow b_2$, which is $ Pr[v_2 \geq s_j] \cdot Pr[s_j \geq v_s] \cdot (\mathbb{E}[v_2\ |\  v_2 \geq s_j] - \mathbb{E}[v_s\ |\  s_j \geq v_s])$. We need to multiply that by the probability that $v_1 = s_i$. Finally, since we set $f_2(s_i) = s_j$ we know that $f_1(s_j) > s_i$ so we continue to $G[i+1,j]$ which considers the maximum GFT that can be gained on $V_s \times V_1^{\geq s_{i+1}}  \times V_2^{\geq s_j}$, times the probability that $v_1 \geq s_{i+1}$ and $v_2 \geq s_j$ and conditioned on the fact that the functions $f_1,f_2$ begin at $s_{i+1},s_j$. We can consider only the GFT from this point onwards because $f_1(s_j) > s_i$ and $f_2(s_i) = s_j$. The second option is symmetric.

To illustrate this, we present an example in Figure \ref{fig:trunc}. 
In this example $S$ includes all multiples of $0.1$. %
Assume that the goal is to maximize the GFT on the rectangle above $(0.7,0.6)$, i.e. $s_{i} = 0.7$ and $s_{j} = 0.6$. 
In the illustration, the magenta column represents the expected GFT from $\textup{GFT}(0.6, V_s\times V_2)$ when $v_1 = 0.7$. The orange rectangle represents the maximum possible GFT from the point $(0.8,0.6)$. Specifically, this is option (1) that appears above. %

As we now have a recursive definition of $G$, we can compute it using a dynamic program: Algorithm \ref{alg:maxgft} presents a pseudo-code description of the algorithm. %

The algorithm is separated into a few steps. First, we initialize the arrays we plan on using. The $P$ arrays are only needed for the buyers, and at element $i$ they contain the probability of a buyer having a specific value $s_i$. The $C$ and $E$ arrays are created for all three agents. The $C$ arrays represent the probability of a value being at least $s_i$ (or at most that value for the seller). The $E$ arrays contain at element $i$ the expected value conditioned on the fact that the random variable is at least $s_i$ (or at most for the seller). Initializing these arrays at the beginning will be helpful later in the algorithm.

Step 1 finds the best prices for $V_s\times V_1$ and $V_s\times V_2$ by iterating over all possible values in $S$ (which is sufficient as we have stated previously). Step 2 initializes the matrix G and sets the values to 0 where at least one of the coordinates is $m+1$ (this is outside the support, and is defined so for convenience). It also initializes $G_p$ which is a `pointer' matrix, which we will use in Step 4 when constructing $f_1^*,f_2^*$.

\begin{algorithm}[Computing a GFT-Optimal Mechanism for a Finite Product Distribution]\label{alg:maxgft}
    \textbf{Input:} Finite  distributions $V_s, V_1, V_2$ \\
    \noindent \textbf{Output:} A pair of functions $(f_1^*,f_2^*)$ such that $M_{f_1^*,f_2^*}$ GFT-optimal for $V_s\times V_1\times V_2$\\
    \\
    \textbf{Initialization:}
    
    \quad $S = supp(V_s) \cup supp(V_1) \cup supp(V_2) \cup \{0,1\}$, \texttt{sort $S$ in increasing order}, $m = |S|$

    \quad \texttt{Initialize $P_1, P_2, C_s, C_1, C_2, E_s, E_1, E_2, f_1^*,f_2^*$ as arrays of length $m$:}
    
    \quad $\forall b\in \{1,2\},\forall i \in [m]: P_b[i] = Pr[V_b = s_i],\  C_b[i] = Pr[V_b \geq s_i],\  E_b[i] = \mathbb{E}[V_b | V_b \geq s_i]$
    
    \quad $\forall i \in [m]: C_s[i] = Pr[V_s \leq s_i],\  E_s[i] = \mathbb{E}[V_s | V_s \leq s_i]$\\ 

    \noindent \textbf{Step 1:} Get Best Prices

    \quad $p_1 = \max_{i\in S}\{(E_1[i] - E_s[i])\cdot C_1[i] \cdot C_s[i]\}$ \ \ \  (Best price for $V_s\times V_1$)

    \quad $p_2 = \max_{i\in S}\{(E_2[i] - E_s[i])\cdot C_2[i] \cdot C_s[i]\}$ \ \ \  (Best price for $V_s\times V_2$)\\
    
    \noindent \textbf{Step 2:} Initialize the matrix $G$

    \quad \texttt{Initialize $G,G_p$ as a $(m+1)\times (m+1)$ matrices}

    \quad $\forall i,j \in [m+1]: G[i, m+1]=0, G[m+1,j] =0$\\

    \noindent \textbf{Step 3:} Fill $G$

    \quad \texttt{For {($i = m; S[i] \geq p_1; i--$)}:}
    
    \quad \quad \texttt{For {($j = m; S[j] \geq p_2; j--$)}:}

    \quad \quad \quad $G_1 = P_1[i] \cdot C_2[j] \cdot C_s[j] \cdot \Bigl(E_2[j] - E_s[j]\Bigr) + G[i+1,j]$
    
    \quad \quad \quad $G_2 = C_1[i] \cdot P_2[j] \cdot C_s[i] \cdot \Bigl(E_1[i] - E_s[i]\Bigr) + G[i,j+1]$

    \quad \quad \quad If $G_1 \geq G_2$:
    
    \quad \quad \quad \quad $G_p[i,j] =\ \rightarrow$

    \quad \quad \quad Else:

    \quad \quad \quad \quad $G_p[i,j] =\ \uparrow$
    
    \quad \quad \quad $G[i,j] = \max\{G_1, G_2\}$\\

    \noindent \textbf{Step 4:} Construct $f_1^*, f_2^*$

    \quad $\forall i \in [p_1], \forall j \in [p_2]: f_1^*[j] = p_1, f_2^*[i] = p_2$

    \quad $i = p_1, j = p_2$

    \quad \texttt{While $i \leq m$ or $j \leq m$ do:}

    \quad \quad \texttt{if $G[i,j] == \rightarrow$:}

    \quad \quad \quad $f_2^*[i] = j$
    
    \quad \quad \quad $i++$

    \quad \quad \texttt{else:}

    \quad \quad \quad $f_1^*[j] = i$
    
    \quad \quad \quad $j++$\\

    \noindent \textbf{Step 5:} Return $f_1^*,f_2^*$

\end{algorithm}

In Step 3 we fill the matrix according to the rule we described above. At $G[i,j]$ we consider both options: trading $s\rightarrow b_1$ at $s_i$ plus the GFT in $G[i,j+1]$ and trading at $s\rightarrow b_2$ at $s_j$ plus $G[i+1,j$. We set the value of $G[i,j]$ to be the maximum of these two options, and we set $G_p[i,j]$ to point to the direction that we used.

Step 4 constructs the payment functions from $G$. First, we set $\forall p_2\geq j \geq 0: f_1(j) = p_1$ and $\forall p_2\geq i \geq 0: f_2(i) = p_2$. Then we traverse the matrix from $G_p[p_1,p_2]$ to $G_p[m,m]$ by following the pointers. We set them to remember the values that we used at each step, so $G_p[i,j]$ points in the direction that we need to proceed.

It is clear that Step 3 fills the matrix at $G[i,j]$ according to the simplification of the computation we presented above. Additionally, it sets the base cases appropriately in Step 2, and constructs $f_1^*,f_2^*$ by following the maximum path we saved while filling $G$ - therefore Algorithm \ref{alg:maxgft} is correct.

\subsection{Proof of Theorem \ref{thm:algmax}}
We are now ready to prove the main theorem of this section:

\begin{proof}[Proof of Theorem \ref{thm:algmax}]
    Consider the mechanism $M_{f_1^*,f_2^*}$ which is the output of Algorithm \ref{alg:maxgft}. From the definitions of best prices we know that up to $p_1,p_2$ this mechanism achieves maximum GFT from trading at the best prices. This accounts for the GFT from $V_s \times V_1^{<p_1}  \times V_2$ and $V_s \times V_1 \times V_2^{< p_2}$. Also, by Lemma \ref{lem:tight} we know that there exists a \nice mechanism $M$ with an associated pair $(f_1^*,f_2^*)$ that are tight after $(p_1,p_2)$ and that these are unrestricted best prices, so either $f_1^*(p_2) = p_1$ or $f_2^*(p_1) = p_2$.
    
    As described above, we know that the maximum GFT that can be achieved on $V_s \times V_1^{\geq s_i} \times V_2^{\geq s_j}$ given that either $f_1(p_2) = p_1$ or $f_2(p_1) = p_2$ is equal to $G[i,j]$. Also, Algorithm \ref{alg:maxgft} constructs the functions so they achieve this exact GFT. So overall, the mechanism $M_{f_1,f_2}$ that uses the resulting functions from Algorithm \ref{alg:maxgft} achieves the maximum GFT possible for a \nice mechanism over all $V_s \times V_1 \times V_2$. Additionally, it is clear that Algorithm \ref{alg:maxgft} runs in polynomial time.
\end{proof}

\section{Learning to Maximize GFT under Independence}\label{sec:learn}

In this section we consider learning GFT-optimal mechanisms for a product of distributions that are supported on $[0,1]$ (when the support is not necessarily finite), while only having sample access to the distributions. Given parameters $\delta, \varepsilon>0$ we aim to find, with probability at least $1-\delta$, a \nice mechanism with GFT that is at most an additive $\varepsilon$ lower than the GFT-optimal mechanism tailored to the product distribution (which is unknown).
The learning procedure will run the algorithm presented in Section \ref{sec:alg} on the empirical product distribution that is derived from a polynomial (in $1/\varepsilon$ and $\log 1/\delta$) number of samples, finding the GFT-optimal mechanism for the empirical distribution. We show that the resulting mechanism indeed satisfies the requirements. 

The main tool we use is the $\varepsilon$-sample:

\begin{definition}[Def 14.6 from \cite{mitzenmacher2017probability}]
     Let $(X, \mathcal{R})$ be a range space and let $\mathcal{D}$ be a probability distribution on $X$. A set $S\subseteq X$ is an $\varepsilon$-sample w.r.t $\mathcal{D}$ if for all sets $R\in \mathcal{R}$,
     \begin{gather*}
         \left| Pr_\mathcal{D}[R] - \frac{|S\cap R|}{|S|} \right| \leq \varepsilon
     \end{gather*}
\end{definition}

An $\varepsilon$-sample is one in which the empirical distribution over the samples is almost the same as the underline distribution.
Intuitively, sampling enough points from a distribution creates a sample that represents the original distribution pretty well. In the case of an $\varepsilon$-sample, the original distribution is sampled enough times so that, with high probability, for every range in the probability space its weight in the sample is $\varepsilon$-close to its true weight in the distribution.

For some bounded distributions $V_s, V_1, V_2$ we denote their respective $\varepsilon$-samples by  $S_s, S_1, S_2$. We also denote  the uniform distribution over the multiset $S_i$ by $U_i$, i.e. $U_i = Uniform(S_i)$. We use $\textbf{U}=U_s \times U_1 \times U_2$ to denote the product of these uniform distributions, so sampling from $\textbf{U}$ returns a triplet of values $(u_s,u_1,u_2)$ which were each sampled independently from $U_s,U_1,U_2$, respectively. 

We now claim that a mechanism's GFT on the underlying product distribution that was sampled is close to its GFT on the $\varepsilon$-samples:

\begin{lemma}\label{lem:gftbound}
    Let $\textbf{V}=V_s\times V_1 \times V_2$ be a product distribution 
    over %
    $[0,1]^3$, and let $S_s, S_1, S_2$ be $\varepsilon$-samples for $V_s, V_1, V_2$, respectively. Let $\textbf{U} = U_s \times U_1 \times U_2$. For every compatible pair $(f_1,f_2)$ of  monotone non-decreasing functions:
    \begin{gather*}
        \left| \textup{GFT}(M_{f_1,f_2},\textbf{V}) - \textup{GFT}(M_{f_1,f_2},\textbf{U}) \right| \leq 12\varepsilon
    \end{gather*}
\end{lemma}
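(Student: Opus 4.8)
The plan is to bound the difference by decomposing the GFT of $M_{f_1,f_2}$ as a sum over the three possible trade outcomes (trade $s \to b_1$, trade $s \to b_2$, no trade) and, within each, expressing the contribution as an integral/sum over the events that determine it. The key structural fact to exploit is that a compatible pair of \emph{monotone} functions partitions $[0,1]^3$ into regions whose boundaries are monotone surfaces, so each event whose probability matters — things like $\{v_1 \geq x, v_2 < f_2(x)\}$, or $\{v_s \le f_1(v_2)\}$ — is, after fixing the value of one buyer, a product of one-dimensional intervals (or half-lines) in the other two coordinates. Since $S_s, S_1, S_2$ are $\varepsilon$-samples for the range spaces of intervals on $[0,1]$ (VC dimension $2$), each such one-dimensional probability is estimated to within $\varepsilon$ by the empirical measure, and independence lets these errors compound additively rather than multiplicatively.

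Concretely, first I would write
\[
\textup{GFT}(M_{f_1,f_2},\textbf{V}) = \mathbb{E}_{v_1 \sim V_1}\!\left[ g_1(v_1) \right] + \mathbb{E}_{v_2 \sim V_2}\!\left[ g_2(v_2) \right],
\]
where $g_1(v_1)$ is the expected gains from the event ``$b_1$ trades'' conditioned on buyer $1$'s value (so it involves the event $v_1 > f_1(v_2)$, i.e. $v_2$ lies in some interval determined monotonically by $v_1$, together with $v_s \leq f_1(v_2)$), and symmetrically for $g_2$; the tie cases contribute measure-zero-like terms or can be absorbed since the mechanism's tie-breaking is consistent. Each $g_i(v_i)$ is a bounded function (values in $[0,1]$ since valuations are in $[0,1]$), and it is itself a quantity of the form (probability of an interval event in the other buyer) $\times$ (conditional expectation of a $v-v_s$ difference, which is again an integral of a survival/CDF function over an interval). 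I would then swap the outer expectation from $V_i$ to $U_i$ and each inner probability/expectation from $V_{-i}\times V_s$ to $U_{-i}\times U_s$, one coordinate at a time, each swap costing $O(\varepsilon)$ by the $\varepsilon$-sample property applied to the appropriate interval range space (for the expectation-of-a-difference terms, write $\mathbb{E}[\max(v-v_s,0)\cdot \mathbb{1}(\text{event})]$ as $\int_0^1 \Pr[\text{stuff} \geq \tau]\,d\tau$ and apply the $\varepsilon$-sample bound uniformly in $\tau$, which is legitimate because each super-level set is again an interval/box event). Counting the swaps — outer buyer, inner buyer, seller, times two terms, plus the conversion of each conditional expectation — gives the constant $12$.

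The main obstacle I anticipate is the bookkeeping that keeps the errors \emph{additive} and within the claimed constant $12\varepsilon$: a naive telescoping that replaces a product of three probabilities $p \cdot q \cdot r$ by $\tilde p \cdot \tilde q \cdot \tilde r$ term-by-term introduces cross terms like $|\tilde p - p|\cdot q \cdot r + \tilde p \cdot |\tilde q - q| \cdot r + \dots$, which stay linear in $\varepsilon$ only because all factors are bounded by $1$ — so I need to be careful that every intermediate quantity I compare is genuinely a probability or a $[0,1]$-valued expectation, never something larger. The second delicate point is verifying that all the relevant events really are box/interval events in the $\varepsilon$-sample sense after conditioning on one coordinate; this is exactly where monotonicity of $f_1,f_2$ is used, since for monotone $f_i$ the set $\{v_{-i} : v_{-i} \bowtie f_{-i}(v_i)\}$ is a half-line, and I would isolate this as the one place the hypothesis of the lemma enters. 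Everything else — Fubini, the level-set representation of truncated expectations, and the definition of $\varepsilon$-sample — is routine.
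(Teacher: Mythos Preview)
Your overall plan (split $\textup{GFT}=\textup{GFT}_1+\textup{GFT}_2$, then replace $V_s,V_1,V_2$ by $U_s,U_1,U_2$ one at a time using the $\varepsilon$-sample property for intervals) is the right skeleton, but there is a genuine gap at the \emph{outer} swap, and your conditioning choice makes the inner swap harder than it needs to be.

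For the outer swap you simply assert that replacing $\mathbb{E}_{v_1\sim V_1}[g_1(v_1)]$ by $\mathbb{E}_{u_1\sim U_1}[g_1(u_1)]$ costs $O(\varepsilon)$. But an $\varepsilon$-sample only controls $\Pr[R]$ for interval $R$; it says nothing about $\mathbb{E}[g]$ for an arbitrary bounded $g$. Your $g_1$ is the bilateral GFT at price $f_1(v_2)$ (composed with something), which is \emph{not} monotone, so its super-level sets in $v_1$ need not be intervals and the layer-cake argument you sketch does not apply here. This is precisely the step where monotonicity should enter, and not in the place you name: the set $\{v_{-i}:v_{-i}\bowtie f_{-i}(v_i)\}$ is a half-line regardless of monotonicity. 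What monotonicity actually buys is that sets of the form $\{v_1:a\le f_2(v_1)\le b\}$ are intervals, which is what makes the outer swap tractable. Relatedly, by conditioning on the \emph{winning} buyer you couple $v_s$ and $v_{-i}$ through $v_s\le f_i(v_{-i})$, so the super-level sets of $(v_i-v_s)\mathbb{1}\{\cdot\}$ in the $(v_{-i},v_s)$-plane are not boxes; your one-shot layer-cake on the joint distribution does not yield interval events.

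The paper avoids both issues by conditioning $\textup{GFT}_2$ on the \emph{non}-trading buyer's value $v_1$, so that the inner event factors as $\{v_2\ge f_2(v_1)\}\times\{v_s\le f_2(v_1)\}$ with a fixed threshold. The outer swap is then handled not by telescoping but by \emph{rounding} $v_1$ to the sample grid $S$: one sandwiches $\textup{GFT}_2(\,\cdot\mid v_1)$ between expressions using $f_2(\lfloor v_1\rfloor_S)$ and $f_2(\lceil v_1\rceil_S)$, and monotonicity of $f_2$ is used to argue that the rounding discrepancy (the probability mass of $v_2$ or $v_s$ falling between $f_2(v_1)$ and its rounded value) is controlled by $\varepsilon$ except at finitely many ``large jumps'' of $f_2$, whose total contribution is again $O(\varepsilon)$. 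Your approach can be repaired---condition on the non-winning buyer and, for the outer swap, Fubini against the finite inner sample so that each term involves $\Pr_{V_1}[a\le f_2(v_1)\le b]$---but as written it does not yet go through.
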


\begin{proof}[Proof sketch] %
We next briefly explain the idea behind the proof of the lemma, its full proof can be found in Appendix \ref{app:missing7}. 
Denote by $S$ the union of the supports of $S_s,S_1,S_2$. Denote by $\lceil v \rceil_S$ and $\lfloor v \rfloor_S$ the rounding up and down of $v$ to the closest value in the $\varepsilon$-sample $S$, respectively. Additionally, denote by $\textup{GFT}_2(M_{f_1,f_2},\textbf{V})$ the expected GFT from trades that include buyer  $b_2$ ($s\rightarrow b_2$).

Now fix some $v_1$, and consider $\textup{GFT}_2(M_{f_1,f_2},\textbf{V})$ and $\textup{GFT}_2(M_{f_1,f_2},\textbf{U})$. For now, assume that the values of $f_2(v_1), f_2(\lceil v_1 \rceil_S),  f_2(\lfloor v_1 \rfloor_S)$ are `close' in the $\varepsilon$-sense; i.e. the probability of sampling a value between these values is at most $\varepsilon$. Therefore, with probability $1-\varepsilon$, the values of $v_s,v_2$ aren't `too close' to $f_2(v_1)$, and the same allocation of the mechanism will occur whether we round the value of $v_1$ before computing the price or not. If there is a significant difference between the values of $f_2(v_1), f_2(\lceil v_1 \rceil_S), f_2(\lfloor v_1 \rfloor_S)$ then the claim above does not necessarily hold. However, such significant jumps can only occur a small number of times, because $f_2$ is monotone and its range is in $[0,1]$. Therefore, from the properties of $\varepsilon$-samples, the probability of losing a trade in this case is also $O(\varepsilon)$.

Intuitively, this is exactly what happens in the $\varepsilon$-sample: the value of $v_1$ is rounded to some point in the sample, and the probability of that affecting the resulting allocation is negligible. Additionally, the expected GFT given that a specific allocation occurs is also not perturbed significantly, because we have an $\varepsilon$-sample for the participating agents so their expected value for the item is roughly the same. Thus, by taking the expectation over all values of $v_1$, and symmetrically doing the same for $v_2$ and considering the GFT from trades $s\rightarrow b_1$, results in an $O(\varepsilon)$ bound for these differences.
\end{proof}

Before using this lemma to reach our final result, we must explain how a mechanism that was learned on finite samples performs on the underlying distribution (that might be continuous). The mechanism we constructed is actually built from two functions $f_1^*,f_2^*:S\rightarrow S$, i.e. they are defined only on a finite set $S$. In order for them to be defined on the full domain $[0,1]$, we propose a simple rounding scheme: round all values $v_1\notin S$ down and all $v_2\notin S$ up, so we end up with $\lfloor v_1 \rfloor_S$ and $\lceil v_2 \rceil_S$. This means that for example, instead of setting the price for $s\rightarrow b_2$ to be $f_2(v_1)$ the mechanism sets it to $f_2(\lfloor v_1 \rfloor_S)$. Note also that the functions remain compatible, so the mechanism is still \nice.\footnote{Additionally, notice that from the way we chose to round the values, the functions are still tight. This actually isn't necessary, and different rounding schemes that preserve the compatibility are valid here as well.}

Using these results, we propose the following learning algorithm:

\begin{algorithm}[Learn an $\varepsilon$-Approximately Optimal Mechanism]\label{alg:learn}
    \textbf{Input:} Sample access to distributions $V_s, V_1, V_2$ supported on $[0,1]$, parameters $\varepsilon >0, \delta > 0$ \\
    \noindent \textbf{Output:} A mechanism that achieves, with probability $1-\delta$,  the optimal GFT on $V_s\times V_1\times V_2$ up to $O(\varepsilon)$\\
    \\
    \textbf{Step 1:} \texttt{Sample $O\left(\frac{1}{\varepsilon^2}\ln \frac{1}{\varepsilon\delta}\right)$ samples from each $V_s, V_1, V_2$ to create $S_s, S_1, S_2$}

    \textbf{Step 2:} \texttt{Run the GFT-Maximizing algorithm (Alg. \ref{alg:maxgft}) on $S_s, S_1, S_2$, get $f_1,f_2$}
        
    \textbf{Step 3:} \texttt{Return the mechanism $M_{f_1,f_2}$}
\end{algorithm}

In this algorithm we sample the distributions, and with probability at least $1-\delta$ the samples we generate are $\varepsilon$-samples. We then run Algorithm \ref{alg:maxgft} on the empirical distributions, and end up with functions that are associated with a GFT-optimal mechanism for the product of the empirical distributions. We compare any optimal mechanism %
for the true distributions \textbf{V} and the optimal mechanism we computed for \textbf{U}, the uniform  distributions over the samples. 
Since by Lemma \ref{lem:gftbound} if the samples are $\varepsilon$-samples both mechanisms achieve $O(\varepsilon)$-close GFT on both \textbf{V} and \textbf{U}, we get that our computed mechanism achieves close to optimal GFT on \textbf{V}.
Full proofs of these claims are given in the Appendix \ref{app:missing7}.

Summarizing the results from the previous few sections, we arrive at the conclusive theorem:

\thmlearn*

\printbibliography

\appendix

\section{Missing Proofs From Section \ref{sec:char}}

\subsection{Missing Proofs From Subsection 3.1}\label{app:missing31}
For completeness, we next present a proof that if $M$ is a \nice mechanism then the payment of a winning agent does not depend on his report.  

\begin{lemma}\label{lem:indi}
    Let $M=(A,p)$ be a \nice mechanism, then for every agent $i$ and $\forall v_i,v_i',\textbf{v}_{-i}$:
    \begin{gather*}
        \left( (A(v_i, \textib{v}_{-i}) \in W_i) \wedge (A(v_i', \textib{v}_{-i}) \in W_i) \right) \vee \left( (A(v_i, \textib{v}_{-i}) \notin W_i) \wedge (A(v_i', \textib{v}_{-i}) \notin W_i) \right)\\ 
        \Rightarrow p_i(v_i,\textbf{v}_{-i}) = p_i(v_i',\textbf{v}_{-i})
    \end{gather*}
\end{lemma}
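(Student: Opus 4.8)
The plan is to invoke the standard single-parameter / taxation-principle argument, specialized to the three cases $i = b_1$, $i = b_2$, and $i = s$. The common structure is: fix $\textib{v}_{-i}$ and vary only agent $i$'s report; DSIC forces agent $i$'s utility under truthful play to be at least what it would be under any misreport, and this pins the payment once the winning status is fixed. First I would recall that by Proposition \ref{prop:classicmech} a \nice mechanism has a monotone allocation rule together with critical-value payments, so for a buyer $b$ there is a critical value $c_b(\textib{v}_{-b})$ (depending only on the others' reports) such that $b$ wins exactly when $v_b' > c_b(\textib{v}_{-b})$ (with some tie-breaking at equality), and when $b$ wins the payment is exactly $c_b(\textib{v}_{-b})$, while when $b$ loses the payment is $0$ by normalization. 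Symmetrically, for the seller there is a threshold such that $s$ ``wins'' (keeps/does not sell) below it, and the critical-value payment $-p_s$ equals that threshold when $s$ is a winner, and $p_s = 0$ when $s$ loses.

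The core of the write-up is then a short DSIC-only derivation that does not even need to cite the full Myerson characterization, which I would present to keep the lemma self-contained. Suppose $A(v_i, \textib{v}_{-i}) \in W_i$ and $A(v_i', \textib{v}_{-i}) \in W_i$, i.e. both reports win. By normalization there is nothing to prove about losers, so consider this winning case for, say, a buyer $b$: truthfulness at $v_b$ against the (possibly misreported) profile gives $v_b - p_b(v_b, \textib{v}_{-b}) \ge v_b - p_b(v_b', \textib{v}_{-b})$ because reporting $v_b'$ also results in winning and the buyer's value enters linearly; hence $p_b(v_b, \textib{v}_{-b}) \le p_b(v_b', \textib{v}_{-b})$. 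Swapping the roles of $v_b$ and $v_b'$ (using truthfulness at $v_b'$) gives the reverse inequality, so the two payments are equal. For the case where both reports lose, $p_b = 0$ on both sides by normalization, so equality is immediate. The seller case is identical after accounting for the sign: when $s$ is a winner in both reports, truthfulness at $v_s$ gives $-p_s(v_s, \textib{v}_{-s}) - v_s \cdot 0 \ge -p_s(v_s', \textib{v}_{-s})$ style inequalities — more precisely $u_s(v_s, (v_s, \textib{v}_{-s})) \ge u_s(v_s, (v_s', \textib{v}_{-s}))$, and since in both reports the seller keeps no item-disutility term (she is a winner, meaning the item is sold and she does not hold it... here I need to be careful with the exact utility convention), one again gets $p_s(v_s, \textib{v}_{-s}) \le p_s(v_s', \textib{v}_{-s})$ and the reverse by symmetry; when $s$ loses in both reports, normalization gives $p_s = 0$ on both sides.

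The main obstacle — really the only subtle point — is bookkeeping the seller's utility expression correctly: in the model, $u_s(v_s, \textib{v}') = -\mathbbm{1}\{A(\textib{v}') \neq s\} v_s - p_s(\textib{v}')$, so ``winning'' for the seller ($A \in W_s$, i.e. a trade occurs) carries the $-v_s$ term, and ``losing'' ($A = s$) does not. When comparing two reports that both make the seller a winner, the $-v_s$ term is common to both sides of the DSIC inequality and cancels, leaving exactly the payment comparison; when both reports make the seller a loser, normalization kills the payment. So in every branch the argument reduces to: the value-dependent part of $i$'s utility is identical under the two reports (same winning status, same true value $v_i$), hence DSIC applied in both directions forces the payments to coincide. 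I would organize the proof as a two-by-three case split (winner/loser $\times$ agent $b_1$/$b_2$/$s$), observe the loser cases are trivial by normalization, and handle the three winner cases by the two-sided DSIC inequality, noting the seller case differs only by the sign convention in the definition of $u_s$ and $p_s$.

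\begin{proof}
Fix an agent $i$ and values $v_i, v_i', \textib{v}_{-i}$ satisfying the hypothesis, so that either both $A(v_i, \textib{v}_{-i})$ and $A(v_i', \textib{v}_{-i})$ lie in $W_i$, or neither does.

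\textbf{Case 1: neither report wins for $i$, i.e. $A(v_i,\textib{v}_{-i})\notin W_i$ and $A(v_i',\textib{v}_{-i})\notin W_i$.}
Since $M$ is normalized, $p_i(v_i,\textib{v}_{-i})=0=p_i(v_i',\textib{v}_{-i})$, and the claim holds.

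\textbf{Case 2: both reports win for $i$.}
We use that $M$ is DSIC, so in particular agent $i$ with true value $v_i$ prefers reporting $v_i$ to reporting $v_i'$ when the others report $\textib{v}_{-i}$, and agent $i$ with true value $v_i'$ prefers reporting $v_i'$ to reporting $v_i$.

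First suppose $i=b$ for some buyer $b\in\{b_1,b_2\}$. Since $A(v_b,\textib{v}_{-b})\in W_b$ and $A(v_b',\textib{v}_{-b})\in W_b$, in both cases $b$ receives the item, so
\begin{align*}
u_b\big(v_b,(v_b,\textib{v}_{-b})\big) &= v_b - p_b(v_b,\textib{v}_{-b}), &
u_b\big(v_b,(v_b',\textib{v}_{-b})\big) &= v_b - p_b(v_b',\textib{v}_{-b}).
\end{align*}
DSIC at true value $v_b$ gives $v_b - p_b(v_b,\textib{v}_{-b}) \geq v_b - p_b(v_b',\textib{v}_{-b})$, hence $p_b(v_b,\textib{v}_{-b}) \leq p_b(v_b',\textib{v}_{-b})$. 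Symmetrically, DSIC at true value $v_b'$ (comparing reporting $v_b'$ to reporting $v_b$, both of which result in $b$ winning) gives $v_b' - p_b(v_b',\textib{v}_{-b}) \geq v_b' - p_b(v_b,\textib{v}_{-b})$, hence $p_b(v_b',\textib{v}_{-b}) \leq p_b(v_b,\textib{v}_{-b})$. Combining, $p_b(v_b,\textib{v}_{-b}) = p_b(v_b',\textib{v}_{-b})$.

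Now suppose $i=s$. Since $A(v_s,\textib{v}_{-s})\in W_s$ and $A(v_s',\textib{v}_{-s})\in W_s$, in both cases a trade occurs and the seller does not hold the item, so $\mathbbm{1}\{A\neq s\}=1$ in both and
\begin{align*}
u_s\big(v_s,(v_s,\textib{v}_{-s})\big) &= -v_s - p_s(v_s,\textib{v}_{-s}), &
u_s\big(v_s,(v_s',\textib{v}_{-s})\big) &= -v_s - p_s(v_s',\textib{v}_{-s}).
\end{align*}
DSIC at true value $v_s$ gives $-v_s - p_s(v_s,\textib{v}_{-s}) \geq -v_s - p_s(v_s',\textib{v}_{-s})$, hence $p_s(v_s,\textib{v}_{-s}) \leq p_s(v_s',\textib{v}_{-s})$. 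Symmetrically, DSIC at true value $v_s'$ gives the reverse inequality, so $p_s(v_s,\textib{v}_{-s}) = p_s(v_s',\textib{v}_{-s})$.

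In all cases $p_i(v_i,\textib{v}_{-i}) = p_i(v_i',\textib{v}_{-i})$, as claimed.
\end{proof}
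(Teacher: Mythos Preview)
Your proof is correct and follows essentially the same approach as the paper: use normalization for the both-lose case, and in the both-win case apply DSIC in both directions to pin the payment. Your write-up is in fact more careful than the paper's, which treats all agents uniformly with the buyer utility formula $v_i - p_i$ and does not explicitly check the seller's sign convention; your separate treatment of $i=s$ with $u_s = -v_s - p_s$ is the cleaner way to handle it.
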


\begin{proof}
    If $(A(v_i, \textib{v}_{-i}) \notin W_{i}) \wedge (A(v_i', \textib{v}_{-i}) \notin W_{i})$ then the claim holds trivially as $p_i(v_i,\textbf{v}_{-i}) = p_i(v_i',\textbf{v}_{-i}) = 0$. Otherwise, we know that in both cases the outcome is in the set of winning outcomes for agent $i$. Assume, towards a contradiction and w.l.o.g, that $\exists v_i,v_i',\textib{v}_{-i}: p_i(v_i,\textib{v}_{-i}) > p_i(v_i',\textib{v}_{-i})$. Then when agent $i$ has value $v_i'$:
    $$
        u_i(v_i, (v_i, \textib{v}_{-i})) = v_i - p_i(v_i,\textib{v}_{-i}) < v_i - p_i(v_i',\textib{v}_{-i}) = u_i(v_i, (v_i', \textib{v}_{-i}))
    $$
    and this violates the DSIC property.
\end{proof}

We use the above to show that if $M$ is a \nice mechanism then the payment of a winning buyer does not depend on his report or on the seller report.  
\begin{lemma}\label{lem:singleval}
    Let $M=(A,p)$ be a \nice mechanism. For every buyer $b$ and bidding profiles $(v_s,v_b,v_{-b}), (v_s',v_b',v_{-b})$, if $A(v_s,v_b,v_{-b}) = A(v_s',v_b',v_{-b}) = b$ then $p_b(v_s,v_b,v_{-b}) = p_b(v_s',v_b',v_{-b})$.
\end{lemma}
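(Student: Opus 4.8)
The plan is to deduce Lemma~\ref{lem:singleval} from Lemma~\ref{lem:indi} by peeling off the dependence on the buyer's own report first, and then the dependence on the seller's report. Fix a buyer $b$ and two profiles $(v_s,v_b,v_{-b})$ and $(v_s',v_b',v_{-b})$ (note the other buyer's report $v_{-b}$ is the same in both) with $A(v_s,v_b,v_{-b})=A(v_s',v_b',v_{-b})=b$. I want to show $p_b$ agrees on these two profiles.

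First I would remove the buyer's own report. Holding the seller's report fixed at $v_s$ and the other buyer at $v_{-b}$, consider varying only $b$'s report from $v_b$ to $v_b'$. Since $A(v_s,v_b,v_{-b})=b\in W_b$, monotonicity of $A$ in $b$'s coordinate is not quite enough on its own, so instead I apply Lemma~\ref{lem:indi} directly with $i=b$: it tells us that if $b$ wins at both reports (with the rest of the profile fixed), the payments coincide. The one subtlety is that we are not told $A(v_s,v_b',v_{-b})\in W_b$ a priori. To handle this, I would use the critical-value characterization from Proposition~\ref{prop:classicmech}: there is a critical threshold $\theta=\theta(v_s,v_{-b})$ for $b$ such that $b$ wins iff his report exceeds $\theta$ (with ties breakable either way), and $p_b$ equals $\theta$ whenever $b$ wins. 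Since $b$ wins at report $v_b$ we have $v_b\ge\theta$, and thus $p_b(v_s,v_b,v_{-b})=\theta(v_s,v_{-b})$ — a quantity independent of $b$'s own report. This already reduces the claim to showing $\theta(v_s,v_{-b})=\theta(v_s',v_{-b})$.

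Second, I would remove the seller's report. The seller's payment in a \nice mechanism, when $b$ trades, is $p_s=-p_b$ by SBB and normalization (the losing buyer $-b$ pays $0$). Apply Lemma~\ref{lem:indi} now with $i=s$: varying the seller's report between $v_s$ and $v_s'$ while holding $(v_b\text{-side and }v_{-b})$ appropriately fixed, whenever the seller is a winner in both cases (i.e. there is a trade, $A\in W_s=\{1,2\}$), her payment is unchanged. Concretely, at the profile $(v_s,v_b,v_{-b})$ the seller wins (trade to $b$); I want to compare with $(v_s',v_b,v_{-b})$. If the seller still wins there, Lemma~\ref{lem:indi} gives $p_s(v_s,v_b,v_{-b})=p_s(v_s',v_b,v_{-b})$, hence $p_b$ agrees, hence $\theta(v_s,v_{-b})=\theta(v_s',v_{-b})$; and then combining with the first step applied at seller-report $v_s'$ (where $b$ wins at report $v_b'$, so $p_b(v_s',v_b',v_{-b})=\theta(v_s',v_{-b})$) closes the chain. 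If instead the seller does \emph{not} win at $(v_s',v_b,v_{-b})$, then by seller-monotonicity $v_s'>v_s$; but at $(v_s',v_b',v_{-b})$ the seller \emph{does} win, so by seller-monotonicity $v_b'$ must be large enough to still trigger a trade, and I can route the comparison through an intermediate profile $(v_s',v_b'',v_{-b})$ with $v_b''$ large (above both critical values) where trade to $b$ certainly occurs at both seller reports $v_s$ and $v_s'$, applying Lemma~\ref{lem:indi} along each edge.

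The main obstacle I expect is the bookkeeping around ties and the cases where changing the seller's report flips the seller from winner to loser: one has to be careful that the critical value $\theta(v_s,v_{-b})$ for buyer $b$ really is insensitive to $v_s$ \emph{on the region where $b$ actually wins}, which is exactly what a clean application of Lemma~\ref{lem:indi} (to the seller) plus the SBB/normalization identity $p_b=-p_s$ delivers, provided one chooses the intermediate comparison profiles so that the relevant agent is a winner at both endpoints of each comparison. Once the right sequence of intermediate profiles is fixed, each individual step is just an invocation of Lemma~\ref{lem:indi}, so the argument is short; the care is entirely in choosing that sequence so that no step compares a winning outcome with a losing one.
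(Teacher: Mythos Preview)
Your approach is essentially the paper's: apply Lemma~\ref{lem:indi} once to buyer $b$ and once to the seller, linked by the SBB/normalization identity $p_b=-p_s$. The paper writes this as the four-step chain
\[
p_b(v_s,v_b,v_{-b})=p_b(v_s,v_b',v_{-b})=-p_s(v_s,v_b',v_{-b})=-p_s(v_s',v_b',v_{-b})=p_b(v_s',v_b',v_{-b}),
\]
without explicitly verifying that $b$ wins at the intermediate profile $(v_s,v_b',v_{-b})$; you correctly flag that subtlety and propose routing through a profile with a larger buyer bid.

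One small slip in your case split: in the branch ``if the seller still wins at $(v_s',v_b,v_{-b})$'' you pass from $p_s(v_s,v_b,v_{-b})=p_s(v_s',v_b,v_{-b})$ to ``hence $p_b$ agrees''. But seller winning only means $A\in W_s=\{b_1,b_2\}$; if the trade at $(v_s',v_b,v_{-b})$ is with $-b$, then $p_b=0$ there and the step fails. The clean fix is already in your write-up and in fact works in \emph{all} cases, so no case split is needed: set $v_b''=\max(v_b,v_b')$. By buyer monotonicity (from $A(v_s,v_b,v_{-b})=b$ and $A(v_s',v_b',v_{-b})=b$) buyer $b$ wins at both $(v_s,v_b'',v_{-b})$ and $(v_s',v_b'',v_{-b})$, so each step of
\[
p_b(v_s,v_b,v_{-b})=p_b(v_s,v_b'',v_{-b})=-p_s(v_s,v_b'',v_{-b})=-p_s(v_s',v_b'',v_{-b})=p_b(v_s',v_b'',v_{-b})=p_b(v_s',v_b',v_{-b})
\]
is a direct application of Lemma~\ref{lem:indi} between two winning profiles (for $b$ or for $s$) together with $p_b=-p_s$.
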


\begin{proof}
    Let $v_s,v_s',v_b,v_b',v_{-b}$. $M$ is SBB and normalized, so:
    $$
        \forall \textib{v}: A(\textib{v}) = b \Rightarrow p_b(\textib{v}) = -p_s(\textib{v})
    $$
    So:
    $$
        p_b(v_s,v_b,v_{-b}) \stackrel{(1)}{=} p_b(v_s,v_b',v_{-b}) \stackrel{(2)}{=} -p_s(v_s,v_b',v_{-b}) \stackrel{(3)}{=} -p_s(v_s',v_b',v_{-b}) \stackrel{(4)}{=} p_b(v_s',v_b',v_{-b})
    $$
    where 
    equality $(1)$ and equality $(3)$ follow by Lemma $\ref{lem:indi}$, and 
    equality $(2)$ and equality $(4)$ follow from the property we stated above.
\end{proof}

\begin{lemma}\label{lem:uniqdef}
    The functions in Definition \ref{def:payfuncs} are uniquely defined.
\end{lemma}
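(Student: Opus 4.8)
The plan is to reduce the claim directly to Lemma \ref{lem:singleval}. Recall that Definition \ref{def:payfuncs} sets, for a fixed value $v_2$ of buyer $b_2$, the quantity $f_1(v_2)$ to equal $p_1(v'_s,v'_1,v_2)$ whenever there exists \emph{some} pair $(v'_s,v'_1)$ with $A(v'_s,v'_1,v_2)=b_1$, and to equal $\infty$ otherwise; symmetrically for $f_2$. In the ``otherwise'' branch there is nothing to check, since the value $\infty$ is assigned unconditionally. Hence the only thing to verify is that the value assigned in the first branch does not depend on the choice of the witnessing pair $(v'_s,v'_1)$.

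So I would suppose $A(v'_s,v'_1,v_2)=b_1$ and $A(v''_s,v''_1,v_2)=b_1$ for two (possibly different) pairs $(v'_s,v'_1)$ and $(v''_s,v''_1)$, with the report $v_2$ of the other buyer held fixed. Then I would apply Lemma \ref{lem:singleval} with $b=b_1$ and $v_{-b}=v_2$: since in both profiles the allocation assigns the item to $b_1$, the lemma yields $p_1(v'_s,v'_1,v_2)=p_1(v''_s,v''_1,v_2)$. Therefore the first branch of the definition is independent of the witness, and $f_1$ is well defined. The argument for $f_2$ is the mirror image, exchanging the roles of $b_1$ and $b_2$.

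There is essentially no hard step here: the content is entirely carried by Lemma \ref{lem:singleval}, which itself rests on the Taxation Principle (independence of a winner's payment from its own report, Lemma \ref{lem:indi}) together with strong budget balance and normalization. The only point I would state carefully is that we must keep $v_2$ fixed while varying $(v_s,v_1)$ — precisely the regime covered by Lemma \ref{lem:singleval} — so no additional argument is needed beyond invoking that lemma twice (once per buyer).
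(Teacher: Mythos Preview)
Your proposal is correct and matches the paper's own proof essentially verbatim: both handle the $\infty$ branch as trivially well-defined and invoke Lemma~\ref{lem:singleval} (with $v_{-b}$ fixed) to show the payment in the first branch is independent of the chosen witness $(v'_s,v'_b)$.
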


\begin{proof}
    Let $M=(A,p)$ be a \nice mechanism, and consider some bidding profile $\textib{v} = (v_s,v_1,v_2)$. For some buyer $b$, if $A(\textib{v}) = b$ then by Lemma \ref{lem:singleval} we know that for all reports $v_s,v_b$ s.t. $A(v_s,v_b,v_{-b}) = b$ the price is the same. Therefore there is a single value $p(\textib{v})$ and we set that to be the value of $f_b(v_{-b})$ by choosing some arbitrary $v_s,v_b$ s.t. $A(v_s,v_b,v_{-b}) = b$, so this is unique. Otherwise, there is no bidding profile $v_b,\textib{v}_{-b}$ s.t. $A(v_b,\textib{v}_{-i}) = b$ and so we set $f_b(v_{-b}) = \infty$, and that is uniquely defined as well.
\end{proof}

\begin{lemma}\label{lem:comppair}
    If a \nice mechanism $M=(A,p)$ is associated with a pair of functions $(f_1,f_2)$, then the pair $(f_1,f_2)$ is compatible.
\end{lemma}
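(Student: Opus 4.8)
The plan is to argue by contradiction, making precise the intuition that $f_1$ and $f_2$ record the critical values at which the two buyers trade, so they cannot both be ``won'' at the same profile. Suppose $(f_1,f_2)$ is not compatible, so there are $v_1,v_2\in\mathbb{R}_{\ge 0}$ with $v_1>f_1(v_2)$ and $v_2>f_2(v_1)$; in particular $f_1(v_2),f_2(v_1)<\infty$. By Definition~\ref{def:payfuncs} there is a profile in which $b_1$ wins while $b_2$ reports $v_2$, and there $b_1$ pays $f_1(v_2)$; by Lemma~\ref{lem:singleval} this payment is the same in \emph{every} profile in which $b_1$ wins and $b_2$ reports $v_2$, and by the critical-value characterization of Proposition~\ref{prop:classicmech} it is exactly the critical value at which $b_1$ wins against a report of $v_2$ by $b_2$, for the seller report appearing in that profile. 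Since $v_1$ exceeds this critical value, raising $b_1$'s report to $v_1$ leaves $b_1$ winning: there is a seller report $\sigma_1$ with $A(\sigma_1,v_1,v_2)=b_1$, and by SBB together with normalization the seller receives $f_1(v_2)$ there. Symmetrically, there is a seller report $\sigma_2$ with $A(\sigma_2,v_1,v_2)=b_2$, where the seller receives $f_2(v_1)$.

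Next I would exploit that these two profiles differ only in the seller's report. Comparing the seller's report $\sigma_1$ with $\sigma_2$ and applying DSIC to the seller in both directions (in the first profile the seller trades and receives $f_1(v_2)$, in the second she trades and receives $f_2(v_1)$) forces $f_1(v_2)=f_2(v_1)=:c$; ex-post IR of the seller in each profile then gives $\sigma_1,\sigma_2\le c$. I would then apply DSIC to the losing buyer in each of the two profiles: in $(\sigma_1,v_1,v_2)$ buyer $b_2$ loses, but any report with which $b_2$ could win against the pair $(\sigma_1,v_1)$ would charge him $f_2(v_1)=c<v_2$ (again by Lemma~\ref{lem:singleval}) and hence give him strictly positive utility, contradicting DSIC; so $b_2$ cannot win at all against $(\sigma_1,v_1)$, and symmetrically $b_1$ cannot win against $(\sigma_2,v_2)$.

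Finally I would push the seller's report downward toward $0$: since $A(\sigma_1,v_1,v_2)=b_1$ and $A(\sigma_2,v_1,v_2)=b_2$ both involve a trade, monotonicity of $A$ (part of Proposition~\ref{prop:classicmech}: lowering the seller's report keeps a trade a trade) implies that at seller reports below $\min\{\sigma_1,\sigma_2\}$ the item is still traded, and I would combine this with the two ``cannot win'' facts just obtained to pin down the identity of the winner and force one profile to allocate the item simultaneously to $b_1$ and to $b_2$ — an impossibility. I expect this last step to be the main obstacle: a deterministic \nice mechanism is in principle permitted to switch the winning buyer as the seller's report changes (DSIC only forces the two trade prices to agree, which is exactly why $c$ appears), so ruling out every such switch requires monotonicity, ex-post IR and DSIC used together. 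In the degenerate case $c=0$ the argument collapses cleanly: seller IR forces $\sigma_1=\sigma_2=0$, so the two profiles coincide and the contradiction $A(0,v_1,v_2)=b_1=b_2$ is immediate; the heart of the proof is handling $c>0$.
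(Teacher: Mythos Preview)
Your route is much longer than the paper's and, crucially, the gap you flag in step~5 does not close. The paper's proof is essentially two lines: it reads Proposition~\ref{prop:classicmech}(2a) as a \emph{characterization} of normalized DSIC mechanisms (payment equals critical value at every profile), picks any seller report $v_s$, and observes that $v_1 > f_1(v_2) = p_1(v_s,v_1,v_2)$ forces $b_1$ to win while $v_2 > f_2(v_1) = p_2(v_s,v_1,v_2)$ forces $b_2$ to win at the \emph{same} profile --- an immediate contradiction. There is no need to locate separate seller reports $\sigma_1,\sigma_2$, prove $f_1(v_2)=f_2(v_1)$, or push the seller toward $0$.

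The obstacle you identify in step~5 is genuine, not merely a matter of missing details. After steps~2--4 you have $A(\sigma_1,v_1,v_2)=b_1$, $A(\sigma_2,v_1,v_2)=b_2$, $f_1(v_2)=f_2(v_1)=c$, and the two ``cannot win'' facts, each tied to a \emph{specific} seller report. But monotonicity for the seller only says that lowering her report preserves \emph{trade}, not the identity of the trading buyer; each buyer's monotonicity concerns only his own report; and DSIC/IR for the seller are indifferent to which buyer trades since she receives $c$ either way. So as $\sigma$ ranges over $[0,c]$ the winner may flip between $b_1$ and $b_2$ without violating any of the constraints you have assembled, and pushing $\sigma\to 0$ does not force a single profile where both buyers must win. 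Your argument therefore needs exactly the ingredient the paper invokes directly --- that at a \emph{fixed} profile the payment already is the critical value --- at which point steps~2--5 become unnecessary.
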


\begin{proof}
    Towards a contradiction, assume that $\exists v_s,v_1,v_2$ s.t. $(v_1 > f_1(v_2)) \wedge (v_2 > f_2(v_1))$. By  Definition \ref{def:payfuncs} it holds that $f_1(v_2) = p_1(v_s,v_1,v_2)$. By Proposition  \ref{prop:classicmech}, if $v_1 > p_1(v_s,v_1,v_2) = f_1(v_2)$ then buyer $1$ wins. Similarly $v_2 > p_2(v_s,v_1,v_2) = f_2(v_1)$ and so buyer 2 wins as well. However, a mechanism is not feasible if both buyers trade concurrently. 
\end{proof}

\begin{proof}[Proof of Lemma \ref{lem:charfunc}]
    Let $f_1,f_2$ be functions defined as in Definition \ref{def:payfuncs}. From the way they are defined, if $A(\textib{v})=b_1$ then $p(\textib{v}) = (-f_1(v_2),f_1(v_2),0)$, and if $A(\textib{v})=b_2$, then $p(\textib{v}) = (-f_2(v_1),0,f_2(v_1))$. Additionally, by Lemma \ref{lem:comppair} this pair of functions is compatible. Finally, from Lemma \ref{lem:uniqdef} the functions are also uniquely defined.
\end{proof}

\subsection{Missing Proofs From Subsection 3.2}\label{app:missing32}

\begin{proof}[Proof of Lemma \ref{lem:compmech}]
    As we described previously, the notation of $p(\textib{v})$ describes the payment as such: if there is a trade $s\rightarrow b_i$, then $p_s(v_s,v_1,v_2)=-p_i(v_s,v_1,v_2), p_{-b_i}(v_s,v_1,v_2)=0$. Therefore, in all cases when there is a trade, the payment of the buyer equals the payment received by the seller, and the buyer that does not trade pays $0$. Additionally, when no trade occurs all of the payments are $0$. Therefore, the sum of payments is 0 at every bidding profile, and the mechanism is SBB. Furthermore, every losing agent pays $0$ so the mechanism is normalized.

    Additionally, as the allocation rule is monotone and payments are by critical values, by Proposition \ref{prop:classicmech} the mechanism is also DSIC. Moreover, since the mechanism is DSIC and normalized it is also ex-post IR. Finally, from the definition of $M_{f_1,f_2}$, if $A(\textib{v}) = 1$ then $p(\textib{v}) = (-f_1(v_2),f_1(v_2), 0)$ and if $A(\textib{v}) = 2$ then $p(\textib{v}) = (-f_2(v_1),0, f_2(v_1))$. Therefore $(f_1,f_2)$ are the associated pair of functions of  $M_{f_1,f_2}$.
\end{proof}

We will now prove that the mechanism defined in Definition \ref{def:m} maximizes the expected GFT amongst all mechanisms that share the same associated pair.

\begin{proof}[Proof of Lemma \ref{lem:mfmax}]
    Denote $M_{f_1,f_2} = (\hat{A},\hat{p})$. We will show that $\textup{GFT}(M_{f_1,f_2},\textib{v}) \geq \textup{GFT}(M,\textib{v})$ pointwise:
    \begin{enumerate}
        \item If $v_1 > f_1(v_2) \geq v_s$: From the definition of $f_1(v_2)$ we get $A(\textib{v}) = b_1 = \hat{A}(\textib{v})$, so $\textup{GFT}(M_{f_1,f_2},\textib{v}) = \textup{GFT}(M,\textib{v})$.
        \item Else, if $v_2 > f_2(v_1) \geq v_s$: From the definition of $f_2(v_1)$ we get $A(\textib{v}) = b_2 = \hat{A}(\textib{v})$, so $\textup{GFT}(M_{f_1,f_2},\textib{v}) = \textup{GFT}(M,\textib{v})$.
        \item Else, if $(v_1 = f_1(v_2) \geq v_s) \wedge (v_2 = f_2(v_1) \geq v_s$): $M_{f_1,f_2}$ obtains the maximal possible GFT in this case. %
        \item Else, if $v_1 = f_1(v_2) \geq v_s$: In this case either $M$ didn't trade at all and $\textup{GFT}(M,\textib{v})=0$ or it traded with $1$. Since $\hat{A}(\textib{v}) = b_1$ we get that in all cases $\textup{GFT}(M_{f_1,f_2},\textib{v}) \geq \textup{GFT}(M,\textib{v})$.
        \item Else, if $v_2 = f_2(v_1) \geq v_s$: In this case either $M$ didn't trade at all and $\textup{GFT}(M,\textib{v})=0$ or it traded with $2$. Since $\hat{A}(\textib{v}) = b_2$ we get that in all cases $\textup{GFT}(M_{f_1,f_2},\textib{v}) \geq \textup{GFT}(M,\textib{v})$.
        \item Else, $A(\textib{v}) = s = \hat{A}(\textib{v})$ and $\textup{GFT}(M_{f_1,f_2},\textib{v}) = 0 = \textup{GFT}(M,\textib{v})$.
    \end{enumerate}
\end{proof}

\section{Missing Proofs From Section \ref{sec:optimal}}\label{app:missing5}

\subsection{Upper Semi-Continuity of GFT}\label{app:usc}

\begin{definition} [Bilateral GFT at price $p$]
    Let $V_s,V_b$ be two distributions over the valuations of the seller and the buyer. We define the \textup{Bilateral Gains From Trade at price $p$} as:
    \begin{gather*}
        \textup{GFT}(p, V_s\times V_b) = \mathbb{E}[V_b - V_s | V_b \geq p \geq V_s] \cdot Pr[V_b \geq p \geq V_s]
    \end{gather*}
\end{definition}

{For point $x_0$ and $\delta>0$, let  $B(x_0;\delta)$ be a ball of radius $\delta$ around $x_0$.}
\begin{definition}
    Let $f: D \rightarrow \mathbb{R}$ and let $x_0\in D$. We say that $f$ is \emph{upper semi-continuous at $x_0$} if $\forall \varepsilon >0\  \exists \delta >0$ s.t.:
    \begin{gather*}
        \forall x\in B(x_0;\delta)\cap D \quad f(x) < f(x_0) + \varepsilon
    \end{gather*}
    Additionally, we say that \emph{$f$ is upper semi-continuous} if it is upper semi-continuous at every $x\in D$.
\end{definition}

\begin{theorem}[Continuity from above, Theorem 2.1 in \cite{billingsley1995probability}]
    Let $(\Omega, \mathcal{F}, P)$ be a probability space, and let $A_1 \supseteq A_2 \supseteq \ldots $ be 
    a sequence of subsets of $\Omega$ ($\forall i\ A_i\subseteq \Omega$).
    Define:
    \begin{gather*}
        A = \bigcap_{n=1}^\infty A_n
    \end{gather*}
    then $lim_{n\rightarrow \infty}P(A_n) = P(A)$.
\end{theorem}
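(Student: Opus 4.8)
The plan is to reduce the statement to the countable additivity of $P$ by passing to complements within $A_1$, using crucially that a probability measure is \emph{finite}.

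First I would record that $A = \bigcap_{n=1}^\infty A_n$ is itself measurable, being a countable intersection of members of the $\sigma$-algebra $\mathcal{F}$, so that $P(A)$ is well defined. Then I would set $B_n = A_1 \setminus A_n$ for each $n \geq 1$. Since $A_1 \supseteq A_2 \supseteq \cdots$, the sequence $(B_n)_{n\geq 1}$ is increasing, $B_1 \subseteq B_2 \subseteq \cdots$, and a routine set-theoretic identity gives $\bigcup_{n=1}^\infty B_n = A_1 \setminus \bigcap_{n=1}^\infty A_n = A_1 \setminus A$. All the $B_n$ and their union lie in $\mathcal{F}$.

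The key step is continuity from \emph{below} for the increasing sequence $(B_n)$, which I would prove directly from countable additivity: express the union as the disjoint union $B_1 \sqcup (B_2\setminus B_1) \sqcup (B_3\setminus B_2) \sqcup \cdots = \bigcup_{n} B_n$, apply countable additivity of $P$, and observe that the $N$-th partial sum of the resulting series telescopes exactly to $P(B_N)$; hence $P(B_N) \to P\!\left(\bigcup_n B_n\right) = P(A_1\setminus A)$ as $N\to\infty$. Finally, because $P$ is a probability measure and in particular $P(A_1)\leq 1<\infty$, one may subtract: $P(B_n) = P(A_1) - P(A_n)$ and $P(A_1\setminus A) = P(A_1) - P(A)$, so $P(A_1)-P(A_n) \to P(A_1)-P(A)$, and cancelling the finite quantity $P(A_1)$ yields $\lim_{n\to\infty} P(A_n) = P(A)$, as claimed.

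The only point requiring any care—hardly an obstacle here—is that finiteness of $P$ is genuinely used when cancelling $P(A_1)$; the statement is false for general (non-finite) measures, as the example $A_n = [n,\infty)$ under Lebesgue measure shows. Since in our setting $P$ is a probability measure, $P(A_1)<\infty$ is automatic and the argument goes through with no additional hypotheses.
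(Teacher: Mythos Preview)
Your proof is correct and is the standard argument: reduce continuity from above to continuity from below by taking complements within $A_1$, prove continuity from below by disjointifying and telescoping via countable additivity, and then cancel the finite quantity $P(A_1)$. The remark that finiteness is essential is also apt.

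There is, however, nothing to compare against: the paper does not give its own proof of this statement. It is quoted verbatim as Theorem~2.1 from Billingsley's \emph{Probability and Measure} and used as a black box (to derive Corollary~\ref{cor:local} and then the upper semi-continuity of the bilateral GFT function). So your write-up stands on its own as a correct proof of the cited result.
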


\begin{corollary}\label{cor:local}
    Let $(\Omega, \mathcal{F}, P)$ be a probability space, and let $A_1 \supseteq A_2 \supseteq \ldots $ be 
    a sequence of subsets of $\Omega$ ($\forall i\ A_i\subseteq \Omega$).
    Then $\lim_{n\rightarrow\infty}P(A_n) - P(A) = 0$. Additionally, if $\Omega \subseteq \mathbb{R}^k$ then:
    \begin{gather*}
        \forall \omega \in \Omega,\varepsilon > 0 \ \exists\delta > 0\text{ s.t. }P(B(\omega;\delta)\cap \Omega) - P(\omega) < \varepsilon
    \end{gather*}
\end{corollary}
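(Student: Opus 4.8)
The plan is to derive both parts directly from the preceding continuity-from-above theorem (Theorem 2.1 of \cite{billingsley1995probability}). For the first part there is essentially nothing to do: that theorem already gives $\lim_{n\to\infty} P(A_n) = P(A)$ for any nested sequence $A_1 \supseteq A_2 \supseteq \cdots$ with $A = \bigcap_{n=1}^\infty A_n$. Since $P(A) \leq 1 < \infty$ is a fixed real number, subtracting it from both sides of this limit yields $\lim_{n\to\infty}\bigl(P(A_n) - P(A)\bigr) = 0$, which is exactly the asserted identity.

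For the second part, fix $\omega \in \Omega$ and $\varepsilon > 0$ (writing $P(\{\omega\})$ for the quantity denoted $P(\omega)$ in the statement). Define $A_n := B(\omega; 1/n) \cap \Omega$ for $n \geq 1$. Because the balls $B(\omega;1/n)$ are nested decreasing in $n$, so is the sequence $(A_n)_n$, i.e.\ $A_1 \supseteq A_2 \supseteq \cdots$. In $\mathbb{R}^k$ with the Euclidean metric one has $\bigcap_{n=1}^\infty B(\omega;1/n) = \{\omega\}$ — whether the balls are taken open or closed, any point at positive distance from $\omega$ is excluded as soon as $1/n$ drops below that distance — and since $\omega \in \Omega$ this gives $A := \bigcap_{n=1}^\infty A_n = \{\omega\}$. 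Applying the first part (equivalently, the continuity-from-above theorem) to the sequence $(A_n)_n$ yields $\lim_{n\to\infty}\bigl(P(A_n) - P(\{\omega\})\bigr) = 0$. Hence there is an $N$ with $P(A_N) - P(\{\omega\}) < \varepsilon$; setting $\delta := 1/N$ and noting $A_N = B(\omega;\delta)\cap\Omega$ gives the desired inequality $P(B(\omega;\delta)\cap\Omega) - P(\{\omega\}) < \varepsilon$.

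The argument is just a repackaging of the cited theorem, so I do not expect any real obstacle; the only points needing (routine) attention are measurability — singletons and (open or closed) balls of $\mathbb{R}^k$ are Borel sets and hence lie in $\mathcal{F}$, so every probability written above is well defined — and the elementary set identity $\bigcap_{n} B(\omega;1/n) = \{\omega\}$, which is what forces the nested intersection down to the single point whose mass $P(\{\omega\})$ appears in the conclusion and which uses nothing beyond the metric structure of $\mathbb{R}^k$.
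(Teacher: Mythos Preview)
Your proof is correct and is exactly the intended derivation: the paper states this as an immediate corollary of the continuity-from-above theorem without giving an explicit proof, and your argument---subtracting $P(A)$ for the first part, and applying the first part to the nested balls $A_n = B(\omega;1/n)\cap\Omega$ with $\bigcap_n A_n = \{\omega\}$ for the second---is the standard route that the word ``corollary'' is signaling. Nothing to add.
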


\begin{lemma}\label{gftusc}
    For any two  bounded distributions $V_s,V_b$, the function $\textup{GFT}(\cdot, V_s\times V_b)$ is upper semi-continuous.
\end{lemma}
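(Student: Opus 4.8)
The plan is to unfold the definition. Writing $A_p := \{(v_s,v_b) : v_s \le p \le v_b\}$ for the trade region at price $p$, we have $\textup{GFT}(p, V_s\times V_b) = \mathbb{E}_{(v_s,v_b)\sim V_s\times V_b}[(v_b - v_s)\mathbbm{1}_{A_p}]$ (this matches the conditional-expectation form, with the convention that the product is $0$ when the conditioning event has probability $0$). Fix $M>0$ with $\Pr[\max(V_s,V_b)\le M]=1$, which exists since the distributions are bounded; on $A_p$ the integrand satisfies $0 \le v_b - v_s \le M$. Now fix a point $p_0$ in the domain and $\varepsilon > 0$; the goal is to produce $\delta>0$ so that $\textup{GFT}(p,V_s\times V_b) < \textup{GFT}(p_0,V_s\times V_b)+\varepsilon$ for all $p$ in the domain with $|p-p_0|<\delta$.

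The main step is the (disjoint) decomposition $A_p = (A_p\cap A_{p_0}) \cup (A_p \setminus A_{p_0})$, which yields
\[
\textup{GFT}(p, V_s\times V_b) = \mathbb{E}[(V_b-V_s)\mathbbm{1}_{A_p\cap A_{p_0}}] + \mathbb{E}[(V_b-V_s)\mathbbm{1}_{A_p\setminus A_{p_0}}].
\]
Since $A_p\cap A_{p_0}\subseteq A_{p_0}$ and the integrand is nonnegative on $A_{p_0}$, the first term is at most $\mathbb{E}[(V_b-V_s)\mathbbm{1}_{A_{p_0}}] = \textup{GFT}(p_0,V_s\times V_b)$, so everything reduces to bounding the second term. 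Here I would use the observation that $A_p\setminus A_{p_0}$ is contained in a thin slab: if $p\ge p_0$ then any $(v_s,v_b)\in A_p$ already has $v_b\ge p\ge p_0$, so it can fail to lie in $A_{p_0}$ only because $v_s>p_0$, and together with $v_s\le p$ this gives $A_p\setminus A_{p_0}\subseteq\{p_0<v_s\le p\}$; symmetrically, if $p\le p_0$ then $v_s\le p\le p_0$ and failure forces $v_b<p_0$, so $A_p\setminus A_{p_0}\subseteq\{p\le v_b<p_0\}$. Consequently the second term is at most $M\cdot\Pr[p_0<V_s\le p]$ in the first case and $M\cdot\Pr[p\le V_b<p_0]$ in the second.

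To finish, I would apply continuity from above (Corollary \ref{cor:local}) to the laws of $V_s$ and of $V_b$: the sets $\{p_0 < V_s \le p_0 + 1/n\}$ decrease to $\emptyset$ and the sets $\{p_0 - 1/n \le V_b < p_0\}$ decrease to $\emptyset$, so both $\Pr[p_0 < V_s \le p_0 + t]$ and $\Pr[p_0 - t \le V_b < p_0]$ tend to $0$ as $t \downarrow 0$. Picking $\delta>0$ small enough that each is below $\varepsilon/M$ gives $\textup{GFT}(p,V_s\times V_b) < \textup{GFT}(p_0,V_s\times V_b) + \varepsilon$ for every $p$ in the domain with $|p-p_0|<\delta$, which is upper semi-continuity at $p_0$; since $p_0$ is arbitrary, $\textup{GFT}(\cdot,V_s\times V_b)$ is upper semi-continuous. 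I do not expect a real obstacle — the argument is elementary bookkeeping — but the point worth flagging is \emph{why} only the upper inequality survives: the ``lost'' trades, captured by $A_{p_0}\setminus A_p$ (a slab of the form $\{p<V_s\le p_0\}$ or $\{p\le V_b<p_0\}$), need \emph{not} have vanishing mass if $V_s$ or $V_b$ has an atom at $p_0$, which is precisely the source of a possible downward jump of GFT at $p_0$ and the reason Definition \ref{def:resbp} is phrased with $\sup/\argsup$ rather than assuming a maximizer exists by continuity.
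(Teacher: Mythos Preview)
Your proof is correct and follows essentially the same approach as the paper's: both bound $\textup{GFT}(p)-\textup{GFT}(p_0)$ by discarding the (nonnegative) contribution of lost trades and controlling the new trades via the probability of a thin slab $\{p_0<V_s\le p\}$ or $\{p\le V_b<p_0\}$, then invoke continuity from above (Corollary \ref{cor:local}) to make that probability small. Your set-theoretic packaging via $A_p=(A_p\cap A_{p_0})\cup(A_p\setminus A_{p_0})$ is a slightly cleaner presentation of the same computation the paper carries out directly on the conditional expectations.
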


\begin{proof}
    W.l.o.g assume $V_s,V_b$ are supported on  subsets of $[0,1]$. Denote by $P_s, P_b$ the probability measures on the values of  $V_s$ and $V_b$, respectively. Let $p_0\in [0,1]$, and let $\varepsilon > 0$.  For $\varepsilon > 0$, let $\delta_s>0$ be such that   $P_s(B(p_0;\delta_s)\cap [0,1]) - P_s(p_0) < \varepsilon$, and let $\delta_b>0 $ be such that  $P_b(B(p_0;\delta_b)\cap [0,1]) - P_b(p_0) < \varepsilon$ (such values exist by Corollary \ref{cor:local}). Let $\delta = \min \{\delta_s,\delta_b\}$. Let $p \in B(p_0;\delta) \cap [0,1]$. We prove the claim by considering the following two cases:
    \begin{itemize}
        \item $p < p_0$:
        \begin{gather*}
            \textup{GFT}(p, V_s\times V_b)- \textup{GFT}(p_0, V_s\times V_b) = \\
            \mathbb{E}[V_b-V_s | V_b \geq p \geq V_s]\cdot Pr[V_b \geq p \geq V_s] - \mathbb{E}[V_b-V_s | V_b \geq p_0 \geq V_s]\cdot Pr[V_b \geq p_0 \geq V_s] \\
             = \mathbb{E}[V_b-V_s | p_0 >V_b \geq p \geq V_s]\cdot Pr[p_0 > V_b \geq p \geq V_s] \\
             - \mathbb{E}[V_b-V_s | V_b \geq p_0 \geq V_s > p]\cdot Pr[V_b \geq p_0 \geq V_s > p] \\
             \leq Pr[p_0 > V_b \geq p \geq V_s] = Pr[p_0 > V_b \geq p | p \geq V_s] \cdot Pr[p \geq V_s] \\ \leq  Pr[p_0 > V_b \geq p] \leq P_b(B(p_0;\delta)\cap [0,1]) - P_b(p_0) < \varepsilon
        \end{gather*}

        \item $p > p_0$:
        \begin{gather*}
            \textup{GFT}(p, V_s\times V_b) - \textup{GFT}(p_0, V_s\times V_b) = \\
            \mathbb{E}[V_b-V_s | V_b \geq p \geq V_s]\cdot Pr[V_b \geq p \geq V_s]- \mathbb{E}[V_b-V_s | V_b \geq p_0 \geq V_s]\cdot Pr[V_b \geq p_0 \geq V_s] \\
             = \mathbb{E}[V_b-V_s | V_b \geq p \geq V_s > p_0]\cdot Pr[V_b \geq p \geq V_s > p_0] \\
             - \mathbb{E}[V_b-V_s | p >V_b \geq p_0 \geq V_s]\cdot Pr[p > V_b \geq p_0 \geq V_s] \\
             \leq Pr[V_b \geq p \geq V_s > p_0] = Pr[p \geq V_s > p_0 | V_b \geq p] \cdot Pr[V_b \geq p] \\
             \leq Pr[p \geq V_s > p_0] \leq P_s(B(p_0;\delta)\cap [0,1]) - P_s(p_0) < \varepsilon
        \end{gather*}
    \end{itemize}

    So $\textup{GFT}(\cdot, V_s\times V_b)$ is upper semi-continuous at every point $p_0$, and therefore it is upper semi-continuous.
\end{proof}
\subsection{Restricted Best Price}
We next show that in Definition \ref{def:resbp}
of the Restricted Best Price, the supremum is  actually a maximum:

\begin{lemma}\label{lem:argmax}
     Let $V_s,V_b$ be two bounded distributions, then $\exists p^*\in \mathbb{R}$ s.t. $\forall p\in \mathbb{R}: \textup{GFT}(p^*, V_s\times V_b) \geq \textup{GFT}(p, V_s\times V_b)$.
\end{lemma}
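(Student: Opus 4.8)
The plan is to reduce the statement to maximizing an upper semi-continuous function on a compact interval, and then invoke the standard fact that such a function attains its supremum. First I would use boundedness: assume without loss of generality that $V_s$ and $V_b$ are supported on $[0,1]$. For any $p<0$ the event $\{V_b \geq p \geq V_s\}$ is empty (since $V_s \geq 0 > p$), and for any $p>1$ it is empty (since $V_b \leq 1 < p$); in both cases $\textup{GFT}(p, V_s\times V_b)=0$. Moreover $\textup{GFT}(p, V_s\times V_b)\geq 0$ for every $p$, because on the event $\{V_b \geq p \geq V_s\}$ we have $V_b - V_s \geq 0$. Hence it suffices to produce $p^* \in [0,1]$ maximizing $\textup{GFT}(\cdot, V_s\times V_b)$ over $[0,1]$: such a $p^*$ satisfies $\textup{GFT}(p^*, V_s\times V_b) \geq \textup{GFT}(p, V_s\times V_b)$ for all $p\in[0,1]$ by construction, and for all $p\notin [0,1]$ since there $\textup{GFT}(p, V_s\times V_b)=0 \leq \textup{GFT}(p^*, V_s\times V_b)$.

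Next I would carry out the compactness argument. Set $s = \sup_{p\in[0,1]} \textup{GFT}(p, V_s\times V_b)$; this is finite (indeed at most $1$, since on the conditioning event $V_b-V_s \leq 1$). Choose a sequence $(p_n)_n \subseteq [0,1]$ with $\textup{GFT}(p_n, V_s\times V_b) \to s$, and by Bolzano--Weierstrass extract a subsequence $(p_{n_k})_k$ converging to some $p^* \in [0,1]$. By Lemma \ref{gftusc}, $\textup{GFT}(\cdot, V_s\times V_b)$ is upper semi-continuous, in particular at $p^*$: given $\varepsilon>0$ there is $\delta>0$ such that $\textup{GFT}(p, V_s\times V_b) < \textup{GFT}(p^*, V_s\times V_b) + \varepsilon$ for all $p \in B(p^*;\delta)\cap[0,1]$. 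Taking $k$ large enough that $p_{n_k}\in B(p^*;\delta)$ gives $\textup{GFT}(p_{n_k}, V_s\times V_b) < \textup{GFT}(p^*, V_s\times V_b)+\varepsilon$; letting $k\to\infty$ yields $s \leq \textup{GFT}(p^*, V_s\times V_b)+\varepsilon$, and then $\varepsilon\to 0$ yields $s \leq \textup{GFT}(p^*, V_s\times V_b)$. The reverse inequality $\textup{GFT}(p^*, V_s\times V_b) \leq s$ is immediate from the definition of $s$, so $p^*$ attains the supremum, which is exactly the claim.

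The only real subtlety — the ``main obstacle,'' such as it is — is making sure the optimization can be confined to a compact set: if the distributions were unbounded, the supremum of $\textup{GFT}(\cdot, V_s\times V_b)$ could fail to be attained (it might be approached only as $p\to\infty$), so the boundedness hypothesis is genuinely used here. Everything else is the textbook argument that an upper semi-continuous function on a compact set attains its maximum, made explicit via the convergent-subsequence computation above; the upper semi-continuity itself is exactly Lemma \ref{gftusc}, which in turn rests on continuity from above of probability measures.
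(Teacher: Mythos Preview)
Your proof is correct and follows essentially the same approach as the paper: invoke upper semi-continuity of $\textup{GFT}(\cdot, V_s\times V_b)$ (Lemma \ref{gftusc}) together with compactness to conclude that the supremum is attained. The paper simply cites the extreme value theorem for upper semi-continuous functions as a black box, whereas you spell out the Bolzano--Weierstrass/subsequence argument explicitly and, more carefully than the paper, justify restricting the domain to the compact interval $[0,1]$ before applying it.
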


\begin{proof}
    By Lemma \ref{gftusc} the function $\textup{GFT}(\cdot, V_s\times V_b)$ is upper semi-continuous. The function is also bounded, as the distributions are bounded, so from the extreme value theorem it attains its supremum.
\end{proof}

\begin{lemma}
    Let $V_s,V_b$ be bounded distributions, and let $p^*$ be the best price - then $p^*$ can be attained.
\end{lemma}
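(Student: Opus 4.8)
The plan is to show that the price $p^{*}=\sup\bigl(\argsup_{p\ge 0}\textup{GFT}(p,V_s\times V_b)\bigr)$ from Definition~\ref{def:resbp} is itself a maximizer of bilateral GFT, so that the $\sup$ and $\argsup$ there may be replaced by $\max$ and $\argmax$. First I would pass to a clean setting: since $V_s,V_b$ are bounded, assume w.l.o.g.\ that both are supported on $[0,1]$; and if $\textup{GFT}(p,V_s\times V_b)=0$ for every $p$ (no trade is ever beneficial) the claim is immediate under the natural convention that the best price is then $+\infty$, so assume $V^{*}:=\sup_{p\ge 0}\textup{GFT}(p,V_s\times V_b)>0$. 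By Lemma~\ref{lem:argmax} this supremum is attained, so the set of maximizers $A:=\{p\ge 0:\textup{GFT}(p,V_s\times V_b)=V^{*}\}$ is nonempty, and since $\textup{GFT}(\cdot,V_s\times V_b)\le V^{*}$ everywhere we may also write $A=\{p\ge 0:\textup{GFT}(p,V_s\times V_b)\ge V^{*}\}$; note $p^{*}=\sup A$.

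The heart of the argument is that $A$ is compact, and therefore contains its supremum. Boundedness is immediate: for $p>1$ we have $Pr[V_b\ge p]=0$, hence $\textup{GFT}(p,V_s\times V_b)=0<V^{*}$, so $A\subseteq[0,1]$. Closedness follows from the upper semi-continuity of $\textup{GFT}(\cdot,V_s\times V_b)$ established in Lemma~\ref{gftusc}, since superlevel sets of an upper semi-continuous function are closed: concretely, if $p_n\to p$ with each $p_n\in A$, then for every $\varepsilon>0$ upper semi-continuity at $p$ gives $\textup{GFT}(p,V_s\times V_b)+\varepsilon>\textup{GFT}(p_n,V_s\times V_b)=V^{*}$ for all large $n$, whence $\textup{GFT}(p,V_s\times V_b)\ge V^{*}$ and $p\in A$. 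A closed and bounded subset of $\mathbb{R}$ is compact, so $p^{*}=\sup A=\max A\in A$, i.e.\ $\textup{GFT}(p^{*},V_s\times V_b)=V^{*}$: the best price attains the maximal bilateral GFT.

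I do not expect any real obstacle here: the entire content is carried by the upper semi-continuity of bilateral GFT (Lemma~\ref{gftusc}) together with the trivial boundedness of the maximizer set coming from the supports lying in $[0,1]$. The only things needing care are (i) acknowledging at the outset the degenerate case in which every price yields GFT $0$, disposed of by the $+\infty$ convention, and (ii) noting that only prices $p>1$ must be excluded to obtain boundedness, since the price domain for the best price is $[0,\infty)$. The same argument applies verbatim to the restricted best price $p^{*}(r,V_s\times V_b)$, with the price domain $[0,\infty)$ replaced by $[r,\infty)$.
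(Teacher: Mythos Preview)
Your proof is correct and follows essentially the same route as the paper's: both use Lemma~\ref{lem:argmax} to get a nonempty maximizer set and then invoke the upper semi-continuity of $\textup{GFT}(\cdot,V_s\times V_b)$ (Lemma~\ref{gftusc}) to conclude that this set contains its supremum. Your phrasing via closed superlevel sets is cleaner than the paper's contradiction argument with an increasing sequence, and you also explicitly dispose of the degenerate $V^*=0$ case, which the paper leaves implicit.
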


\begin{proof}
    By Lemma \ref{lem:argmax} we get that $\argmax\limits_{p\in \mathbb{R}} \textup{GFT}(p, V_s\times V_b)$ is well defined. Now, towards a contradiction, assume that there is no $\max\left\{\argmax\limits_{p\in \mathbb{R}} \textup{GFT}(p, V_s\times V_b)\right\}$. Therefore, we have an infinite increasing series of $p_1 < p_2 < \ldots$ s.t. $\textup{GFT}(p_1, V_s\times V_b) = \textup{GFT}(p_2, V_s\times V_b) = \ldots$. Again using the fact that GFT is upper semi-continuous at $p$, we know that the supremum of this sequence is attainable, and that there is a $\max\left\{\argmax\limits_{p\in \mathbb{R}} \textup{GFT}(p, V_s\times V_b)\right\}$.
\end{proof}

\begin{lemma}\label{lem:rbpmono}
    Let $V_s,V_b$ be bounded distributions, then the Restricted Best Price is monotone non-decreasing in $r$, i.e. $\forall r_1 \leq r_2: p^*(r_1, V_s\times V_b) \leq p^*(r_2, V_s\times V_b)$.
\end{lemma}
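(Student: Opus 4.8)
The plan is to reduce everything to the attainment facts already proved just above. Concretely, fix bounded distributions $V_s,V_b$ and, for $r\in\mathbb{R}$, write $A_r := \argsup_{p\geq r}\textup{GFT}(p,V_s\times V_b)$, so that by definition $p^*(r,V_s\times V_b)=\sup A_r$. The first thing I would record is that $A_r$ is non-empty and attains its own supremum: this follows from the same argument as in Lemma \ref{lem:argmax} and the lemma immediately after it (upper semi-continuity plus boundedness, hence the extreme value theorem), applied to the restriction of $\textup{GFT}(\cdot,V_s\times V_b)$ to the interval $[r,\infty)$ — the proofs go through verbatim with $\mathbb{R}$ replaced by $[r,\infty)$. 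Consequently $p^*(r,V_s\times V_b)=\max A_r$ is a well-defined real number, it belongs to $A_r$, and since every element of $A_r$ is $\geq r$ we have $p^*(r,V_s\times V_b)\geq r$.

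Now fix $r_1\leq r_2$ and set $p_1:=p^*(r_1,V_s\times V_b)=\max A_{r_1}$. I would split into two cases according to the position of $p_1$ relative to $r_2$. If $p_1<r_2$, then $p^*(r_1,V_s\times V_b)=p_1<r_2\leq p^*(r_2,V_s\times V_b)$, using that $p^*(r_2,V_s\times V_b)\geq r_2$, and the claim holds. If instead $p_1\geq r_2$, observe that $p_1$ is feasible for the $r_2$-problem, i.e. $p_1\in[r_2,\infty)$; moreover, since $p_1\in A_{r_1}$ we have $\textup{GFT}(p_1,V_s\times V_b)=\sup_{p\geq r_1}\textup{GFT}(p,V_s\times V_b)\geq\textup{GFT}(p,V_s\times V_b)$ for every $p\geq r_1$, and in particular for every $p\geq r_2$ (because $[r_2,\infty)\subseteq[r_1,\infty)$). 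Hence $p_1$ maximizes $\textup{GFT}(\cdot,V_s\times V_b)$ over $[r_2,\infty)$ as well, i.e. $p_1\in A_{r_2}$, and therefore $p^*(r_2,V_s\times V_b)=\max A_{r_2}\geq p_1=p^*(r_1,V_s\times V_b)$.

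In both cases $p^*(r_1,V_s\times V_b)\leq p^*(r_2,V_s\times V_b)$, which is exactly the asserted monotonicity. I do not expect any real obstacle here; the only point that deserves care is the remark that the attainment statements of the two preceding lemmas, which are phrased for the unrestricted problem, apply equally to the restricted one, so that $\max A_r$ genuinely exists and the case analysis above is legitimate.
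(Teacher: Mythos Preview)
Your proof is correct and follows essentially the same approach as the paper: both split into the cases $p^*(r_1,V_s\times V_b)<r_2$ and $p^*(r_1,V_s\times V_b)\geq r_2$, handling the first trivially via $p^*(r_2,V_s\times V_b)\geq r_2$ and the second by observing that $p_1$ remains optimal over the smaller feasible set $[r_2,\infty)$. The only cosmetic difference is that in the second case the paper concludes equality $p^*(r_1,V_s\times V_b)=p^*(r_2,V_s\times V_b)$ (noting no $p>p_1$ can beat $p_1$), whereas you conclude only the inequality $p^*(r_2,V_s\times V_b)\geq p_1$ from $p_1\in A_{r_2}$; either suffices.
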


\begin{proof}
    If $p^*(r_1, V_s\times V_b) < r_2$, then trivially $p^*(r_1, V_s\times V_b) < r_2 \leq p^*(r_2, V_s\times V_b)$. Otherwise, $p^*(r_1, V_s\times V_b) \geq r_2$, and also $p^*(r_1, V_s\times V_b) = \max\left\{\argmax\limits_{p \geq r_1} \textup{GFT}(p, V_s\times V_b)\right\}$. Therefore, there is no $p > p^*(r_1, V_s\times V_b)$ s.t. $\textup{GFT}(p, V_s\times V_b) \geq \textup{GFT}(p^*(r_1, V_s\times V_b), V_s\times V_b)$. Therefore, $p^*(r_1, V_s\times V_b) = p^*(r_2, V_s\times V_b)$.
\end{proof}

\subsection{Proofs for the Modification Steps}\label{app:mod}

\begin{lemma}\label{lem:resiff}
     $\forall v_1, v_2: f_1(v_2) \geq r_1^{f_2}(v_2) \iff  f_2(v_1) \geq r_1^{f_2}(v_1)$.
\end{lemma}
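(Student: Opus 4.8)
The plan is to prove the biconditional by re-expressing each of its two inequalities purely as the non-existence of a ``crossing'' configuration of the pair $(f_1,f_2)$, and then observing that -- because the compatibility relation of Definition~\ref{def:compat} is symmetric in its two coordinates -- these two non-existence statements are one and the same.

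First I would isolate the defining feature of a compatibility restriction. By Definition~\ref{def:comprest}, for a function $h$ and argument $x$ one has $r(x)=\sup\{w:x\ge h(w)\}$ when this set is nonempty and $r(x)=0$ otherwise; hence for every $c\ge 0$,
\[
c\ \ge\ r(x)\quad\Longleftrightarrow\quad\text{there is no }w\text{ with }w>c\text{ and }h(w)\le x ,
\]
the forward implication by the definition of the supremum and the reverse because any such $w$ would sit in the defining set (and in the empty branch both sides hold). Instantiating this with $h=f_2$ (so $r=r_1^{f_2}$), $x=v_2$, $c=f_1(v_2)$ turns the left-hand inequality $f_1(v_2)\ge r_1^{f_2}(v_2)$ into the condition $(\mathrm{L}_{v_2})$: there is no $w$ with $w>f_1(v_2)$ and $f_2(w)\le v_2$. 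Instantiating it with $h=f_1$ -- so the restriction $f_1$ imposes on $f_2$, written $r_2^{f_1}$ in the notation of Definition~\ref{def:comprest} -- at $x=v_1$, $c=f_2(v_1)$, the right-hand inequality of the lemma becomes $(\mathrm{R}_{v_1})$: there is no $w$ with $w>f_2(v_1)$ and $f_1(w)\le v_1$.

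Next I would argue that $(\mathrm{L}_{v_2})$ holding for all $v_2$ is equivalent to $(\mathrm{R}_{v_1})$ holding for all $v_1$. A witness against $(\mathrm{L})$ is a pair $(w,v_2)$ with $w>f_1(v_2)$ and $f_2(w)\le v_2$; renaming $w$ as $v_1$ this is a pair $(v_1,v_2)$ with $v_1>f_1(v_2)$ and $v_2\ge f_2(v_1)$. A witness against $(\mathrm{R})$ is a pair $(w,v_1)$ with $w>f_2(v_1)$ and $f_1(w)\le v_1$; renaming $w$ as $v_2$ this is a pair $(v_1,v_2)$ with $v_2>f_2(v_1)$ and $v_1\ge f_1(v_2)$. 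Compatibility of $(f_1,f_2)$ forbids any pair with \emph{both} inequalities strict, so in each witness exactly one inequality is weak; I would then show that this weak inequality can always be traded for the strict one in the other coordinate -- for a witness against $(\mathrm{L})$, if $v_2=f_2(v_1)$ one replaces $v_1$ by a slightly larger argument and uses monotonicity of $f_2$ (and, in the regime where the lemma is applied, tightness of the pair after $(p_1,p_2)$ as in Definition~\ref{def:tight}) to produce a witness against $(\mathrm{R})$, and symmetrically -- so the two classes of witnesses are in bijection. Since the truth value of $(\mathrm{L}_{v_2})$ does not involve $v_1$ and that of $(\mathrm{R}_{v_1})$ does not involve $v_2$, the equivalence of ``$(\mathrm{L})$ holds for every $v_2$'' and ``$(\mathrm{R})$ holds for every $v_1$'' is exactly the pointwise biconditional asserted. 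The step I expect to be the real obstacle is this trading of weak for strict inequalities at the boundary of contact between the two curves; it is where monotonicity of the restriction functions (noted after Definition~\ref{def:comprest}) and, where needed, tightness of $f_1,f_2$ enter, while the $\infty$-valued cases reduce at once to the empty branch of Definition~\ref{def:comprest}.
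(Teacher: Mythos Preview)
Your unpacking of the supremum is the same opening move as the paper's. But from there the paper finishes in four lines with no extra hypotheses: it rewrites the left side as the universally quantified implication ``$v_2\ge f_2(v_1)\Rightarrow v_1\le f_1(v_2)$'', rewrites the right side (with $r_2^{f_1}$ in place of the typo $r_1^{f_2}$, as you correctly surmised) as ``$v_1\ge f_1(v_2)\Rightarrow v_2\le f_2(v_1)$'', and observes that these two implications are symmetric in the roles of the two coordinates. That is the entire argument; there is no witness-bijection step.

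Your proposal goes wrong where it imports compatibility, monotonicity, and tightness. None of these is a hypothesis of the lemma: the statement is for arbitrary $f_1,f_2$, and its sole use in the paper (Lemma~\ref{lem:compres}) is precisely to \emph{derive} compatibility, so assuming it here would be circular. Tightness belongs to a later stage (Lemma~\ref{lem:tight}) and concerns specifically constructed functions, not the generic ones in this lemma. Moreover, your ``trading weak for strict'' manoeuvre does not actually work even granting monotonicity: if $v_2=f_2(v_1)$ and you replace $v_1$ by $v_1+\varepsilon$, monotonicity of $f_2$ gives $f_2(v_1+\varepsilon)\ge f_2(v_1)=v_2$, so you still have only $v_2\le f_2(v_1+\varepsilon)$, not the strict inequality you need to manufacture a witness against $(\mathrm{R})$. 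You were right to flag the strict/weak boundary as the delicate point---the paper's contraposition step passes over it lightly too---but the intended proof is just the symmetric rewriting above, not a perturbation argument.
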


\begin{proof}
    \begin{gather*}
        \forall v_1, v_2: f_1(v_2) \geq r_1^{f_2}(v_2) = \sup \{v_1 | v_2 \geq f_2(v_1)\} \\
        \iff \forall v_1, v_2: v_1 \geq f_1(v_2) \Rightarrow v_2 \leq f_2(v_1) \\
        \iff \forall v_1, v_2: v_2 \geq f_2(v_1) \Rightarrow v_1 \leq f_1(v_2) \\
        \iff  \forall v_1, v_2: f_2(v_1) \geq r_2^{f_1}(v_1) = \sup \{v_2 | v_1 \geq f_1(v_2)\}
    \end{gather*}
\end{proof}

\begin{lemma}\label{lem:compres}
    If $\forall v_1, v_2: f_1(v_2) \geq r_1^{f_2}(v_2)$ (or $f_2(v_1) \geq r_2^{f_1}(v_1)$, by Lemma \ref{lem:resiff}) then the pair $(f_1,f_2)$ is compatible.
\end{lemma}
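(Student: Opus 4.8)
The plan is to prove the statement by contradiction, essentially just unwinding the definition of the compatibility restriction. Suppose the hypothesis $\forall v_1,v_2:\ f_1(v_2)\ge r_1^{f_2}(v_2)$ holds but the pair $(f_1,f_2)$ is \emph{not} compatible. Then by definition of compatibility there exist $v_1,v_2\in\mathbb{R}_{\ge 0}$ with $v_1>f_1(v_2)$ and $v_2>f_2(v_1)$; I would fix such a pair and derive a contradiction.

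The key observation is that the second inequality certifies membership of $v_1$ in the set over which the supremum defining $r_1^{f_2}(v_2)$ is taken: since $v_2>f_2(v_1)$ we in particular have $v_2\ge f_2(v_1)$, so $v_1\in\{v:\ v_2\ge f_2(v)\}$. In particular this set is nonempty, so the first case of Definition \ref{def:comprest} applies and $r_1^{f_2}(v_2)=\sup\{v:\ v_2\ge f_2(v)\}\ge v_1$. Combining with the hypothesis gives $f_1(v_2)\ge r_1^{f_2}(v_2)\ge v_1$, which contradicts $v_1>f_1(v_2)$. Hence no such pair $(v_1,v_2)$ exists, i.e.\ $(f_1,f_2)$ is compatible.

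Finally, the parenthetical ``or $f_2(v_1)\ge r_2^{f_1}(v_1)$'' form of the hypothesis is handled for free: Lemma \ref{lem:resiff} shows that the two conditions are equivalent, so the same argument applies verbatim (or, symmetrically, one swaps the roles of the two buyers). I do not expect any real obstacle here; the only point requiring a moment's care is the degenerate case in which the relevant set is empty, where $r_1^{f_2}$ is defined to be $0$, but that case cannot arise once we have exhibited an explicit element of the set, so it does not affect the argument.
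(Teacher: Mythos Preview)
Your proof is correct and is essentially identical to the paper's own argument: both proceed by contradiction, pick $v_1,v_2$ with $v_1>f_1(v_2)$ and $v_2>f_2(v_1)$, and use $v_2\ge f_2(v_1)$ to conclude $f_1(v_2)\ge r_1^{f_2}(v_2)\ge v_1$, contradicting $v_1>f_1(v_2)$. Your version is slightly more explicit about the nonemptiness of the set defining the supremum, but the substance is the same.
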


\begin{proof}
    Contrariwise and w.l.o.g assume there are $v_1,v_2$ s.t. $(v_1 > f_1(v_2)) \wedge (v_2 > f_2(v_1))$. However, $f_1(v_2) \geq r_1^{f_2}(v_2) = \sup \{v_1 | v_2 \geq f_2(v_1)\} \geq v_1$ - and this is a contradiction.
\end{proof}

\begin{lemma}\label{lem:starcomp}
     Let $(f_1,f_2)$ be a pair of compatible functions, then each one of the pairs $(g(f_2,V_s\times V_1),{f}_2)$, $({f}_1,g(f_1,V_s\times V_2))$ is compatible.
\end{lemma}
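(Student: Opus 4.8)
The plan is to reduce both claims to Lemma \ref{lem:compres}, which says that if one function of a pair dominates, pointwise, the compatibility restriction that the other function imposes on it, then the pair is compatible. So all that is needed is to verify one pointwise inequality for each of the two pairs $(g(f_2,V_s\times V_1),f_2)$ and $(f_1,\,g(f_1,V_s\times V_2))$.

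First I would unwind the definitions. By Definition \ref{def:f'}, writing $\tilde f_2 = g(f_1, V_s\times V_2)$, we have $\tilde f_2(v_1) = p^*\!\big(r_2^{f_1}(v_1),\, V_s\times V_2\big)$ for every $v_1 \geq 0$. The key observation is that the restricted best price never drops below its restriction argument: by Definition \ref{def:resbp}, $p^*(r, V_s\times V_b) = \sup\{\argsup_{p\ge r}\textup{GFT}(p, V_s\times V_b)\}$, and the inner optimization ranges only over prices $p \ge r$, so $p^*(r, V_s\times V_b) \ge r$ (here I use Lemma \ref{lem:argmax}, which guarantees this supremum is attained and hence $p^*$ is a well-defined real number for bounded distributions). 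Therefore $\tilde f_2(v_1) \ge r_2^{f_1}(v_1)$ for all $v_1$, which is exactly the hypothesis of Lemma \ref{lem:compres} in its ``$f_2(v_1)\ge r_2^{f_1}(v_1)$'' form; hence $(f_1, \tilde f_2)$ is compatible. The argument for $(\tilde f_1, f_2)$ with $\tilde f_1 = g(f_2, V_s\times V_1)$ is the mirror image: $\tilde f_1(v_2) = p^*\!\big(r_1^{f_2}(v_2),\, V_s\times V_1\big) \ge r_1^{f_2}(v_2)$ for all $v_2$, which is the other hypothesis form in Lemma \ref{lem:compres}, so $(\tilde f_1, f_2)$ is compatible.

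I do not expect any real obstacle here: all the substantive work has been done in the earlier lemmas — the upper semi-continuity argument (Lemma \ref{gftusc}) that makes $p^*$ well-defined, and the restriction/compatibility equivalences (Lemmas \ref{lem:resiff} and \ref{lem:compres}). The only point requiring a moment's care is the degenerate branch of the restriction function: when no $v_b$ satisfies $v_{-b}\ge f_{-b}(v_b)$, the restriction is defined to be $0$, but then $p^*(0,\cdot)\ge 0$ holds trivially, so the inequality $\tilde f_b(v_{-b})\ge r_b^{f_{-b}}(v_{-b})$ is never in jeopardy. It is also worth noting that the hypothesis that $(f_1,f_2)$ is itself compatible is not actually used anywhere in the argument; the proof rests solely on the fact that $g$ produces a function whose values are never smaller than the corresponding restriction.
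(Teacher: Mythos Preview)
Your proposal is correct and follows essentially the same route as the paper: show that $\tilde f_b(v_{-b}) = p^*(r_b^{f_{-b}}(v_{-b}),\cdot) \ge r_b^{f_{-b}}(v_{-b})$ and then invoke Lemma~\ref{lem:compres}. Your write-up is more explicit about why $p^*(r,\cdot)\ge r$ and about the degenerate branch of the restriction, and your closing observation that the compatibility of the original pair $(f_1,f_2)$ is never actually used is accurate.
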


\begin{proof}
    Denote $\tilde{f}_2 = g(f_1,V_s\times V_2)$. From the way we defined $\tilde{f}_2$ we know that $\forall v_1, v_2: \tilde{f}_2(v_1) \geq r_2^{f_1}(v_1)$, and so by Lemma \ref{lem:compres} the pair $(f_1, \tilde{f}_2)$ is compatible. The reasoning for the other pair is symmetric.
\end{proof}

\begin{lemma}\label{lem:starmono}
     Let $(f_1,f_2)$ be a pair of compatible functions, then the functions $g(f_1,V_s\times V_2)$ and $g(f_2,V_s\times V_1)$ are each monotone non-decreasing.
\end{lemma}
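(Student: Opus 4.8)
The plan is to observe that each of $g(f_1,V_s\times V_2)$ and $g(f_2,V_s\times V_1)$ is, by definition, a composition of two monotone non-decreasing maps, and a composition of monotone non-decreasing maps is monotone non-decreasing. Concretely, write $\tilde f_2 = g(f_1,V_s\times V_2)$, so that for every $v_1$ we have $\tilde f_2(v_1) = p^*\bigl(r_2^{f_1}(v_1),\, V_s\times V_2\bigr)$ by Definition \ref{def:f'}. The two ``layers'' of this composition are the inner map $v_1\mapsto r_2^{f_1}(v_1)$ and the outer map $r\mapsto p^*(r,V_s\times V_2)$.

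First I would recall that the inner map is monotone non-decreasing: this is exactly the observation made right after Definition \ref{def:comprest}, namely that as $v_1$ increases the set $\{v_2 \mid v_1 \ge f_1(v_2)\}$ over which the supremum defining $r_2^{f_1}(v_1)$ is taken only grows (and the ``otherwise'' branch returns $0$, the minimum possible value), so $v_1\le v_1'$ implies $r_2^{f_1}(v_1)\le r_2^{f_1}(v_1')$. Next, the outer map is monotone non-decreasing in its restriction parameter $r$ by Lemma \ref{lem:rbpmono}. Chaining the two: for $v_1\le v_1'$ we get $r_2^{f_1}(v_1)\le r_2^{f_1}(v_1')$, and then applying Lemma \ref{lem:rbpmono} to these two restriction values yields
\[
\tilde f_2(v_1) = p^*\bigl(r_2^{f_1}(v_1), V_s\times V_2\bigr) \le p^*\bigl(r_2^{f_1}(v_1'), V_s\times V_2\bigr) = \tilde f_2(v_1'),
\]
which is precisely the monotonicity of $g(f_1,V_s\times V_2)$. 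The argument for $g(f_2,V_s\times V_1)$ is identical with the roles of the two buyers exchanged.

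There is essentially no genuine obstacle here: the entire content has been pre-packaged into Lemma \ref{lem:rbpmono} (monotonicity of the restricted best price in $r$) and into the elementary observation that a compatibility-restriction function is monotone. The only point worth being slightly careful about is that the ``otherwise'' case of $r_b^{f_{-b}}$ returns $0$ rather than, say, $\infty$, so that it cannot break monotonicity when passing from a $v_{-b}$ with an empty witness set to one with a nonempty witness set; but since $0$ is the least element of the range, this is immediate. Hence the lemma follows.

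\begin{proof}
Write $\tilde f_2 = g(f_1, V_s\times V_2)$, so that $\tilde f_2(v_1) = p^*\bigl(r_2^{f_1}(v_1), V_s\times V_2\bigr)$ for all $v_1\ge 0$. Fix $v_1\le v_1'$. The compatibility-restriction function $r_2^{f_1}$ is monotone non-decreasing: if $v_1\ge f_1(v_2)$ for some $v_2$ then also $v_1'\ge f_1(v_2)$, so the set $\{v_2 \mid v_1\ge f_1(v_2)\}$ is contained in $\{v_2 \mid v_1'\ge f_1(v_2)\}$ and the supremum defining $r_2^{f_1}(v_1')$ is taken over a superset; and if the former set is empty then $r_2^{f_1}(v_1)=0$ is the least possible value. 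Hence $r_2^{f_1}(v_1)\le r_2^{f_1}(v_1')$. Applying Lemma \ref{lem:rbpmono} with $r_1 = r_2^{f_1}(v_1)$ and $r_2 = r_2^{f_1}(v_1')$ gives
\[
\tilde f_2(v_1) = p^*\bigl(r_2^{f_1}(v_1), V_s\times V_2\bigr) \le p^*\bigl(r_2^{f_1}(v_1'), V_s\times V_2\bigr) = \tilde f_2(v_1').
\]
Thus $g(f_1,V_s\times V_2)$ is monotone non-decreasing. The proof that $g(f_2,V_s\times V_1)$ is monotone non-decreasing is identical, exchanging the roles of $b_1$ and $b_2$.
\end{proof}
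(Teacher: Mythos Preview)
Your proof is correct and follows exactly the same approach as the paper: both argue that $g(f_{-b},V_s\times V_b)$ is a composition of the compatibility-restriction function (monotone by the observation after Definition \ref{def:comprest}) with the restricted best price (monotone by Lemma \ref{lem:rbpmono}). The paper's proof is merely a terser version of yours.
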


\begin{proof}
    By Lemma \ref{lem:rbpmono} the restricted best price is monotone non-decreasing, and as we stated the compatibility restriction is also monotone non-decreasing. The functions $g(f_1,V_s\times V_2)$ and $g(f_2,V_s\times V_1)$ are a composition of these functions, and therefore are also monotone non-decreasing.
\end{proof}

\begin{proof}[Proof of Lemma \ref{lem:starcompmono}]
    Immediate by Lemmata \ref{lem:starcomp}, \ref{lem:starmono}.
\end{proof}

\begin{proof}[Proof of Lemma \ref{lem:moregft}]
    We prove the second inequality - that $\textup{GFT}(M_{{f}_1,\tilde{f}_{2}},\textbf{V}) \geq \textup{GFT}(M_{f_1,f_2},\textbf{V})$, the proof of the first is similar. Split the GFT to $\textup{GFT}_1$ and $\textup{GFT}_2$, as in Notation \ref{not:gft12}. For $\textup{GFT}_2$ and $\forall v_1$ we know that $\textup{GFT}(\tilde{f}_2(v_1), V_s\times V_2) \geq \textup{GFT}(f_2(v_1), V_s\times V_2)$, because $\tilde{f}_2(v_1)$ can be the same $f_2(v_1)$ or improve on it. 
    
    Since we didn't change $f_1$, the $\textup{GFT}_1$ is the same in both sides of the inequality except in cases of ties. In the case of a tie all that remains here is to show that we didn't lose any GFT from changing $f_2$ to $\tilde{f}_2$. The only way to lose $\textup{GFT}_1$ in this case is if we had a tie that $b_1$ won previously but now loses. This can occur only if at $v_1,v_2$ we have $f_2(v_2) \leq f_1(v_2) < \tilde{f}_2(v_1)$. However in this case what we lose in $\textup{GFT}_1$ we gain in $\textup{GFT}_2$ (and maybe even more), so the claim holds.
    
    Therefore, summing all cases we get that $\textup{GFT}(M_{{f}_1,\tilde{f}_{2}},\textbf{V}) \geq \textup{GFT}(M_{f_1,f_2},\textbf{V})$.
\end{proof}

\section{Missing Proofs From Section \ref{sec:learn}}\label{app:missing7}

We begin by introducing several well-established definitions and outcomes concerning learning from samples.

\begin{definition}[Def 14.1 from \cite{mitzenmacher2017probability}]
    A \textup{range space} is a pair $(X, \mathcal{R})$ where:
    \begin{enumerate}
        \item $X$ is a (finite or infinite) set of \textup{points};
        \item $\mathcal{R}$ is a family of subsets of $X$, called \textup{ranges}.
    \end{enumerate}
\end{definition}

In this paper we use the range space where $X=[0,1]$ and $\mathcal{R}$ is the family of all closed intervals $[a,b] \subseteq [0,1]$. It is known that the VC dimension of this range space is $2$ (\cite{mitzenmacher2017probability}, page 364).

\begin{lemma}\label{lem:sampleexp}
    Let $V$ be a distribution over %
    $[0,1]$, and let $S$ be an $\varepsilon$-sample of $V$. Then $\forall R = [a,b] \subseteq [0,1]: | \mathbb{E}_{v\sim V}[v\cdot \mathbb{I}_v\{R\}] - \mathbb{E}_{s\sim \textup{U}(S)}[s\cdot \mathbb{I}_s\{R\}] | \leq \varepsilon$, where $\mathbb{I}_v\{R\}$ is $1$ if $v\in R$ and $0$ otherwise. 
\end{lemma}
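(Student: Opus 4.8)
The plan is to reduce the weighted quantity $v\cdot \mathbb{I}_v\{R\}$ to an integral of plain interval-indicators, so that the $\varepsilon$-sample guarantee — which only controls probabilities of intervals — can be applied directly. The key identity is the layer-cake representation $v = \int_0^1 \mathbb{I}\{v\ge t\}\,dt$, valid for $v\in[0,1]$, which gives, for $R=[a,b]$,
\[
v\cdot \mathbb{I}_v\{R\} \;=\; \int_0^1 \mathbb{I}\{v\ge t\}\,\mathbb{I}\{v\in[a,b]\}\,dt \;=\; \int_0^1 \mathbb{I}\big\{v\in[\max\{a,t\},\,b]\big\}\,dt .
\]

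First I would apply Tonelli's theorem (the integrand is nonnegative and bounded) to both expectations, obtaining
\[
\mathbb{E}_{v\sim V}[v\cdot \mathbb{I}_v\{R\}] = \int_0^1 \Pr_{v\sim V}\!\big[v\in[\max\{a,t\},b]\big]\,dt,
\qquad
\mathbb{E}_{s\sim \textup{U}(S)}[s\cdot \mathbb{I}_s\{R\}] = \int_0^1 \frac{|S\cap[\max\{a,t\},b]|}{|S|}\,dt .
\]
Subtracting, the difference equals $\int_0^1 \Delta(t)\,dt$ where $\Delta(t) = \Pr_{v\sim V}[v\in[\max\{a,t\},b]] - |S\cap[\max\{a,t\},b]|/|S|$.

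Then I would bound $|\Delta(t)|$ pointwise by $\varepsilon$. When $t\le b$, the set $[\max\{a,t\},b]$ is a nonempty closed subinterval of $[0,1]$, hence a range in $\mathcal{R}$, so the definition of an $\varepsilon$-sample yields $|\Delta(t)|\le\varepsilon$; when $t>b$ the set is empty (any $v\in[a,b]$ satisfies $v\le b<t$), so both terms vanish and $\Delta(t)=0$. Therefore
\[
\Big| \mathbb{E}_{v\sim V}[v\cdot \mathbb{I}_v\{R\}] - \mathbb{E}_{s\sim \textup{U}(S)}[s\cdot \mathbb{I}_s\{R\}] \Big| \;\le\; \int_0^1 |\Delta(t)|\,dt \;\le\; \varepsilon ,
\]
which is the claim. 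There is no genuine obstacle here: the only points needing a touch of care are verifying that $[\max\{a,t\},b]$ is a legitimate range for every relevant $t$ (handled by the case split above) and justifying the interchange of expectation with the integral over $t$ (immediate by Tonelli, since the integrand is nonnegative and bounded by $1$).
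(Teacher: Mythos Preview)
Your proof is correct. The layer-cake representation $v=\int_0^1\mathbb{I}\{v\ge t\}\,dt$ together with Tonelli is exactly the right tool to reduce the weighted expectation to an average of interval probabilities, and the case split $t\le b$ versus $t>b$ cleanly handles the range-space requirement.

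The paper's own proof takes a shorter but looser path: it writes the difference as $\bigl|\int_R x\,(dV-dU(S))\bigr|$ and then asserts in one step that this is at most $\bigl|\int_R x\cdot\varepsilon\bigr|\le\varepsilon$, citing ``$\varepsilon$-sample properties''. That step is not self-evident as written, since the $\varepsilon$-sample guarantee bounds the signed measure $V-U(S)$ on \emph{intervals}, not any kind of density, so the inequality $\bigl|\int_R x\,d(V-U(S))\bigr|\le\varepsilon\int_R x$ (against Lebesgue measure, presumably) needs justification. Your layer-cake computation is precisely the missing link: it shows $\int_R x\,d\mu=\int_0^1\mu([\max\{a,t\},b])\,dt$ for $\mu=V-U(S)$, after which the pointwise bound $|\mu(\cdot)|\le\varepsilon$ on intervals gives the result. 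So the two arguments are the same in spirit, but yours spells out what the paper compresses into a single unjustified inequality; what you gain is rigor, at the cost of a few extra lines.
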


\begin{proof}
    Note that by the definition of expected value:
    \begin{align*}
        &\mathbb{E}_{v\sim V}[v\cdot \mathbb{I}_v\{R\}] = \int_R x\, dV &\mathbb{E}_{s\sim U(S)}[s\cdot \mathbb{I}_s\{R\}] = \int_R x\, dU(S)
    \end{align*}

    Taking the absolute value of the difference:
    \begin{gather*}
        \left| \mathbb{E}_{v\sim V}[v\cdot \mathbb{I}_v\{R\}] - \mathbb{E}_{s\sim \textup{U}(S)}[s\cdot \mathbb{I}_s\{R\}] \right| = \left| \int_R x\, dV - \int_R x\, dU(S) \right| \\
        = \left| \int_R x\, (dV-dU(S)) \right|
        \stackrel{(1)}{\leq} \left| \int_R x\cdot \varepsilon \right| \stackrel{(2)}{\leq} \varepsilon
    \end{gather*}
    where $(1)$ is by $\varepsilon$-sample properties, and $(2)$ is because $R \subseteq [0,1]$.
\end{proof}

\begin{theorem}[Thm. 14.15 from \cite{mitzenmacher2017probability}]\label{thm:esample}
    Let $(X, \mathcal{R})$ be a range space with VC dimension $d$ and let $\mathcal{D}$ be a probability distribution on $X$. For any $\varepsilon > 0, 1/2 > \delta>0$, there is an
    \begin{gather*}
        m = O\left(\frac{d}{\varepsilon^2}\ln \frac{d}{\varepsilon} + \frac{1}{\varepsilon^2}\ln\frac{1}{\delta}\right)
    \end{gather*}
    such that a random sample from $\mathcal{D}$ of size greater than or equal to $m$ is an $\varepsilon$-sample for $X$ with probability at least $1-\delta$.
\end{theorem}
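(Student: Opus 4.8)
The statement is the classical Vapnik--Chervonenkis $\varepsilon$-approximation bound, and the plan is to prove it by the standard double-sampling (symmetrization) argument combined with the Sauer--Shelah lemma and Hoeffding's inequality. Fix a sample size $m$ and write, for a sample $S$ of size $m$ and a range $R \in \mathcal{R}$, $\mathrm{err}_S(R) = \left| \Pr_{\mathcal{D}}[R] - |S\cap R|/m \right|$. Let $\mathrm{BAD}$ be the event $\{\exists R \in \mathcal{R}: \mathrm{err}_S(R) > \varepsilon\}$; the goal is to choose $m$ as in the statement so that $\Pr[\mathrm{BAD}] \le \delta$. First I would reduce, via a ``ghost sample'' $S'$ of size $m$ drawn independently from $\mathcal{D}$, to controlling a purely empirical discrepancy: provided $m \ge c_0/\varepsilon^2$ for a suitable absolute constant $c_0$,
\[
\Pr[\mathrm{BAD}] \;\le\; 2\,\Pr\!\left[\exists R \in \mathcal{R}: \left| \tfrac{|S\cap R|}{m} - \tfrac{|S'\cap R|}{m}\right| > \tfrac{\varepsilon}{2}\right].
\]
The justification is that on $\mathrm{BAD}$ one can freeze a witnessing range $R_0$; since the empirical mass of $R_0$ on the independent $S'$ has variance at most $1/(4m)$, Chebyshev's inequality gives $\big| |S'\cap R_0|/m - \Pr_{\mathcal{D}}[R_0]\big| \le \varepsilon/2$ with probability at least $1/2$, and on that further event the triangle inequality forces the two empirical masses of $R_0$ to differ by more than $\varepsilon/2$.

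Next I would handle the right-hand side by conditioning on the multiset $T = S \cup S'$ of $2m$ points. Given $T$, the pair $(S,S')$ is just a uniformly random splitting of $T$ into two halves of size $m$, which I realize by pairing the $2m$ points arbitrarily and swapping each pair independently with probability $1/2$. For a \emph{fixed} range $R$, the quantity $|S\cap R| - |S'\cap R|$ is then a sum of at most $m$ independent mean-zero terms bounded in $[-1,1]$ (only pairs with exactly one endpoint in $R$ contribute), so Hoeffding's inequality gives
\[
\Pr\!\left[\,\big||S\cap R| - |S'\cap R|\big| > \tfrac{\varepsilon m}{2}\;\Big|\;T\right] \;\le\; 2e^{-\varepsilon^2 m / 8}.
\]
Crucially, the event above depends on $R$ only through $R \cap T$, and by the Sauer--Shelah lemma the number of distinct sets of the form $R\cap T$ is at most $\sum_{i=0}^{d} \binom{2m}{i} \le (2em/d)^d$ whenever $2m \ge d$. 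A union bound over these finitely many ``behaviors'' therefore yields $\Pr\big[\exists R: \big||S\cap R|-|S'\cap R|\big| > \varepsilon m/2 \,\big|\, T\big] \le (2em/d)^d\, 2e^{-\varepsilon^2 m/8}$, and since this holds for every $T$ it holds unconditionally.

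Finally I would put the pieces together: $\Pr[\mathrm{BAD}] \le 4 (2em/d)^d e^{-\varepsilon^2 m/8}$, so it suffices to pick $m$ with $\varepsilon^2 m / 8 \ge \ln 4 + d\ln(2em/d) + \ln(1/\delta)$. Because the term $d\ln(2em/d)$ is sublinear in $m$, a standard solving-for-$m$ step shows this is satisfied once $m = O\!\left(\frac{d}{\varepsilon^2}\ln\frac{d}{\varepsilon} + \frac{1}{\varepsilon^2}\ln\frac{1}{\delta}\right)$, which is exactly the claimed bound; and on the complementary event $\mathrm{err}_S(R) \le \varepsilon$ for all $R$, i.e. $S$ is an $\varepsilon$-sample.

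The conceptual crux is the symmetrization step: one must argue that a deviation on the \emph{true} (possibly continuous) distribution $\mathcal{D}$ surfaces, with constant probability, as a deviation between two finite samples, so that the continuous supremum over $\mathcal{R}$ can be replaced by a supremum that factors through the finite set $T$ and hence through the shatter function. This is precisely where the $1/\varepsilon^2$ dependence on the sample size first enters (through the Chebyshev estimate on the ghost sample), and it is the only step requiring genuine care; the remaining ingredients --- Hoeffding on a fixed range, Sauer--Shelah, the union bound, and the final transcendental inequality in $m$ --- are routine, with the one subtlety that one must verify the event in the reduction really does depend on $R$ only via $R\cap T$.
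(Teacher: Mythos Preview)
The paper does not prove this theorem at all: it is quoted verbatim as Theorem~14.15 from \cite{mitzenmacher2017probability} and used as a black box to conclude that $O\!\left(\tfrac{1}{\varepsilon^2}\ln\tfrac{1}{\varepsilon\delta}\right)$ samples suffice in the specific range space of closed intervals in $[0,1]$. So there is no ``paper's proof'' to compare against.

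That said, your argument is the standard and correct one. The symmetrization step, the reduction to a random pairing-and-swap over the $2m$ drawn points, Hoeffding on a fixed range, the Sauer--Shelah bound on the number of behaviours on the $2m$-point set, and the final union bound together yield exactly $\Pr[\mathrm{BAD}] \le 4(2em/d)^d e^{-\varepsilon^2 m/8}$, from which the stated sample-size bound follows. One small wording point: ``uniformly random splitting of $T$ into two halves'' and ``pair arbitrarily and swap each pair independently'' are not literally the same distribution, but either is a valid symmetrization of the i.i.d.\ draw (both leave the law of $(S,S')$ invariant by exchangeability), and your Hoeffding computation is tailored to the swapping version, so the argument goes through. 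Nothing is missing.
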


Specifically, in the range space we use where $X=[0,1]$ and $\mathcal{R}$ is the family of all closed intervals $[a,b] \subseteq [0,1]$ the VC dimension is 2, and therefore there is a sample of size $O\left(\frac{1}{\varepsilon^2}\ln \frac{1}{\varepsilon\delta}\right)$ which is an $\varepsilon$-sample with probability at least $1-\delta$. Next, we notate the separate parts of GFT that we gain from each buyer:

\begin{notation}\label{not:gft12}
    Let $M=(A,p)$ be a DSIC mechanism for the 1-seller 2-buyer setting. We denote $M$'s expected GFT for prior  $\textbf{V}$ generated by trades between the seller and buyer $b$ by
$\textup{GFT}_b(M,\textbf{V})$:
$$\textup{GFT}_b(M,\textbf{V}) = \mathbb{E}_{\textib{v}\sim\textbf{V}}\left[(v_b-v_s)\cdot \mathbb{I}\{A(\textib{v}) = b\}\right] 
$$
    Note that $\textup{GFT}(M,\textbf{V}) = \textup{GFT}_1(M,\textbf{V}) + \textup{GFT}_2(M,\textbf{V})$.
\end{notation}

The following lemma contains the essence of our ability to achieve robust approximations using $\varepsilon$-samples. On an intuitive level, it demonstrates that rounding to the worst-case scenario can only result in a loss of $\varepsilon$ GFT during each iteration. This is due to the fact that the potential points within $S$ encompass a significant portion of the distribution's weight - so the probability to `fall through the cracks' and lose GFT is negligible. The next lemma immediately implies Lemma \ref{lem:gftbound}.

\begin{lemma}\label{lem:gftbound12}
    Let $\textbf{V}=V_s\times V_1 \times V_2$ be a product distribution over %
    $[0,1]^3$, let $S_s, S_1, S_2$ be $\varepsilon$-samples for $V_s, V_1, V_2$, respectively. Let $\textbf{U} = U_s \times U_1 \times U_2$. For every compatible pair $(f_1,f_2)$ of monotone non-decreasing functions, and every buyer $b\in \{1,2\}$:
    \begin{gather*}
        \left| \textup{GFT}_b(M_{f_1,f_2},\textbf{V}) - \textup{GFT}_b(M_{f_1,f_2},\textbf{U}) \right| \leq 6\varepsilon 
    \end{gather*}
\end{lemma}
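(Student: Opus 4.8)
The plan is to fix one buyer, say $b_2$ (the argument for $b_1$ being symmetric, with the roles of rounding up and down swapped), and compare $\textup{GFT}_2(M_{f_1,f_2},\textbf{V})$ with $\textup{GFT}_2(M_{f_1,f_2},\textbf{U})$. In the mechanism $M_{f_1,f_2}$ a trade $s\to b_2$ at profile $(v_s,v_1,v_2)$ happens essentially when $v_2 \geq f_2(v_1) \geq v_s$ (ignoring the measure-zero tie adjustments, or rather folding them into the ``$\geq$'' inequalities, which is legitimate since we take expectations). So $\textup{GFT}_2(M_{f_1,f_2},\textbf{V}) = \mathbb{E}_{v_1\sim V_1}\bigl[\,\textup{GFT}(f_2(v_1), V_s\times V_2)\,\bigr]$, and likewise $\textup{GFT}_2(M_{f_1,f_2},\textbf{U}) = \mathbb{E}_{v_1\sim U_1}\bigl[\,\textup{GFT}(f_2(v_1), U_s\times U_2)\,\bigr]$. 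I would prove the $6\varepsilon$ bound by passing through an intermediate quantity and splitting the total error into three pieces of $2\varepsilon$ each: (i) replacing $V_s\times V_2$ by $U_s\times U_2$ inside $\textup{GFT}(\cdot,\cdot)$ at a fixed price; (ii) replacing the price $f_2(v_1)$ by $f_2$ of the rounded value $\lceil v_1\rceil_S$; and (iii) replacing the outer expectation over $V_1$ by the expectation over $U_1$ (of the rounded-price version).

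For piece (i): for any fixed price $p$, $\bigl|\textup{GFT}(p,V_s\times V_2) - \textup{GFT}(p,U_s\times U_2)\bigr| \le 2\varepsilon$. Writing $\textup{GFT}(p,V_s\times V_2) = \mathbb{E}[V_2\,\mathbb{I}\{V_2\ge p\}] - \mathbb{E}[V_s\,\mathbb{I}\{V_s\le p\}] - p\cdot\bigl(\Pr[V_2\ge p] - \Pr[V_s\le p]\bigr)$ — wait, more cleanly, $\textup{GFT}(p,V_s\times V_2) = \Pr[V_s\le p]\cdot\mathbb{E}[V_2\,\mathbb{I}\{V_2\ge p\}]\cdots$; in any case it decomposes into a bounded number of terms each of the form $\mathbb{E}[V\,\mathbb{I}\{V\in R\}]$ or $p\cdot\Pr[V\in R]$ for intervals $R=[p,1]$ or $[0,p]$. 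By Lemma \ref{lem:sampleexp} each such term changes by at most $\varepsilon$ when $V$ is replaced by $U(S)$, and by the $\varepsilon$-sample property the probabilities change by at most $\varepsilon$; since seller and buyer terms enter with opposite sign the cross terms telescope and one gets $\le 2\varepsilon$ overall (the constant can be tracked carefully — the point is it is an absolute constant times $\varepsilon$, and the factor $6$ in the statement gives slack).

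For pieces (ii) and (iii) the key structural input is that $f_2$ is monotone non-decreasing with range in $[0,1]$, which is the crucial hypothesis. Because of monotonicity, the price $f_2(v_1)$ as $v_1$ ranges over $[0,1]$ takes values in a monotone (hence at most countable, with controlled ``jump mass'') fashion. For a generic $v_1$, the rounded value $\lceil v_1\rceil_S$ lies in the same ``cell'' of $S$ as $v_1$, and the $\varepsilon$-sample property of $S_1$ says every such cell has $V_1$-mass at most $\varepsilon$ except where sample points accumulate; more precisely, the probability under $V_1$ that $v_1$ and $\lceil v_1\rceil_S$ straddle a point of $S$ at which $f_2$ jumps ``a lot'' is small — here one uses that $f_2$ monotone in $[0,1]$ has only $O(1/\varepsilon)$ jumps of size $\ge\varepsilon$, combined with the $\varepsilon$-sample bound on interval masses — so $\mathbb{E}_{v_1\sim V_1}|\textup{GFT}(f_2(v_1),U_s\times U_2) - \textup{GFT}(f_2(\lceil v_1\rceil_S),U_s\times U_2)|$ is $O(\varepsilon)$ because $\textup{GFT}(\cdot,U_s\times U_2)$ as a function of the price changes by at most (mass of the price interval swept) between two prices, and that mass is small for the bulk of $v_1$. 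Finally piece (iii): $\mathbb{E}_{v_1\sim V_1}[g(\lceil v_1\rceil_S)] = \mathbb{E}_{v_1\sim U_1}[g(\lceil v_1\rceil_S)]$ would hold exactly if $V_1$ and $U_1$ assigned equal mass to each cell of $S$, which they do up to $\varepsilon$; since $g$ is bounded in $[0,1]$ and $g\circ\lceil\cdot\rceil_S$ is a step function with step boundaries at points of $S$, $\varepsilon$-sample property of $S_1$ gives $\le 2\varepsilon$ error (a bounded monotone step function integrated against two measures that agree on all intervals up to $\varepsilon$). Adding the three contributions and absorbing constants into the stated bound $6\varepsilon$ completes the proof; then Lemma \ref{lem:gftbound} follows by $\textup{GFT}=\textup{GFT}_1+\textup{GFT}_2$ and the triangle inequality, giving $12\varepsilon$.

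The main obstacle I expect is piece (ii): controlling the effect of rounding $v_1$ inside a price function that may have genuine discontinuities. The argument must use monotonicity of $f_2$ to bound the total ``jump mass'' and simultaneously the $\varepsilon$-sample property of $S_1$ to ensure that the preimage (under $v_1\mapsto$ its $S$-cell) of the bad set — cells containing a large jump of $f_2$, or cells on which the swept price interval has large $U_2$/$U_s$ mass — has small $V_1$-probability. One has to be a little careful that ``large jump'' and ``large swept mass'' are measured consistently and that the number of such bad cells is $O(1/\varepsilon)$ so that their total $V_1$-mass is $O(\varepsilon\cdot 1/\varepsilon)=O(1)$ — no, that is not small; the correct accounting is that each bad cell individually contributes at most its $V_1$-mass times $1$ (the bounded GFT-difference), summed over $O(1/\varepsilon)$ bad cells whose total $V_1$-mass is itself $O(\varepsilon)$ by the $\varepsilon$-sample bound applied to the union (which is a union of $O(1/\varepsilon)$ intervals, but the relevant statement is about each interval separately plus the fact that jumps of size $\ge\varepsilon$ number at most $1/\varepsilon$ and each such cell has mass $\le\varepsilon$, giving $\le 1$ — so one instead bounds the contribution of jumps of size in $[\varepsilon,2\varepsilon), [2\varepsilon,4\varepsilon),\dots$ dyadically, each band having $O(1/(\text{band size}))$ jumps of cell-mass $\le\varepsilon$, contributing $O(\varepsilon)$ per band and $O(\varepsilon\log(1/\varepsilon))$ total, or with a cleaner argument exactly $O(\varepsilon)$). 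Getting this bookkeeping tight enough to land inside the claimed constant is the delicate part; everything else is routine application of the $\varepsilon$-sample definition and Lemma \ref{lem:sampleexp}.
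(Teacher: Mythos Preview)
Your three-piece decomposition is reasonable and you correctly flag piece (ii) as the crux, but the proposal has two genuine gaps that prevent it from closing. First, piece (iii) rests on the claim that $v_1 \mapsto \textup{GFT}\bigl(f_2(\lceil v_1\rceil_S),\, U_s\times U_2\bigr)$ is a \emph{monotone} step function; it is not, because bilateral GFT is not monotone in the price (it typically rises then falls), so the ``monotone step function against close measures'' argument you invoke does not apply. A general bounded step function with $|S|$ plateaus, integrated against two measures that agree on intervals only to within $\varepsilon$, can differ by $\Theta(|S|\cdot\varepsilon)$, which is unbounded here. Second, in piece (ii) you never actually land inside $O(\varepsilon)$: the cell-counting gives only $O(1)$ (as you yourself note), and the dyadic refinement gives only $O(\varepsilon\log(1/\varepsilon))$.

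The idea you are missing is to use monotonicity of $f_2$ not to control the \emph{jumps} of the price function but to turn the $v_1$-dependence into a single interval event. Write $\textup{GFT}_2(M_{f_1,f_2},\mathbf{V}) = \mathbb{E}_{v_s,v_1,v_2}\bigl[(v_2-v_s)\,\mathbb{I}\{v_2 \ge f_2(v_1) \ge v_s\}\bigr]$ (up to ties). For fixed $(v_s,v_2)$, the set $\{v_1 : f_2(v_1)\in[v_s,v_2]\}$ is an interval because $f_2$ is monotone; hence swapping $V_1$ for $U_1$ costs at most $|v_2-v_s|\cdot\varepsilon \le \varepsilon$ pointwise in $(v_s,v_2)$, directly by the $\varepsilon$-sample property on a single interval. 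The other two coordinate swaps ($V_2\leftrightarrow U_2$ and $V_s\leftrightarrow U_s$) are each $\le 2\varepsilon$ for fixed remaining coordinates via Lemma~\ref{lem:sampleexp} together with the interval-probability bound. This collapses your pieces (ii)--(iii) into a single hybrid step that requires no rounding of the price whatsoever. The paper's own proof organizes the comparison differently---it partitions $v_1$ into the sample cells $(s_{i-1},s_i]$ and, within each cell, bounds the buyer and seller contributions by rounding $v_1$ to a cell endpoint in the favorable direction (again leaning on monotonicity of $f_2$), then applies the $\varepsilon$-sample comparison term by term---but in either organization the monotone-preimage observation is the essential ingredient your outline lacks.
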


\begin{proof}
    We will prove this for $\textup{GFT}_2$ and the proof for $GFT_1$ is symmetric. First we must define how to round up or down to a given sample $S$:

    \begin{definition}[rounding to $S$]
        Let $S$ be a finite set, $\{0,1\} \subseteq S$, for every $x\in [0,1]$ the  rounding of $x$ up and down w.r.t. $S$ are defined as follows:
        \begin{gather*}
            \lceil x \rceil_{S} = \min_{s\in S} \{s \geq x\} \qquad \qquad
            \lfloor x \rfloor_{S} = \max_{s\in S} \{s \leq x\} 
        \end{gather*}
    \end{definition}
    
    Now let $S$ be the union of the supports of $S_s,S_1,S_2$ with $\{0,1\}$, and denote its elements by $S = \{0=s_1 < \ldots < s_k = 1 \}$. Fix $(s_{i-1},s_i]$ and let $v_1 \in (s_{i-1},s_i]$:
    \begin{gather}
        \textup{GFT}_2(M_{f_1,f_2},\textbf{V} | v_1) \stackrel{(a)}{=} \mathbb{E}_{v_2\sim V_2, v_s\sim V_s}[v_2 - v_s | v_2 \geq f_2(v_1) \geq v_s]\cdot \textup{Pr}_{v_2\sim V_2, v_s\sim V_s}[v_2 \geq f_2(v_1) \geq v_s] \nonumber \\
        \stackrel{(b)}{=} \mathbb{E}_{v_2\sim V_2}[v_2 | v_2 \geq f_2(v_1)] \cdot \textup{Pr}_{v_2\sim V_2}[v_2 \geq f_2(v_1)] - \mathbb{E}_{v_s\sim V_s}[v_s | f_2(v_1) \geq v_s] \cdot \textup{Pr}_{v_s\sim V_s}[f_2(v_1)\geq v_s] \nonumber \\ 
        \stackrel{(c)}{=} \mathbb{E}_{v_2\sim V_2}[v_2 | v_2 \geq f_2(\lceil v_1 \rceil_S)] \cdot \textup{Pr}_{v_2\sim V_2}[v_2 \geq f_2(\lceil v_1 \rceil_S)] \\
        + \mathbb{E}_{v_2\sim V_2}[v_2 | f_2(\lceil v_1 \rceil_S) > v_2 \geq f_2(v_1)]  \cdot \textup{Pr}_{v_2\sim V_2}[f_2(\lceil v_1 \rceil_S) > v_2 \geq f_2(v_1)] \\ 
        - \mathbb{E}_{v_s\sim V_s}[v_s | f_1(v_2) \geq v_s > f_2(\lfloor v_1 \rfloor_S)] \cdot \textup{Pr}_{v_s\sim V_s}[f_1(v_2) \geq v_s > f_2(\lfloor v_1 \rfloor_S)] \\
        - \mathbb{E}_{v_s\sim V_s}[v_s | f_2(\lfloor v_1 \rfloor_S) \geq v_s]  \cdot \textup{Pr}_{v_s\sim V_s}[f_2(\lfloor v_1 \rfloor_S) \geq v_s]
    \end{gather} %
    where $(a)$ follows from the definition of GFT, $(b)$ from independence of the random variables, and $(c)$ is simply splitting the equation to its different parts.

    Consider the elements $(2)$ and $(3)$ in the equation above. In both of them, we have the expectation that is $\leq 1$, times the probability that a variable is between the value of $f_2(v_1)$ and a rounding of it (either $f_2(\lceil v_1 \rceil_S$ or $f_2(\lfloor v_1 \rfloor_S$). However, note that one side of that inequality is strict in both of them, and it is specifically to the side of the rounding. Therefore, if $v_1 \in S$, the probability of this is exactly 0. Otherwise, we have the probability that the variable $v_1$ is between two elements in $S$, but is not itself in $S$. Since $S_1$ is an $\varepsilon$-sample, the probability of that is at most $\varepsilon$.

    We calculate the upper bound on $\textup{GFT}_2$ of $M_{f_1,f_2}$ on $\textbf{V}$, but only on trades when %
    $v_1 \in [s_{i},s_{i+1})$:
    \begin{gather*}
        \textup{GFT}_2(M_{f_1,f_2},\textbf{V} | v_1\in (s_{i-1}, s_i]) \\
        \stackrel{(a)}{\leq} \textup{Pr}_{v_1\in V_1}[v_1\in (s_{i-1}, s_i]] \cdot \Big(\mathbb{E}_{v_2\sim V_2}[v_2 | v_2 \geq f_2(\lfloor v_1 \rfloor_S)] \cdot \textup{Pr}_{v_2\sim V_2}[v_2 \geq f_2(\lfloor v_1 \rfloor_S)] \\
        - \mathbb{E}_{v_s\sim V_s}[v_s | f_2(\lfloor v_1 \rfloor_S) \geq v_s]  \cdot \textup{Pr}_{v_s\sim V_s}[f_2(\lfloor v_1 \rfloor_S) \geq v_s]\Big) \\
        \stackrel{(b)}{\leq} \left(\textup{Pr}_{u_1\in U_1}[u_1 = s_i]+\varepsilon\right) \cdot \Bigl((\mathbb{E}_{u_2\sim U_2}[u_2 | u_2 \geq f_2(s_i)]+\varepsilon) \cdot (\textup{Pr}_{u_2\sim U_2}[u_2 \geq f_2(s_i)]+\varepsilon) \\
        - (\mathbb{E}_{u_s\sim U_s}[u_s | f_2(s_i) \geq v_s]-\varepsilon)  \cdot (\textup{Pr}_{u_s\sim U_s}[f_2(s_i) \geq v_s]-\varepsilon) \Bigr) 
       \\
        \stackrel{(c)}{\leq} \textup{Pr}_{u_1\in U_1}[u_1 = s_i] \cdot \Big(\mathbb{E}_{u_2\sim U_2}[u_2 | u_2 \geq f_2(s_i)] \cdot (\textup{Pr}_{u_2\sim U_2}[u_2 \geq f_2(s_i)] \\
        - \mathbb{E}_{u_s\sim U_s}[u_s | f_2(s_i) \geq v_s]  \cdot \textup{Pr}_{u_s\sim U_s}[f_2(s_i) \geq v_s] \Big) + 6\varepsilon \\
        \stackrel{(d)}{=} \textup{GFT}_2(M_{f_1,f_2},\textbf{U} | u_1 = s_i) + 6\varepsilon
    \end{gather*}
    where the inequality $(a)$ only increases the GFT by lowering the price for the buyer to $f_2(\lfloor v_1 \rfloor_S)$ (this includes both $(1)$ and $(2)$ above) and disregarding the $-\varepsilon$ for $(3)$; $(b)$ follows from $\varepsilon$-sample properties, Lemma \ref{lem:sampleexp}, and taking the most extreme options for the signs of $\varepsilon$; $(c)$ is just simplification of the elements; and $(d)$ follows from the definition of GFT.

    Similarly, we calculate the lower bound on $\textup{GFT}_2$ of $M_{f_1,f_2}$ on $\textbf{V}$, but only on trades when  $v_1 \in [s_{i},s_{i+1})$:
    \begin{gather*}
        \textup{GFT}_2(M_{f_1,f_2},\textbf{V} | v_1\in (s_{i-1}, s_i])  \\
        \stackrel{(a)}{\geq} \textup{Pr}_{v_1\in V_1}[v_1\in (s_{i-1}, s_i]] \cdot \big(\mathbb{E}_{v_2\sim V_2}[v_2 | v_2 \geq f_2(\lceil v_1 \rceil_S)] \cdot \textup{Pr}_{v_2\sim V_2}[v_2 \geq f_2(\lceil v_1 \rceil_S)] \\ \displaybreak
        - \mathbb{E}_{v_s\sim V_s}[v_s | f_2(\lceil v_1 \rceil_S) \geq v_s]  \cdot \textup{Pr}_{v_s\sim V_s}[f_2(\lceil v_1 \rceil_S) \geq v_s] \big) \\
        \stackrel{(b)}{\geq} (\textup{Pr}_{u_1\in U_1}[u_1 = s_i]-\varepsilon) \cdot \big((\mathbb{E}_{u_2\sim U_2}[u_2 | u_2 \geq f_2(s_i)]-\varepsilon) \cdot (\textup{Pr}_{u_2\sim U_2}[u_2 \geq f_2(s_i)]-\varepsilon) \\
        - (\mathbb{E}_{u_s\sim U_s}[u_s | f_2(s_i) \geq v_s]+\varepsilon)  \cdot (\textup{Pr}_{u_s\sim U_s}[f_2(s_i) \geq v_s]+\varepsilon) \big) \\
        \stackrel{(c)}{\geq} \textup{Pr}_{u_1\in U_1}[u_1 = s_i] \cdot \big(\mathbb{E}_{u_2\sim U_2}[u_2 | u_2 \geq f_2(s_i)] \cdot (\textup{Pr}_{u_2\sim U_2}[u_2 \geq f_2(s_i)] \\
        - \mathbb{E}_{u_s\sim U_s}[u_s | f_2(s_i) \geq v_s]  \cdot \textup{Pr}_{u_s\sim U_s}[f_2(s_i) \geq v_s] \big) - 6\varepsilon \\
        \stackrel{(d)}{=} \textup{GFT}_2(M_{f_1,f_2},\textbf{U} | u_1 = s_i) - 6\varepsilon
    \end{gather*}
    where for $(a)$ we just removed $(2)$ because we are bounding from below and increased to price to $f_2(\lceil v_1 \rceil_S)$ so we include $(3)$ and $(4)$; and $(b),(c),(d)$ are for the same reasons as above.

    Summing over all intervals $(s_{i-1},s_i]$ and taking the integral over $V_1$, we get the desired bound for $\textup{GFT}_2$. Similar calculations hold for getting the desired bound for $\textup{GFT}_1$.
\end{proof}

\begin{proof}[Proof of Lemma \ref{lem:gftbound}]
    Immediate by Lemma \ref{lem:gftbound12}.
\end{proof}

Using the result of the lemma and the algorithm from the previous section, we claim the following:

\begin{lemma}\label{lem:algbound}
    Let $\textbf{V}=V_s\times V_1 \times V_2$ be a product distribution over %
    $[0,1]^3$, let $S_s, S_1, S_2$ be $\varepsilon$-samples for $V_s, V_1, V_2$, respectively. Let $\textbf{U} = U_s \times U_1 \times U_2$. Let $M_U^*$ be a GFT-optimal mechanism on $\textbf{U}$. For every \nice mechanism $M$:
     \begin{gather*}
         \textup{GFT}(M,\textbf{V}) \leq \textup{GFT}(M_U^*,\textbf{V}) + 24\varepsilon
     \end{gather*}
\end{lemma}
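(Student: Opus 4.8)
The plan is to chain four inequalities: pass from $M$ to a monotone mechanism via Theorem \ref{thm:bestmech}, move from $\textbf{V}$ to $\textbf{U}$ using Lemma \ref{lem:gftbound}, use optimality of $M_U^*$ on $\textbf{U}$, and then move back from $\textbf{U}$ to $\textbf{V}$ with a second application of Lemma \ref{lem:gftbound}, paying its $12\varepsilon$ error exactly twice.

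Concretely, first I would fix an arbitrary \nice mechanism $M$ and apply Theorem \ref{thm:bestmech} to obtain a compatible and tight pair $(f_1^*,f_2^*)$ of monotone non-decreasing functions with $\textup{GFT}(M_{f_1^*,f_2^*},\textbf{V}) \ge \textup{GFT}(M,\textbf{V})$. Since this pair is monotone non-decreasing and compatible, Lemma \ref{lem:gftbound} applies and gives $\textup{GFT}(M_{f_1^*,f_2^*},\textbf{V}) \le \textup{GFT}(M_{f_1^*,f_2^*},\textbf{U}) + 12\varepsilon$. By Lemma \ref{lem:compmech} the mechanism $M_{f_1^*,f_2^*}$ is \nice, so by the GFT-optimality of $M_U^*$ on $\textbf{U}$ we have $\textup{GFT}(M_{f_1^*,f_2^*},\textbf{U}) \le \textup{GFT}(M_U^*,\textbf{U})$.

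It then remains to bound $\textup{GFT}(M_U^*,\textbf{U})$ by $\textup{GFT}(M_U^*,\textbf{V}) + 12\varepsilon$, i.e.\ to invoke Lemma \ref{lem:gftbound} a second time, this time on $M_U^*$ itself. The one point needing care is that $M_U^*$ is produced by Algorithm \ref{alg:maxgft} as a pair of functions on the finite grid $S$, and to evaluate $\textup{GFT}(M_U^*,\textbf{V})$ we extend it to all of $[0,1]^3$ via the rounding convention described before Algorithm \ref{alg:learn} (round $v_1$ down and $v_2$ up to $S$). Composing the monotone functions $f_1,f_2$ with these monotone roundings keeps the extended pair monotone non-decreasing, and, as already observed there, compatible; hence Lemma \ref{lem:gftbound} applies and yields $|\textup{GFT}(M_U^*,\textbf{V}) - \textup{GFT}(M_U^*,\textbf{U})| \le 12\varepsilon$. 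Combining the four inequalities in order gives
\begin{align*}
\textup{GFT}(M,\textbf{V}) &\le \textup{GFT}(M_{f_1^*,f_2^*},\textbf{V}) \le \textup{GFT}(M_{f_1^*,f_2^*},\textbf{U}) + 12\varepsilon \\
&\le \textup{GFT}(M_U^*,\textbf{U}) + 12\varepsilon \le \textup{GFT}(M_U^*,\textbf{V}) + 24\varepsilon,
\end{align*}
which is the claim. The only substantive ingredient is Lemma \ref{lem:gftbound}, which is already proved; the rest is bookkeeping, and the sole mild obstacle is making sure that both mechanisms fed into Lemma \ref{lem:gftbound} are genuinely represented by monotone compatible pairs — immediate for $M_{f_1^*,f_2^*}$ from Theorem \ref{thm:bestmech}, and immediate for $M_U^*$ from the structure of Algorithm \ref{alg:maxgft} together with the monotonicity-preserving rounding extension (one may alternatively replace $M_U^*$ up front by the monotone mechanism guaranteed by Theorem \ref{thm:bestmech} applied to $\textbf{U}$, which does not decrease its GFT on $\textbf{U}$ and is what the algorithm outputs anyway).
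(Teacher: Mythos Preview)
Your proof is correct and follows essentially the same chain-of-inequalities approach as the paper: pass from $\textbf{V}$ to $\textbf{U}$ with one application of Lemma~\ref{lem:gftbound}, use optimality of $M_U^*$ on $\textbf{U}$, and pass back with a second application, paying $12\varepsilon$ each time. The only difference is cosmetic but arguably an improvement: the paper upgrades $M$ to $M_{f_1,f_2}$ via Lemma~\ref{lem:mfmax} and then directly invokes Lemma~\ref{lem:gftbound} on the associated pair $(f_1,f_2)$, whereas you first invoke Theorem~\ref{thm:bestmech} to obtain a \emph{monotone} pair $(f_1^*,f_2^*)$ before applying Lemma~\ref{lem:gftbound}. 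Since the hypothesis of Lemma~\ref{lem:gftbound} explicitly requires monotone functions, your extra step makes the verification of that hypothesis transparent, while the paper's version leaves it implicit.
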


\begin{proof}
    Since $M$ and $M_U^*$ are \nice, by Lemma \ref{lem:charfunc} we know that each has a  pair of functions associated with it. Denote the associated pair of $M$ by $(f_1,f_2)$ and the associated pair of $M_U^*$ by $(f_1^U,f_2^U)$. Note that since $M_U^*$ is optimal then $\textup{GFT}(M_U^*,\textbf{V}) = \textup{GFT}(M_{f_1^U,f_2^U},\textbf{V})$. Then:
    \begin{gather*}
        \textup{GFT}(M,\textbf{V}) - \textup{GFT}(M_U^*,\textbf{V}) 
        \stackrel{(a)}{\leq} \textup{GFT}(M_{f_1,f_2},\textbf{V}) - \textup{GFT}(M_{f_1^U,f_2^U},\textbf{V}) \\
        \stackrel{(b)}{=} \textup{GFT}(M_{f_1,f_2},\textbf{V}) - \textup{GFT}(M_{f_1,f_2},\textbf{U}) \\
        + \textup{GFT}(M_{f_1,f_2},\textbf{U}) - \textup{GFT}(M_{f_1^U,f_2^U},\textbf{U}) \\
        +\textup{GFT}(M_{f_1^U,f_2^U},\textbf{U}) - \textup{GFT}(M_{f_1^U,f_2^U},\textbf{V}) \\
        \stackrel{(c)}{\leq} \textup{GFT}(M_{f_1,f_2},\textbf{V}) - \textup{GFT}(M_{f_1,f_2},\textbf{U}) 
        + \textup{GFT}(M_{f_1^U,f_2^U},\textbf{U}) - \textup{GFT}(M_{f_1^U,f_2^U},\textbf{V}) \stackrel{(d)}{\leq} 24\varepsilon
     \end{gather*}
     where $(a)$ is by Lemma \ref{lem:mfmax} and from the fact that 
     $\textup{GFT}(M_U^*,\textbf{V}) = \textup{GFT}(M_{f_1^U,f_2^U},\textbf{V})$,
     $(b)$ we added and subtracted the same elements, $(c)$ is due to Theorem \ref{thm:algmax} that $\textup{GFT}(M_{f_1,f_2}, \textbf{U})$ $ \leq \textup{GFT}(M_{f_1^U,f_2^U},\textbf{U})$ because $M_{f_1^U,f_2^U}$ maximizes GFT on $\textbf{U}$, and $(d)$ is from using the bound by Lemma \ref{lem:gftbound} on $f_1,f_2$ and $f_1^U,f_2^U$.
\end{proof}

\begin{proof}[Proof of Theorem \ref{thm:learn}] %
    The algorithm first picks samples     
    $S_s, S_1, S_2$, each of size $O\left(\frac{1}{\varepsilon^2}\ln \frac{1}{\varepsilon\delta}\right)$ from 
    $V_s, V_1, V_2$, respectively.
    Let $M$ be a GFT-optimal mechanism for the empirical distribution $\textbf{U} = U_s \times U_1 \times U_2$ (computed in poly-time using Theorem \ref{thm:algmax}).
    By Theorem \ref{thm:esample}, with probability at least $1-\delta$, the multisets $S_s, S_1, S_2$ are $\varepsilon$-samples of the corresponding distributions.
    Thus, if $S_s, S_1, S_2$ are indeed $\varepsilon$-samples then by Lemma \ref{lem:algbound} the expected GFT of $M$ on $\textib{V}$ is at most $24\varepsilon$ less than the expected GFT of any other mechanism on $\textib{V}$ -- specifically of $\OPTniceV$.
\end{proof}

\end{document}